\newcommand{\bv}{\begin{array}}
\newcommand{\E}{\mathbb{E}}
\newtheorem{example}{Example}
\newtheorem{remark}{Remark}
\newtheorem{theorem}{Theorem}
\newtheorem{lemma}{Lemma}
\newtheorem{proposition}{Proposition}
\newtheorem{definition}{Definition}
\newtheorem{corollary}{Corollary}
\numberwithin{theorem}{section}
\numberwithin{lemma}{section}
\numberwithin{proposition}{section}
\def\revision{}
  \newenvironment{revtwo}{\color{black}}{}
\newcommand{\pr}[1]{\mathbb{P}\left( #1 \right)}
\newcommand{\Perp}{\perp \! \! \! \perp}
\begin{document}
\graphicspath{{./figures/}}

\title{A systematic investigation of classical causal inference strategies under mis-specification due to network interference}

\author{Vishesh Karwa \and Edoardo M. Airoldi}
{\makeatletter
\renewcommand*{\@makefnmark}{}
\footnotetext{Vishesh Karwa is an Assistant Professor in the Department of Statistical Science, Fox Business School at Temple University. 
Edoardo M.~Airoldi is the Millard E. Gladfelter Professor of Statistics \& Data Science and the Director of the Data Science Center at the Fox Business School, Temple University. 
This work was partially supported 
 by the National Science Foundation under grants 
  CAREER IIS-1149662 and IIS-1409177,
 by the Office of Naval Research under grants 
  YIP N00014-14-1-0485 and N00014-17-1-2131, to Harvard University,
 and 
 by a Shutzer Fellowship and a Sloan Research Fellowship to EMA.}
\makeatother}

\date{}

\maketitle
\thispagestyle{empty}
\begin{abstract}

	We systematically investigate issues due to mis-specification that arise in estimating causal effects when (treatment) interference is informed by a network available pre-intervention, i.e., in situations where the outcome of a unit may depend on the treatment assigned to other units. We develop theory for several forms of interference through the concept of “exposure neighborhood”, and develop the corresponding semi-parametric representation for potential outcomes as a function of the exposure neighborhood. Using this representation, we extend the definition of two popular classes of causal estimands considered in the literature, \emph{marginal} and \emph{average} causal effects, to the case of network interference. We then turn to characterizing the bias and variance one incurs when combining classical randomization strategies (namely, Bernoulli, Completely Randomized, and Cluster Randomized designs) and estimators (namely, difference-in-means and Horvitz-Thompson) used to estimate average treatment effect and on the total treatment effect, under misspecification due to interference. We illustrate how difference-in-means estimators can have arbitrarily large bias when estimating average causal effects, depending on the form and strength of interference, which is unknown at design stage. Horvitz-Thompson estimators are unbiased when the correct weights are specified. Here, we derive the Horvitz-Thompson weights for unbiased estimation of different estimands, and illustrate how they depend on the design, the form of interference, which is unknown at design stage, and the estimand. More importantly, we show that Horvitz-Thompson estimators are in-admissible for a large class of randomization strategies, in the presence of interference. We develop new model-assisted and model-dependent strategies to improve Horvitz-Thompson estimators, and we develop new randomization strategies for estimating the average treatment effect and total treatment effect.

	\noindent\textbf{Keywords:} Causal inference; potential outcomes; average treatment effect; total treatment effect; interference; network interference; statistical network analysis.
\end{abstract}

\newpage
\tableofcontents


\newpage

\pagestyle{fancy}
\setcounter{page}{1}
\section{Introduction} 
\label{sec:Intro}

The estimation of causal effects is a fundamental goal of many scientific studies. The framework of Potential Outcomes \citep{splawa1990application, rubin1974estimating, holland1986statistics} is a popular approach to formalize the problem of estimating causal effects of a treatment on an outcome, from a finite population of $n$ units. For instance, one can use the potential outcomes framework to formally define causal effects of interest called \emph{estimands} or \emph{inferential targets}, construct estimators that have desirable properties, such as, unbiasedness with respect to the randomization distribution, and formulate the assumptions under which the estimands and the estimators are well defined and causal conclusions hold.

An important assumption made in the classical potential outcomes framework is the \emph{no treatment-interference} (or simply no interference\footnote{There can be other forms of interference; for e.g. the outcome of a unit may depend on the outcome of others. We are concerned only with treatment interference}) assumption which can be stated as follows: The outcome of any unit depends only on its own treatment. In particular, the outcome of a unit does not depend on the treatment assigned to (or selected by) other units in the finite population of $n$ units. This assumption is implied by the so called Stable Unit Treatment Value Assumption or SUTVA as formulated in \cite{rubin1980randomization}, see also \cite{rubin1986comment}.
 It is clear and well known (see for e.g. Section 3 in \cite{rubin1990application}) that the classical framework of potential outcomes needs to be extended when estimating causal effects under interference. 
 
 When extending the classical potential outcomes framework and relaxing the assumption of no treatment-interference, the key natural question that arises is the following: What should be the form of interference? It is straightforward to specify what we mean by no treatment-interference, but the existence of treatment interference is not a concrete modeling assumption - there are many different ways to specify the exact form of interference and one needs to choose from various models of interference. 
 
 Once a model for interference is fixed, the next steps are to define causal estimands and develop designs and corresponding estimators. The classical versions of average treatment effects are no longer well defined when there is interference between units. This is due to the fact that the space of potential outcomes for each unit changes with the form of interference. In particular, the number of potential outcomes for each unit becomes a function of the form of interference. For example, consider a binary treatment $T$. Under the no treatment-interference assumption, each unit $i$ has two potential outcomes  $Y_i(0)$ and $Y_i(1)$. The average causal effect is defined as the average of differences between these two potential outcomes. However, when there is arbitrary treatment-interference, the number of potential outcomes for each unit can be as large as $2^n$ and $Y_i(0)$ and $Y_i(1)$ are not well defined. There are many non-equivalent ways to define an estimand under interference and the choice depends on the scientific question that one is interested in answering. Once a choice has been made regarding the nature of interference and an estimand has been proposed, the next step is to develop (idealized) experimental designs along with corresponding estimators with good properties, such as unbiasedness with respect to the design, that allow us to estimate causal estimands.  

 \paragraph{Related work} Relaxing the assumption of no treatment-interference has been the subject of many works, see \cite{halloran2016dependent} for a recent review. A classical line of work proceeds by limiting the interference to non-overlapping groups and assuming that there is no interference between groups. This setting is often referred to as \emph{partial interference}  \citep[e.g., see][]{sobel2006randomized, hudgens2012toward, tchetgen2012causal, liu2014large,kang2016peer,liu2016inverse,rigdon2015exact,basse2017analyzing,forastiere2016identification, loh2018randomization}. Various types of estimands have been defined under  partial interference. For instance, \cite{sobel2006randomized} defined estimands that naturally arise in housing mobility studies and noted that under partial interference the classical estimators may be biased.  \cite{hudgens2012toward} considered potential outcomes marginalized over a randomization distribution, and use these marginal potential outcomes to define estimands. They considered two-stage designs and developed unbiased estimators for these marginal estimands. A different line of work has focused on designing  experiments that eliminate or reduce partial interference, so that estimation can be carried out by ignoring interference \citep[e.g., see][]{david1996designs}. In the modern setting, the assumption of partial interference has been relaxed by several authors to allow for arbitrary interference, or interference encoded by a network, see \cite{bowers2012reasoning,manski2013identification,goldsmith2013social,toulis2013estimation,ugander2013graph,aronow2013estimating,basse2015optimal,forastiere2016identification,halloran2016dependent, choi2017estimation,athey2017exact}. \cite{manski2013identification} considered the problem of whether causal effects are identifiable in presence of arbitrary interference. \cite{aronow2013estimating} proposed Horvitz-Thompson estimators for estimating causal effects when there is arbitrary interference. \cite{ugander2013graph} and \cite{eckles2014design} consider a cluster randomization design to reduce bias in estimating  a specific estimand (i.e., total treatment effect). More recently, \cite{savje2017average} study the large sample properties of estimating treatment effects, when the interference structure is unknown. They show (somewhat surprisingly) that in a large sample setup, the Horvitz-Thompson and Hajek estimators can be used to consistently estimate the \emph{expected average treatment effect}, even if the structure of interference is incorrect. \cite{jagadeesan2017designs} have proposed new designs for estimating the direct effect under interference. \cite{ogburn2014causal} approach the problem of interference by using causal diagrams, and they present various causal Bayesian networks under different types of interference. \cite{ogburn2017vaccines} use causal diagrams to develop GLM type estimators for contagion. Finally, \cite{li2018randomization} study peer effects using randomization based inference.

In this paper, we initiate a systematic investigation of issues that arise in definition and unbiased estimation of causal effects under arbitrary interference and develop possible solutions. Some of the key goals of our work are (a) to develop models of interference, (b) organize and place different estimands and estimators that have appeared in the literature under a common framework, (c) to clarify the issues present in existing definitions and estimators of causal effects and (d) study designs and unbiased estimation strategies under interference.

\subsection{Summary of Contributions and Organization}
Section \ref{sec:mainresults} provides an overview of the key results of the paper. Here we present a informal summary of contributions.  

\emph{Models for Potential Outcomes under Interference:} We begin by revisiting the framework of potential outcomes under arbitrary interference in Section \ref{sec:POandNetwork}. Using the concept of \emph{exposure neighborhood}, in Section \ref{sec:POmodel}, we develop non-parametric models potential outcomes to formalize the nature and form of interference. The exposure neighborhood allows one to explicitly model the form of interference, whereas the \emph{structural models} formulate assumptions on the structure of potential outcomes under the assumed form of interference. 

\emph{Choice of estimands:} Unlike the classical no-interference setting, there are several non-equivalent ways of defining causal estimands under interference. In Section \ref{sec:estimands}, we consider two different (overlapping) classes of estimands for formally defining causal effects - \emph{marginal causal effects} (in the spirit of \cite{hudgens2012toward}) and \emph{average causal effects}. Marginal causal effects are defined as contrasts between expected values of potential outcomes under a fixed randomization scheme also called as \emph{policy}, where as the average causal effects are defined as contrasts between averages of fixed potential outcomes. 
These classes include several estimands that have appeared in the literature as special cases. 

\emph{Bias due to interference in difference-in-means estimators:} In Section \ref{sec:Bias}, we address some folklore about estimation strategies. In many cases, it is common to use a design along with classical difference-in-means like estimators to estimate an average causal effect, even when there is interference, with the hope that there might be little or no bias. In some settings, however, the definition of causal effect that is being estimated (the estimand) is not well specified. Our analysis makes it clear that certain classic versions of causal effects are not well defined when there is interference. In the cases where the estimand is well defined, we show that difference-in-means estimators can be biased for many types of estimands. We characterize the nature and sources of bias in estimating a large class of estimands. Our results also illustrate settings where simple estimators can yield  little or no bias. For instance, when estimating the so called \emph{marginal causal effects}, the difference-in-means estimators are unbiased. In general, the unbiasedness of the difference-of-means estimators depends on the nature and structure of interference, which we characterize in Section \ref{sec:biasPOModels}.

\emph{Liner Unbiased Estimation:} We then consider the problem of unbiased estimation of causal effects with commonly used estimation \emph{strategies}, in Section \ref{sec:lue}.  We consider the Bernoulli, Completely Randomized and Cluster Randomized Designs and focus on the problem of unbiased estimation.  

A popular estimation strategy is to use Horvitz-Thompson (HT) like estimators, which is the subject of Section \ref{sec:lue}. For instance, \cite{aronow2013estimating} proposed using HT estimators for particular estimands (i.e., contrasts between potential outcomes corresponding to two different treatment assignment vectors).
We consider the class of all linear weighted unbiased estimators and show that HT estimators can be used to obtain unbiased estimates for any estimand and design, as long as the {\it correct} weights are used and some regularity conditions on the design hold. However, we note that the weights depend on the design, the structure of interference (as specified by an interference model) and the estimand. We explicitly derive the weights of HT estimators for commonly used designs and estimands. 
We also show that the correct weights that endow HT estimators with good properties need not be unique. The question of optimality (e.g., minimum variance, unbiased) of HT weights is difficult, and has been recently addressed, in part \citep{sussman2017elements}.

We prove that Horvitz-Thompson estimators are inadmissible for estimating a large class of estimands and a large class of designs. The HT estimator is one of many estimators in the class of linear weighted unbiased estimators. Using ideas from survey sampling literature, we consider two strategies to improve upon the HT estimator. The non-parametric linear representation of potential outcomes we develop lends itself naturally to develop improved estimators either in a model dependent or a model assisted framework \citep[\`a la][]{basse2015optimal}. Finally, in section \ref{sec:newdesigns}, we explore new designs to estimate two commonly used estimands: average treatment effect and total treatment effect, defined in Section \ref{sec:mainresults}.  A key observation is that the optimal design for estimation may depend on the estimand.


\section{Overview of the main results: Modes of failures and solutions}
\label{sec:mainresults}

Consider a finite population of $n$ units indexed by $\{1, \ldots, n\}$. Let 
$\textbf{z} = (z_1, \ldots, z_n)$ denote a vector of binary treatment assignments where each $z_i \in \{0,1\}$. Let $(Y_i(\textbf{z}))_{i=1}^n$ denote the vector of potential outcomes when the finite population of $n$ units gets assigned the treatment vector $\textbf{z}$. For each unit $i$, $Y_i(\textbf{z})$ is a function of $\textbf{z}$. The no treatment-interference assumption ensures that for each $i = 1, \ldots, n$, we can write the potential outcomes function as,
\begin{align}
\label{eq:intro:no-interference}
Y_i(\textbf{z}) = Y_i(z_i).
\end{align}
Thus, under the no treatment-interference assumption the total number of potential outcomes for each unit $i$ is $2$. However, when there is interference, we can write
\begin{align}
Y_i(\textbf{z}) = Y_i(z_i, \textbf{z}_{-i}).
\end{align}
where $\textbf{z}_{-i}$ is the vector of treatment assignments of all units except $i$.

\paragraph{Explosion of Potential Outcomes} When there is arbitrary interference, the number of potential outcomes for each unit may explode, rendering causal inference impossible without modeling potential outcomes. In general, the total number of potential outcomes for each unit $i$ can be as high as $2^n$.
\begin{proposition}
	\label{intro:impossible}
	Without any further assumptions on the function $Y_i(\textbf{z})$, causal inference is impossible.
\end{proposition}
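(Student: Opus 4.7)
}
My strategy is a non-identification argument. I will show that the observed data places essentially no constraint on the array of potential outcomes, so any causal estimand defined as a contrast between potential outcomes under distinct treatment vectors can take arbitrary values consistent with what is observed. The first step is to fix a realized assignment $\vec{z}^\ast \in \{0,1\}^n$ — whether drawn from a randomized design or selected by an investigator — and to note that the observed sample consists only of the pair $(\vec{z}^\ast, (Y_i(\vec{z}^\ast))_{i=1}^n)$. Thus for each unit $i$ precisely one of its $2^n$ potential outcomes is ever revealed, and the remaining $2^n-1$ are unconstrained in the absence of modeling assumptions on the function $Y_i(\vec{z})$.

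The second step is to exhibit two distinct schedules of potential outcomes $\{Y_i(\vec{z})\}$ and $\{Y_i'(\vec{z})\}$ that coincide on $\vec{z}^\ast$ but differ on every other $\vec{z}$. Since the proposition assumes no restriction on the functional form of $Y_i(\vec{z})$, such an alternative is trivially built by setting $Y_i'(\vec{z}^\ast) := Y_i(\vec{z}^\ast)$ and $Y_i'(\vec{z}) := c_i(\vec{z})$ for arbitrary real constants $c_i(\vec{z})$ at all other arguments. Both schedules generate identical observed data under assignment $\vec{z}^\ast$ and are therefore observationally indistinguishable. I will instantiate the consequence using a canonical estimand such as the average treatment effect $\tau = \frac{1}{n}\sum_{i=1}^n \bigl(Y_i(\vec{1}) - Y_i(\vec{0})\bigr)$: for $n \geq 2$, at least one of $\vec{0}$ or $\vec{1}$ differs from $\vec{z}^\ast$, and by selecting $Y_i'$ at the unobserved arguments one can make the resulting $\tau$ equal any prescribed real number while leaving every observable quantity untouched. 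No estimator depending solely on the observed data can therefore pin down $\tau$.

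The main obstacle is not mathematical but rhetorical: deciding on a precise formalization of ``causal inference is impossible.'' I plan to phrase the conclusion as the non-identifiability of any estimand whose value depends on potential outcomes outside the single observable slice $\{Y_i(\vec{z}^\ast)\}_{i=1}^n$, since this is the cleanest statement and it subsumes the specific estimands introduced in Section~\ref{sec:estimands}. The corresponding proof reduces to the two-schedule construction above and requires no computation.
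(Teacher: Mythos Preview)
Your proposal is correct and follows essentially the same approach as the paper: both argue that only the single column of the Table of Science corresponding to the realized assignment is observed, so contrasts between distinct columns are unidentified. The paper's proof is a two-sentence sketch of this idea, while yours spells out the explicit two-schedule construction; the underlying argument is the same.
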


Proposition \ref{intro:impossible} is simple, but has far reaching consequences. The key consequence is that under arbitrary treatment interference, one must model the potential outcomes, even under randomization inference. Indeed, the no-interference assumption is also a modeling assumption. Thus, the question becomes which model to use. We develop models for potential outcomes by specifying three components: An \emph{interference neighborhood}, an \emph{exposure function} and \emph{structural} assumptions. Modeling potential outcomes allows one to reduce the number of potential outcomes per unit to a more manageable size. The total number of potential outcomes per unit is directly related to the interference neighborhood and an exposure model. 

Table \ref{tab:countingPO} gives three examples of exposure models and the number of potential outcomes per unit. We refer the reader to see Section \ref{sec:POmodel} for precise definitions of these exposure models.
Under the simplest exposure model, called the \emph{binary exposure}, each unit has $4$ potential outcomes - this is twice as many when compared to the case of no-interference. On the other hand, for \emph{symmetric exposure}, the number of potential outcomes for each unit grows linearly with the size of the exposure neighborhood. Finally, for a general exposure model, the number of potential outcomes for each unit grows exponentially with the size of the exposure neighborhood. 

\begin{table}[h]
	\centering
	\begin{tabular}{|c|c|c|}
		\hline  
		Binary Exposure 	& Symmetric Exposure 	& General Exposure\\
		\hline
		$4$ 		& $ 2 \cdot d_i$ 				& $ 2^{d_i+1}$  \\ 
		\hline 
	\end{tabular} 
	\caption{Number of potential outcomes per unit for different exposure models. Here $d_i$ is the size of the exposure neighborhood, i.e. the number of units whose treatment status effect the outcome of unit $i$.}
	\label{tab:countingPO}
\end{table}

\paragraph{Non-parametric Decomposition of Potential Outcomes} Under arbitrary interference, we develop a non-parametric linear decomposition of the potential outcomes:

\begin{proposition}
	\label{intro:decomposition}
	Let $\textbf{z}_{-i}$ denote the vector of treatment assignments assigned to all but unit $i$.  There exist functions $A_i(\cdot), B_i(\cdot), C_i(\cdot)$ and $f$ where if $e_i = f(\textbf{z}_{-i})$, then every potential outcome function for unit $i$ can be decomposed as 
	$$Y_i(\textbf{z}) = A_i(z_i) + B_i(e_i) + z_iC_i(e_i).$$
\end{proposition}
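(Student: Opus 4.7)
The key observation is that $z_i$ is binary, so for any fixed $\textbf{z}_{-i}$ the function $z_i \mapsto Y_i(z_i, \textbf{z}_{-i})$ takes only two values. This lets us linearize in $z_i$ exactly via the identity
\[
Y_i(z_i, \textbf{z}_{-i}) \;=\; (1-z_i)\,Y_i(0, \textbf{z}_{-i}) \;+\; z_i\,Y_i(1, \textbf{z}_{-i}) \;=\; Y_i(0, \textbf{z}_{-i}) \;+\; z_i\bigl[\,Y_i(1, \textbf{z}_{-i}) - Y_i(0, \textbf{z}_{-i})\,\bigr].
\]
So the plan is simply to exhibit witnesses for the four objects in the statement. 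I would take $f$ to be the identity map on $\{0,1\}^{n-1}$, so that $e_i = \textbf{z}_{-i}$, and then set
\[
A_i(z_i) \equiv 0, \qquad B_i(e_i) \;=\; Y_i(0, e_i), \qquad C_i(e_i) \;=\; Y_i(1, e_i) - Y_i(0, e_i).
\]
Substituting back and checking the two cases $z_i=0$ and $z_i=1$ recovers $Y_i(\textbf{z})$ on the nose, which establishes the claimed decomposition.

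I would make two short remarks after this argument. First, the decomposition is not unique: any constant can be shifted between $A_i$ and $B_i$, and more importantly if one has a coarser exposure map $e_i = f(\textbf{z}_{-i})$ (e.g.\ one arising from a nontrivial interference/exposure neighborhood as developed in Section \ref{sec:POmodel}) such that $Y_i(z_i, \textbf{z}_{-i})$ only depends on $\textbf{z}_{-i}$ through $f(\textbf{z}_{-i})$, the same construction with $B_i(e_i) = Y_i(0, \cdot)$ and $C_i(e_i) = Y_i(1, \cdot) - Y_i(0, \cdot)$ (now viewed as functions of $e_i$) still works. Second, this also shows why $A_i$ is allowed to be a nontrivial function of $z_i$ even though it is not needed here: one may absorb any baseline offset into $A_i$ without affecting the equality.

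There is really no main obstacle: the proposition is essentially a restatement of the fact that a function binary in one argument admits an exact affine decomposition in that argument, combined with a free choice of the summary map $f$. The content of the proposition lies not in the difficulty of the proof but in isolating the three pieces $A_i$, $B_i$, $C_i$ — direct effect, baseline exposure response, and treatment--exposure interaction — which will be the building blocks for the bias and variance analyses later in the paper. Accordingly, my write-up would be a single display verifying the identity, followed by the assignment of $A_i$, $B_i$, $C_i$, $f$ above, and a brief comment on non-uniqueness.
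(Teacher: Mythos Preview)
Your proof is correct as an existence argument, but you choose different witnesses than the paper does, and the difference is worth noting because it affects how the decomposition is used downstream.

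The paper (in the proof of Proposition~\ref{prop:param}) takes
\[
A_i(z_i) = Y_i(z_i,0),\qquad B_i(e_i) = Y_i(0,e_i) - Y_i(0,0),\qquad C_i(e_i) = Y_i(1,e_i) - Y_i(1,0) - Y_i(0,e_i) + Y_i(0,0),
\]
and then checks that this is a full-rank linear reparametrization of the $2K_i$ potential outcomes with the normalization $B_i(0)=C_i(0)=0$. Your construction instead sets $A_i\equiv 0$ and pushes the baseline $Y_i(0,e_i)$ entirely into $B_i$. Both verify the identity, and you correctly flag the non-uniqueness. What the paper's choice buys is interpretability: with their normalization, $A_i(1)-A_i(0)=Y_i(1,0)-Y_i(0,0)$ is exactly the unit-level direct effect, $B_i$ is a pure exposure effect vanishing at $e_i=0$, and $C_i$ is a pure interaction vanishing at $e_i=0$. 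This is precisely the structure exploited in Section~\ref{sec:Bias} (e.g., Proposition~\ref{prop:biasGeneralCase}), where $DTE = \tfrac{1}{n}\sum_i(A_i(1)-A_i(0))$. Under your witnesses that quantity is identically zero, so while your argument proves the proposition as stated, it would be worth adopting the paper's normalization if you want the decomposition to line up with the later analysis.
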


Proposition \ref{intro:decomposition} states that the potential outcome function for every unit $i$ can be decomposed linearly into three components: A component that depends on unit $i's$ treatment, a component that depends on the treatment of all other units, and an interaction term. 
At first glance, this representation appears to be redundant as it is over-parametrized. But this decomposition offers three benefits: Firstly, the number of parameters and hence the number of potential outcomes can be now modeled by specification of these functions. Indeed, the explicit construction of the functions $A_i,B_i, C_i$ and $f$ in Proposition \ref{intro:decomposition} requires modeling assumptions on the Potential outcomes which is the subject of Section \ref{sec:PODef}. 
Secondly, the decomposition makes it clear that classical causal effects are ill-defined when there is interference because they ignore two components of the potential outcomes and use only the first component of direct effect. This decomposition allows us to define different types of causal estimands that focus on direct effects, interference effects or the interaction between the two. Finally, the decomposition also allows us to gain deeper insights into the nature and sources of biases for various classical estimators to estimate causal effects. We will discuss these issues next.

	 \paragraph{Different types of Average Treatment effects} We show that in presence of interference, there are many non-equivalent ways to define a treatment effect. In this summary, we will focus on two most popular treatment effects that fall under the class of \emph{average treatment effects}. We also consider a different class called \emph{marginal treatment effects}, see Section \ref{sec:estimands}. The two average treatment effects that we consider are the \emph{direct effect}
	 \begin{align}
	 \label{eq:intro:de}
	 \beta_1 = \frac{1}{n} \sum_i \left( Y_i(1,\textbf{0}) - Y_i(0,\textbf{0}) \right),
	 \end{align}
	 and the \emph{total effect}
	 \begin{align}
	 \label{eq:intro:te}
	 \beta_2 = \frac{1}{n} \sum_i \left(Y_i(1,\textbf{1}) - Y_i(0,\textbf{0}) \right).
	 \end{align}
	
	 \begin{proposition}
	 	Under the no-interference assumption, $\beta_1 = \beta_2$. Under interference, $\beta_1 \neq \beta_2$.
	 \end{proposition}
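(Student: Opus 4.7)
The plan is to split the proof according to the two assertions. For the forward implication, I would simply substitute the no-interference hypothesis \eqref{eq:intro:no-interference}, which states $Y_i(\textbf{z}) = Y_i(z_i)$, into the definitions \eqref{eq:intro:de} and \eqref{eq:intro:te}. Under this collapse, $Y_i(1,\textbf{0}) = Y_i(1,\textbf{1}) = Y_i(1)$ and $Y_i(0,\textbf{0}) = Y_i(0)$, so both $\beta_1$ and $\beta_2$ reduce to $\frac{1}{n}\sum_i [Y_i(1) - Y_i(0)]$ and hence coincide. This half of the statement is essentially a one-line substitution.

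For the reverse direction I would read the claim as the existential assertion that, once interference is allowed, $\beta_1$ and $\beta_2$ need not be equal. The cleanest way to see this is to invoke the decomposition of Proposition \ref{intro:decomposition}. Writing $Y_i(\textbf{z}) = A_i(z_i) + B_i(e_i) + z_i C_i(e_i)$ with $e_i = f(\textbf{z}_{-i})$, and setting $e_i^0 = f(\textbf{0}_{-i})$ and $e_i^1 = f(\textbf{1}_{-i})$, one obtains
\[
Y_i(1,\textbf{0}) - Y_i(0,\textbf{0}) = A_i(1) - A_i(0) + C_i(e_i^0),
\]
\[
Y_i(1,\textbf{1}) - Y_i(0,\textbf{0}) = A_i(1) - A_i(0) + B_i(e_i^1) - B_i(e_i^0) + C_i(e_i^1),
\]
so that averaging yields
\[
\beta_2 - \beta_1 \;=\; \frac{1}{n}\sum_i \bigl[\, B_i(e_i^1) - B_i(e_i^0) + C_i(e_i^1) - C_i(e_i^0) \,\bigr].
\]
Under no interference the functions $B_i$ and $C_i$ are constant in their exposure argument and this expression vanishes identically; once the exposure genuinely drives the outcome, it is in general nonzero. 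As a bonus, this decomposition also makes transparent which components of the potential outcomes are responsible for any gap between the direct and the total effect.

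To rule out the objection that the right-hand side above could accidentally sum to zero for every interference specification, I would close the argument by exhibiting a minimal explicit instance, e.g.\ $n=2$ with $Y_i(z_1,z_2) = z_1 + z_2$ for $i=1,2$. Here the outcome of each unit depends on the treatment of the other (so interference is present), and a direct computation gives $\beta_1 = 1$ while $\beta_2 = 2$. The main subtlety, rather than a genuine technical obstacle, is clarifying that the second assertion is to be read as an existence statement about interference models and not as a universal strict inequality: once that reading is fixed, the decomposition together with the two-unit counterexample delivers the result immediately.
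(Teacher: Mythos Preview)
Your proposal is correct and follows essentially the same approach as the paper: the no-interference direction is handled by direct substitution (exactly as in the paper's Proposition~\ref{prop:SUTVAATE}), and the interference direction is established by exhibiting a concrete example where the two estimands differ. The paper uses a linear exposure model on an interference graph (the remark preceding Proposition~\ref{prop:ate1vsate2}) rather than your minimal $n=2$ instance, and it does not first write down the general formula for $\beta_2-\beta_1$ via the decomposition, but these are presentational rather than substantive differences.
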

	 In fact, one can show that under no-interference assumption, the marginal and the average causal effects are equivalent. This is no longer true in presence of interference, so one needs to be careful in defining what one is interested in estimating. An important point to note is that the causal estimands should not depend on the randomization design and must be defined independent of the actual design that was implemented. 
	 
	 In Section \ref{sec:designs}, we introduce the concept of an estimation \emph{strategy} - a combination of a design and an estimator for estimating a particular estimand and in Section \ref{sec:existenceofEstimators} we study conditions under which unbiased estimators exist.

	  \paragraph{Commonly used designs and estimators are biased} An intuitive approach to estimate average causal effects under interference in the literature is to use a difference-in-means estimator, with the hope that a mild form of interference may not effect the bias of the estimator 
	  We formalize this intuition and study the nature and sources of bias in various difference-in-means estimators under interference. Unfortunately, the situation is more complex. The nature of bias depends on the estimand, the exact form of the difference-in-means estimator, the design and finally the model for interference. This is the subject of Section \ref{sec:Bias}. For estimating the marginal effects, the difference-in-means estimators are unbiased under certain mild conditions on the design. For estimating the total and the direct effect defined in equations \ref{eq:intro:te} and \ref{eq:intro:de}, the situation is more nuanced.
	  
	  In general, there are two sources of bias in estimating the direct and the total effects, see Proposition \ref{prop:biasGeneralCase}. The first source of bias is due to the so called \emph{nuisance potential outcomes}. These are the potential outcomes that do not appear in the definition of the estimand and are irrelevant for estimation of certain classes of estimands, specially the average causal effects. The nuisance potential outcomes form a source of bias when estimating average causal effects, as shown in Section \ref{sec:Bias} and Proposition \ref{prop:biasGeneralCase}. 
	  
	  The second source of bias is due to incorrect weights used in the estimator. In some difference-in-means estimators and designs, the first source of bias can be completely eliminated, see Proposition \ref{prop:expDoMs} for an example. The second source of bias is due to the use of incorrect weights; these weights depend on the design and the nature of interference. For many commonly used designs such as the Completely Randomized Design, Bernoulli Design and the Cluster Randomized Design, assuming a mild form of interference, the second source of bias can be made very small. However, we must point out that the reduction of bias depends on the model of interference, which is not known in general. An incorrect assumption on the interference model may lead to bias, we do not investigate this source of bias. 
	  
	  \begin{proposition}
	  	\label{prop:intro:bias}
	  	Assume that the interference is specified by a graph $G_n$ on $n$ units, i.e., the treatment of unit $i$ effects the outcome of unit $j$ iff there is an edge between nodes $i$ and $j$ in $G_n$. Further, assume that the Potential Outcomes follow a linear model:
	  	$Y_i = \alpha_i + \beta_i z_i + \gamma e_i$, where $e_i$ denotes the number of treated neighbors of unit $i$ in graph $G_n$. Let $m$ be the total number of edges in $G_n$.
	  	Under a completely randomized design and a Bernoulli trial, (defined in Section \ref{sec:designs}), consider the naive difference-in-means estimator:
	  	$$
	  	\hat \beta_{naive} = \frac{\sum_i Y_i^{obs}Z_i}{\sum_i Z_i} - \frac{\sum_i Y_i^{obs}(1-Z_i)}{\sum_i(1-Z_i)}.
	  	$$
	  	The bias of the difference-in-means estimator for estimating the direct effect $\beta_1$ given in equation \ref{eq:intro:de} is
	  	$$\mathbb E[\hat \beta_{naive}] - \beta_1 = -\gamma\frac{m}{{n \choose 2}}.$$
	  \end{proposition}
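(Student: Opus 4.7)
The plan is to substitute the linear potential outcomes model $Y_i^{obs} = \alpha_i + \beta_i Z_i + \gamma e_i$ with $e_i = \sum_{j \sim i} Z_j$ directly into $\hat\beta_{naive}$ and compute the expectations term by term. I will first handle the completely randomized design with exactly $n_1$ treated units, and then reduce the Bernoulli case to it by conditioning on the treated count. Using $Z_i^2 = Z_i$ and $Z_i(1-Z_i) = 0$, the summands simplify to $Y_i^{obs}Z_i = (\alpha_i + \beta_i)Z_i + \gamma Z_i e_i$ and $Y_i^{obs}(1-Z_i) = \alpha_i(1-Z_i) + \gamma(1-Z_i)e_i$. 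The only non-trivial moments I need under CRD are $\E[Z_i] = n_1/n$ and $\E[Z_iZ_j] = n_1(n_1-1)/(n(n-1))$ for $i \ne j$; aggregating the interaction term across $i$ uses the handshake identity $\sum_i d_i = 2m$ via $\sum_i Z_i e_i = 2\sum_{\{i,j\}\in E(G_n)} Z_i Z_j$.

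Next, I would divide by the deterministic sample sizes $n_1$ and $n_0 = n - n_1$ and take the difference of the two sample means. The $\alpha_i$ contributions cancel exactly, the $\beta_i$ part collapses to $(1/n)\sum_i \beta_i$, and the interference contributions combine into
\begin{equation*}
\gamma \cdot 2m \cdot \left(\frac{n_1-1}{n(n-1)} \;-\; \frac{n_1}{n(n-1)}\right) \;=\; -\frac{2m\gamma}{n(n-1)} \;=\; -\gamma\,\frac{m}{\binom{n}{2}}.
\end{equation*}
Since the linear model gives $Y_i(1,\textbf{0}) - Y_i(0,\textbf{0}) = \beta_i$ (the neighborhood exposure vanishes when all other units are untreated), the direct effect reduces to $\beta_1 = (1/n)\sum_i \beta_i$, and subtracting delivers the claimed bias.

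For the Bernoulli design, I would condition on the event $\{\sum_i Z_i = n_1\}$ with $1 \le n_1 \le n-1$ (on which the estimator is defined). Conditionally, a Bernoulli design is distributed as a CRD with $n_1$ treated units, so the conditional bias equals the one just derived. The crucial and convenient fact is that the CRD bias $-\gamma m/\binom{n}{2}$ is \emph{independent of $n_1$}, so the tower property collapses the average over $n_1$ trivially. The main obstacle in the argument is really the random-denominator issue for Bernoulli: $\hat\beta_{naive}$ is undefined when $\sum_i Z_i \in \{0, n\}$, so a convention (say $0/0 = 0$) or a restriction to the event that both samples are nonempty is needed, but the $n_1$-invariance of the conditional bias makes this choice immaterial. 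The other bit of bookkeeping to watch is the double counting in $\sum_i Z_i e_i$, which the handshake identity handles cleanly in expectation.
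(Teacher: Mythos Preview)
Your proposal is correct and follows essentially the same route as the paper's own proofs (Propositions \ref{prop:biasSimpleLinearModelCRD} and \ref{prop:biasSimpleLinearModelBernoulli}): substitute the linear model, exploit $Z_i^2=Z_i$ and $Z_i(1-Z_i)=0$, compute the CRD moments $\E[Z_iZ_j]$ and $\E[(1-Z_i)Z_j]$, and for the Bernoulli case condition on $\sum_i Z_i$ to reduce to CRD. Your observation that the CRD bias is independent of $n_1$, making the tower property trivial, is exactly how the paper's Bernoulli computation collapses; the paper handles the degenerate allocations by working with the restricted Bernoulli trial, which is the formal version of the convention you mention.
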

	  Proposition \ref{prop:intro:bias} is an example of the type of characterization of the nature and source of bias developed in Section \ref{sec:Bias} for various models of interference. This result shows that even under a simple linear model of potential outcomes, the difference-in-means estimator is biased for estimating the direct effect. The bias depends on the unknown interference parameter $\gamma$ and the density of the interference graph given by $m/{n \choose 2}$. The bias is in the opposite direction of the interference effect: If there is positive interference, the estimated direct effect is smaller than the true direct effect and vice versa. Also, if the interference effect is small and the interference graph is sparse, then the bias is very small. However, as we can see, even in such a simple model, the nature of bias depends on unknown parameters such as the density of the interference graph $G_n$ and $\gamma$.  For more general models, the qualitative results are similar, and the reader is refereed to Section \ref{sec:Bias} for more details.

	 \paragraph{Linear unbiased estimators and inadmissibility of the Horvitz-Thompson estimator} Section \ref{sec:lue} is devoted to the theory of linear unbiased estimation. For any design, weighted unbiased linear estimators can be constructed using techniques from sampling theory. We study two classes of weighted linear unbiased estimators. We show that under some regularity conditions, there are infinitely many weighted linear unbiased estimators, see Theorem \ref{thm:generalLinear}. Moreover, when the weights are allowed to depend only on the treatment and exposure status of a unit, the Horvitz-Thompson estimator is the only unbiased estimator, see Theorem \ref{thm:HT}. The weights used in the HT estimator depend on the interference model, the design and the estimand. In Theorem \ref{thm:propensityscores}, we derive the formula for the weights used in HT estimators for Bernoulli, CRD and the Cluster Randomized designs for different interference models for estimating the direct effect. A point to note is that unbiased estimators of the direct effect do not exist when using cluster randomized designs. This illustrates the fact that an estimation \emph{strategy} that is considered optimal for one type of estimand may not necessarily be optimal for a different estimand, in fact, it can be far from optimal. The optimality criteria can be as simple as existence or unbiasedness.\footnote{It remains an open question to find estimation strategies that can be simultaneously optimal for a large class of estimands.}

	 Although the H-T estimator is unbiased, its performance can be very poor in practice because of high variance. An estimator is inadmissible if there exists a uniformly better estimator in terms of the mean squared error. In Theorem \ref{thm:HTinadmissible}, we show that for a large class of designs that satisfy some natural regularity conditions, the HT estimator is inadmissible. We discuss various improvements to the HT estimator, which are inspired by the survey sampling literature, that aim to reduce the variance at the cost of a mild bias.

\paragraph{New Designs}
	We consider new designs for estimating causal effects when there is treatment interference. There are two key considerations when thinking about new designs. The first consideration is that the optimality of a design may depend on the estimand: A design that is considered optimal for estimating the direct effect may be far from optimal for estimating the total effect. The second consideration is that the optimal design may need to depend on the interference graph and the exposure model. Classical designs such as CRD and Bernoulli designs are oblivious to the interference graph and the exposure model. They can generate units with potential outcomes that are nuisance when estimating the direct and the total effect. 
	
	To this end, we discuss two designs, one old and one new for estimating the direct and the total effect under the symmetric exposure model, when the interference graph is known. For estimating the direct effect, we develop a new design inspired by the concept of an independent set in graph theory. The independent set design attempts to maximize the number of units that reveal the relevant potential outcomes required for estimating the direct treatment effect. For estimating the total effect, we consider the cluster randomized design discussed in \cite{ugander2013graph}.
	
\paragraph{Optimality of Estimation Strategies.}
	We evaluate several estimation strategies for estimating the total and the direct effect using simulation studies. The key lessons of the simulation studies can be summarized as follows: The bias of the difference of means estimator in estimating the direct effect depends on the unknown interference effects. Estimation strategies that are unbiased for one estimand may be severly biased for a different estimand. For e.g. we find that the Independent set designs along with any estimator is approximately unbiased for the \emph{direct effect} and has superior performance in terms of mean squared error when compared with other designs. On the other hand, the cluster randomized design along with any estimator is approximately unbiased for estimating the total effect. Moreover, the Horvitz-Thompson estimator has the worst performance in terms of mean-squared error - even the biased naive difference-in-means estimator is beats it.
	


\section{Revisiting the Potential Outcomes framework under arbitrary treatment interference}
\label{sec:POandNetwork}
%
%
%
%

In this section, we revisit the definition of potential outcomes when there is arbitrary treatment interference. We develop a framework for specifying models for potential outcomes under interference. Such models are necessary when there is treatment-interference. We consider two classes of causal effects and study the conditions under which unbiased estimators exist for estimating causal effects. 
 
 \subsection{Potential Outcomes under arbitrary interference}
 \label{sec:PODef}
 Consider a finite population of $n$ units indexed by the set $\{1,\ldots, n\}$ and a binary treatment $z_i \in \{0,1\}$ for each unit $i$. Let $\textbf{z} = \left(z_1, \ldots, z_n \right)$ denote the vector of treatment assignments. Let $\Omega$ be the set of \emph{relevant} treatment assignments and let $|\Omega| = m$. 
 In general, $\Omega = \{0,1\}^n$ and $m = 2^n$. 
   
 Under arbitrary treatment interference, let $Y_i(\textbf{z})$ be the fixed potential outcome of unit $i$ under the treatment assignment vector $\textbf{z}$. The potential outcome of a unit $i$ can also be considered as a function from the set of possible treatment assignments $\Omega$ to $\mathbb{R}$. For example, in case of a binary outcome, $Y_i(\textbf{z}): \Omega \rightarrow \{0,1\}$. Under this notation, the potential outcome of unit $i$ depends on the treatment assignment of all units under the study. Thus, there are a total of $n \times m$ potential outcomes, which can be assembled in the form of an $n \times m$ table, as shown in Table \ref{tab:tableofScience}. The rows in Table \ref{tab:tableofScience} correspond to the units and the columns correspond to the treatment assignments; the $(i,j)^{th}$ entry corresponds to the potential outcome of unit $i$ under treatment represented by column $j$. This table is referred to as the \emph{Table of Science} and denoted by $\mathbb T$.
 
 \begin{remark}
 	We have made an implicit assumption of \emph{no hidden versions of a treatment} which appears as the second part of the SUTVA, see section \ref{sec:existenceofEstimators} for more details. 
 \end{remark}
 Causal effects are defined as functions of the entries of Table of Science. In particular,  Causal effects can be defined as contrasts between functions of potential outcomes under two distinct treatment assignments. For example, let $\textbf{z}_0$ and $\textbf{z}_1$ be two distinct treatment allocations in $\Omega$, i.e., $\textbf{z}_0 \neq \textbf{z}_1$, then an example of a causal effect is 

 $$\frac{1}{n} \left(\sum_i Y_i(\textbf{z}_1) - \sum_i Y_i(\textbf{z}_0) \right).$$ 
 In Section \ref{sec:estimands} we consider two different classes of estimands or causal effects.
 The fundamental problem of Causal Inference is that the table of science is unknown and only one entry of Table \ref{tab:tableofScience} can be observed. 
 
 More specifically,  let $\textbf{Z} = (Z_1, \ldots, Z_n)$ denote the random vector of treatment assignments and  $p(\textbf{Z} = \textbf{z})$ be a probability distribution defined over the set of all possible treatment assignments $\Omega$. $p(\bf{Z})$ is called the treatment assignment mechanism or a design. In many cases, we can also restrict ourselves to $\Omega_{p} = \{\textbf{z}: p(\textbf{Z} = \textbf{z}) > 0\}$, the support of the treatment assignment mechanism. 
 Under a random treatment assignment $p(\textbf{Z})$, without any further assumptions, only one random entry of each row of Table \ref{tab:tableofScience} can be observed, i.e. for each unit $i$, only one of it's potential outcome can be observed. For example, if the realized treatment $\textbf{Z}$ corresponds to column $j$, then only column $j$ is observed. Since causal effects are defined as contrasts between two different treatment assignments, they cannot be estimated if only one column is observed. 

\begin{table} 
\caption{Table of Science}
\centering
\label{tab:tableofScience}
 \begin{tabular}{l|cccccc}
 	\hline 
 	& \multicolumn{6}{c}{Treatment} \\
 	\cline{2-7} 
 	Units & $1$ & $2$ & $\ldots$ & $j$ & $\ldots$ & $m$ \\
 	\hline 
 	$1$ & $Y_1(\bf{z}_1)$ & $Y_1(\bf{z}_2)$ & $\ldots$ & $Y_1(\textbf{z}_j)$ & $\ldots$ & $Y_1(\textbf{z}_m)$ \\
 	$\vdots$ & $\vdots$ & $\vdots$ & $\ddots$ & $\vdots$ & $\ddots$ & $\vdots$ \\
 	$i$ & $Y_i(\bf{z}_1)$ & $Y_i(\bf{z}_2)$ & $\ldots$ & $Y_i(\textbf{z}_j)$ & $\ldots$ & $Y_i(\textbf{z}_m)$ \\
 	$\vdots$ & $\vdots$ & $\vdots$ & $\ddots$ & $\vdots$ & $\ddots$ & $\vdots$  \\
 	$n$ & $Y_n(\bf{z}_1)$ & $Y_n(\bf{z}_2)$ & $\ldots$ & $Y_n(\textbf{z}_j)$ & $\ldots$ & $Y_n(\textbf{z}_m)$ \\
 	\hline 
 \end{tabular}
\end{table}

\begin{proposition}
	\label{prop:impossible}
 	Causal effects are unidentifiable without any assumptions on the potential outcome functions $Y_i(\bf{z})$.
 \end{proposition}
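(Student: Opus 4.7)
The plan is to establish non-identifiability by exhibiting two distinct Tables of Science $\mathbb{T}$ and $\mathbb{T}'$ that produce identical observable data under the randomization design yet yield different values of a contrast-type causal estimand, from which it follows that no function of the observables can pin the estimand down.

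First I would pin down what is observable. In the randomization-inference framework the potential outcomes are fixed while the treatment $\textbf{Z} \sim p(\cdot)$ is random, so a realized experiment produces the pair $(\textbf{Z}, \textbf{Y}^{obs})$ with $Y_i^{obs} = Y_i(\textbf{Z})$ --- equivalently, exactly one column of $\mathbb{T}$, namely the one indexed by the realized $\textbf{Z}$. For each unit $i$, only one of its $m = 2^n$ potential outcomes is revealed in any realization, and the remaining $m-1$ entries of row $i$ are logically unconstrained by the data.

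Next I would fix a generic contrast estimand, e.g.\ $\tau(\mathbb{T}) = \tfrac{1}{n}\sum_{i=1}^n\bigl(Y_i(\textbf{z}_1) - Y_i(\textbf{z}_0)\bigr)$ for two distinct $\textbf{z}_0, \textbf{z}_1 \in \Omega$, and build the counterexample. Since only one column is ever revealed in a single realization, at least one of $\textbf{z}_0, \textbf{z}_1$ indexes an unobserved column $\textbf{z}^\star$; define $\mathbb{T}'$ to agree with $\mathbb{T}$ everywhere except in column $\textbf{z}^\star$, where each entry is shifted by an arbitrary nonzero constant $c$. The two tables generate identical observable data while $\tau(\mathbb{T}') = \tau(\mathbb{T}) \pm c \neq \tau(\mathbb{T})$, establishing non-identifiability from the realized observations. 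For the stronger distributional reading --- that even the full distribution of observables under $p(\cdot)$ fails to identify $\tau$ --- the same construction works whenever one of $\textbf{z}_0, \textbf{z}_1$ lies outside the support $\Omega_p$, since the observable distribution only constrains columns indexed by $\Omega_p$ and a perturbation outside this set leaves the distribution of $(\textbf{Z}, \textbf{Y}^{obs})$ unchanged.

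The main obstacle is interpretive rather than technical: one must settle on a notion of identifiability appropriate to a finite-population, randomization-based framework in which the potential outcomes are fixed parameters rather than random variables. Once it is recognized that a contrast inherently requires access to at least two distinct columns of $\mathbb{T}$, while any realized experiment (or the observable distribution under a non-full-support design) reveals only a strict subset of columns, the perturbation argument is immediate and uses no structural assumption on the $Y_i(\textbf{z})$, which is exactly what the proposition asserts.
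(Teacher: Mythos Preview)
Your proposal is correct and follows essentially the same logic as the paper: both rest on the observation that a realized experiment reveals only one column of $\mathbb{T}$, while any contrast-type estimand depends on at least two columns, so the unobserved column can be perturbed without affecting the data. The paper's proof states this observation directly in two sentences without constructing the explicit perturbation; your version is a more formal rendering of the same idea, with the added (and useful) distinction between single-realization and distributional non-identifiability.
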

 \begin{proof}
 	Since there are no further assumptions on the potential outcomes, only one entry of the table of science is observable due to the fundamental problem of causal inference.  As causal effects are defined as contrasts between two distinct treatment assignments, they are unidentifiable as only one entry of the Table \ref{tab:tableofScience} is observed.
 \end{proof}
 
 Proposition \ref{prop:impossible} is a simple observation but has profound consequences. It implies that Causal Inference is impossible without further assumptions on the potential outcomes. Hence we are forced to make modeling choices to make progress. Indeed the bulk of Causal Inference since past 40 years has been centered around the no-treatment interference assumption, which is embedded in \emph{SUTVA} assumption. One can consider the no-interference assumption as a very specific model on the potential outcomes. Under SUTVA, $Y_i(\textbf{z}) = Y_i(z_i)$ and Table \ref{tab:tableofScience} reduces to an $n \times 2$ table. Thus, the question is not ``why a model'', but rather ``which model''? We discuss a series of modeling assumptions on potential outcomes that allow tractable causal inference.

 \subsection{Modeling Potential Outcomes under Network Interference}
 \label{sec:POmodel}
 In this section, we describe models for potential outcomes when there is arbitrary interference due to treatment. As we saw in the previous section, when there is arbitrary interference, modeling potential outcomes becomes necessary without which Causal Inference is impossible. Our framework makes these modeling choices easy to specify and transparent to present. This framework unifies existing models for potential outcomes under treatment interference - many existing models can be instantiated as special cases of our framework. We also develop a linear decomposition of potential outcomes that is useful for interpreting causal effects and studying estimators.
 
 Models of potential outcomes are specified by specifying three different components: an \emph{interference neighborhood}, an \emph{exposure model} and a \emph{structural model}. These components are hierarchical in nature. Each of these components build upon the other, and they need to be defined in this order. For example, to define an exposure model, we need to define the interference neighborhood, and so on. We give an informal description of these components before moving to the formal definitions.
 
 The \emph{interference neighborhood}, denoted by $N_i$, defines the set of units whose treatment assignment can potentially influence unit $i$'s outcomes. Any unit outside the interference neighborhood cannot influence $i$'s potential outcomes. For example, in educational studies, $N_i$ can be the school that unit $i$ belongs to. Any unit outside unit $i$'s school does not effect the outcome of unit $i$. In this example, interference neighborhoods can be partitioned into non-overlapping sets. In more general settings, e.g. in the context of social networks or vaccination studies, the interference neighborhoods of units may overlap with each other and can be more complex. Next, the \emph{exposure model} defines what it means for a unit to be \emph{exposed} and defines the set of relevant exposure conditions of a unit $i$. For example, a unit $i$ may be said to be \emph{exposed} to the treatment if all the units in it's interference neighborhood are treated, or if a fraction of them are treated and so on. The exposure level of a unit need not be a binary variable, but a continuous quantity. For example, it could be the case that there is a gradual increase in exposure, i.e. as more and more units in $i$'s interference neighborhood get treated, $i$ gets ``more'' exposed. Finally, a \emph{structural model} defines or imposes structural constraints on different potential outcomes of each unit $i$. One can think of structural models as a way to specify null hypothesis of interests on individual level potential outcomes. For example, a linear model specifies that the potential outcomes are linearly related to the treatment and the exposure conditions. We will discuss these three components in more detail and give their formal definitions along with several examples.

 \subsubsection{Interference Neighborhood}
 The \emph{interference neighborhood} or \emph{neighborhood} of a unit $i$ (not to be confused with the neighborhood of a node in a graph) is denoted by $N_i \subset [n] \backslash \{i\}$ and is defined as the set of units whose treatment status may effect the outcome of unit $i$. Let $\textbf{z}_{N_i}$ denote the vector $\textbf{z}$ sub-setted by the indices in $N_i$. Let $\textbf{z}$ and $\textbf{z}'$ be two distinct potential outcomes. Then given a choice of the interference neighborhood for each unit $i$, we make the following assumption:
 \begin{align}
 \label{assumption:nia}
 Y_i(\textbf z) = Y_i( \textbf z') \text{ iff } \textbf{z}_{N_i} = \textbf{z}'_{N_i}
 \end{align}
 This allows us to write down the potential outcome of each unit $i$ in the following manner:
 \begin{align}
 \label{eq:NIA}
 Y_i(\textbf z) = Y(z_i, \textbf{z}_{N_i}),
 \end{align} 
 where $z_i$ denotes the treatment assigned to unit $i$  and $\textbf{z}_{N_i}$ denotes the vector of treatment assigned to units in the \emph{interference neighborhood} of unit $i$. 
 
 \begin{remark}
Note that the interference neighborhood of each unit can be different and hence $\textbf{z}_{N_i}$ can be of different length. Moreover, a unit $i$ may be in unit $j$'s interference neighborhood, but $j$ may not be in $i$'s neighborhood. Finally, Interference neighborhoods of two units may overlap, they may be disjoint or they can also be the same. 
 \end{remark}
 We will now consider two simple, but extreme examples of interference neighborhoods. 
 \begin{enumerate}
 	\item \emph{No treatment interference}: $N_i = \emptyset$ for each unit $i$
 	\item \emph{Complete interference}: $N_i = [n]/\{i\}$
 \end{enumerate}
 The simplest example is the setting of no treatment interference, which amounts to saying that the outcome of unit $i$ does not depend on the treatment of any other unit. At the other extreme is complete interference, where the treatment of every unit can effect the outcome of unit $i$. 
 The first example reduces to the classical SUTVA setting, and in the second example, there is no causal inference possible, unless we make additional assumptions (specified by an exposure model and/or a structural model to be defined below). The most interesting cases are when we can consider interference neighborhoods that lie in between no-interference and complete-interference. To model these intermediate cases, it turns out to be convenient to define interference neighborhoods using a graph.
 
 \paragraph{Graph Induced Interference Neighborhoods}
 A convenient way to specify the interference neighborhood of a unit $i$ is by the means of an \emph{interference} graph. Let $G$ be a fixed, known graph on $n$ nodes with $V$ as its vertex set and $E$ as its edge set. 
 The introduction of an interference graph allows us to introduce additional structure into the nature of interference. 
 Note that in general, $G$ can be asymmetric and even weighted. For simplicity of notation, we will assume for the rest of the paper that $G$ is symmetric and un-weighted, i.e. if $g_{ij}$ denotes the edge from unit $i$ to unit $j$, we will assume $g_{ij} = g_{ji}$. All these ideas apply to an asymmetric weighted graph with additional notation. 

 
 We now consider a few examples of graph induced interference neighborhood:
 \begin{enumerate}
 	\item \emph{1 hops interference}: $N_i = \{j \in V: g_{ij} = 1\}$ 
 	\item \emph{2 hops interference}: $N_i = \{j \in V : \exists k \mbox{ such that } g_{ik} = 1 \mbox{ and } g_{kj} =1 \}$
 	\item \emph{$k$ hops interference}: $N_i = \{j \in V: \exists \mbox{ a path of length at most $k$ connecting $i$ and $j$ in } G\}$
 \end{enumerate}  
 
  \begin{remark}
  	The interference graph is an abstract representation of the interference that may exist in the real world setting. In general $G$ may not be observable, random or may not even be well defined. How does one choose $G$? This is an important question and beyond the scope of this paper. But we will give some remarks. In many cases, $G$ may be clear from the study. For example, consider the setting of \emph{partial interference}. In this setting, the $n$ units can be partitioned into $m$ disjoint groups $K_1,\ldots, K_m$. Interference may happen within the groups but not between the groups, see for e.g. \citet{sobel2006randomized} or \cite{hudgens2012toward}. The interference graph in this case consists of a collection of $m$ disjoint cliques. In many other settings, one may observe a social graph which can serve as a good approximation for $G$, (e.g. Facebook). It may also be the case that we observe a network but posit that the interference may happen only along stronger social ties, for e.g. frequently contacted friends, as opposed to all friends in a social network. In such cases, the interference graph $G$ may be an induced subgraph of the social graph. One may also consider $G$ as random and posit a distribution over $G$. This leads to additional complexities that are beyond the scope of this paper.
  \end{remark}
  
 For the remainder of the paper, we will assume that the interference neighborhood $N_i$ for each unit $i$ is defined through a fixed graph $G$ on $n$ units. 
 \subsubsection{Exposure Models}
After defining the interference neighborhood, there are two modeling choices remaining for specifying potential outcomes and for making causal inference tractable (i.e. to ensure that the table of science as shown in Table \ref{tab:tableofScience} not too wide) - the so called \emph{exposure} model and the \emph{structural} model.  The \emph{exposure} model specifies how the treatment status of units in $Z_{N_i}$ effect the outcome of $i$. It defines the relevant levels of \emph{exposure} and how the treatment levels of the interference neighborhood get mapped to these levels. 


Formally, the exposure model is specified by an \emph{exposure} function $f$ that maps $\textbf{z}_{N_i}$ to a range $\mathcal E_i$. The range of $f$ specifies the relevant exposure levels and the mapping $f$ specifies how the treatment patterns of $\textbf{z}_{N_i}$ map to different exposure levels. To this end, let us assume that the potential outcome function $Y_i(z_i, \textbf{z}_{N_i})$ depends on $\textbf{z}_{N_i}$ through a function 
$$f: \{\textbf{z}_{N_i} \} \rightarrow \mathcal{E}_i. $$
Let $n_i = |N_i|$ be the number of units in the interference neighborhood of $i$. The domain of $f$ is the set of all possible treatment assignments of the neighborhood of a unit $i$. 
The domain of $f$ has at most $2^{n_i}$ elements and is finite. Hence the range of $f$ is also finite. This is because for each treatment assignment $\textbf{z}_{N_i}$, $f$ can map to at most one exposure level. Let $K_i = |\mathcal{E}_i|$ denote the size of the range. Thus, for every unit $i$, there are $K_i$ different levels of exposure. Without loss of generality, we can write the range of $f$ as $\mathcal{E}_i = \{0, 1, \ldots, K_i-1\}$.

Given an exposure function $f$, let $e_i = f(\textbf{z}_{N_i}) \in \mathcal{E}_i$.
Thus we can write the potential outcome function for each unit $i$ as 
\begin{align}
Y_i(\textbf{z}) = Y_i(z_i,\textbf{z}_{N_i}) = Y_i(z_i,f(\textbf{z}_{N_i})) = Y_i(z_i,e_i)
\end{align}
 and $e_i$ takes values in $\{0, 1,\ldots, K_i-1\}$.


To specify an exposure model, one must specify the function $f$ and the levels of exposure $\{0,1,\ldots K_i-1\}$. The total number of exposure patterns $K_i$ depends on the choice of $f$ and $\textbf{z}_{N_i}$. When $f$ is a one to one mapping, there are a total of $2^{n_i}$ levels of exposure for each unit $i$. Clearly, an $f$ that is onto reduces the number of exposure levels and hence the total number of possible potential outcomes. 
 
In the most general case, one can set $N_i = \textbf{z}_{-i}$ and $f$ to be a one to one function. In this case, $K_i= 2^{n-1}$ and there is no reduction in the number of potential outcomes. On the other extreme, when $N_i = \emptyset$, we are in the setting of no interference. Intermediate cases are more interesting and can be defined by a network interference graph $G$.  To ensure identifiability we need $\max_i K_i < 2^{n-1}$.
 
 \begin{remark}
 	Note that $e_i$ is a way of indexing the $K_i$ different types of possible exposure patterns or levels and the symbols $0,1,\ldots, K_i-1$ denote these exposure patterns as defined by the function $f$. The interpretation of the symbols is a choice of the definition of $f$. For example, statements such as ``a unit is exposed if $ 10 \%$ of it's neighbors are treated'' can be the modeled by mapping $\{0, \ldots, K_i-1\}$ to the appropriate fractions. 
 \end{remark}
 We will define two special symbols for two commonly used values of the exposure patterns: $e_i = 0$ is called no exposure and $e_i = 1$ is full exposure. The exposure function $f$ also specifies what it means for a unit to be fully exposed and not exposed. We give two examples:
 \begin{enumerate}
 	\item  $e_i =0$, when all elements of $\textbf{z}_{N_i}$ are 0 and $e_i=1$ when all elements are 1.
 	\item $e_i=0$ when all elements of $\textbf{z}_{N_i}$ are $0$ and $e_i=1$ when at least one element is $1$.
 \end{enumerate}

 \begin{remark}
 	A possible exposure function is one that maps $Z_{N_i}$ to the number of units in $N_i$ that are treated. One subtle issue with choosing such an exposure function is that the levels of exposure function depends on the maximum degree in the graph $G$. If $G$ is not a regular graph, i.e. the degree of each node is different, then the exposure levels of each unit is different, which may not be desirable depending on the application. These are subtle issues that need to be resolved and are out of the scope of our paper. 
 \end{remark}
Let us consider a few  examples of exposure functions:

 \begin{enumerate}
 	\item \emph{Symmetric Exposure:} $ f(\textbf{z}_{N_i})$ is symmetric  in the indices of $\textbf{z}_{N_i}$
 	\item \emph{Linear and Additive Exposure:} $f(\textbf{z}_{N_i}) = \sum_{j \in N_i}h_j(z_j)$
 	\item \emph{Linear Exposure}: $f(\textbf{z}_{N_i}) = \sum_{j \in N_i} \textbf{z}$.
 \end{enumerate}
 It is also possible to define more complex exposure functions. For example, consider a setting when the interference neighborhood is specified through a graph $G$. The interference neighborhood of unit $i$ is the set of units in $G$ that have a connected by a path of size $2$ to $i$, i.e. through \emph{friends} and \emph{friends of friends} of unit $i$. The exposure function $f$ can be parametric that allows the potential outcomes to depend on $Z_{N_i}$ through a weighted combination of the number of treated friends in $G$ and the number of treated friends of treated friends.
 
%
%
%

\begin{remark}
Note that we have made an assumption that the exposure function $f$ is independent of the unit $i$, i.e. we do not allow $f$ to depend on $i$. For example, we do not allow exposure functions where unit $i$'s exposure depends on the number of treated friends and unit $j$'s exposure depends on both the number of treated friends and the number of treated friends of friends. However, the range of $f$ may depend on $i$.
 \end{remark}

\subsubsection{Structural Models}
  
\paragraph{Parametrization of Potential Outcomes under Neighborhood Interference}
Before defining a structural model, it is convenient to introduce a parametrization or a linear decomposition of potential outcomes into direct and indirect effects. This parametrized form of potential outcomes allows one to define and focus on various treatment effects of interests. We present one such parameterization. When $N_i$ is the interference neighborhood and $e_i = f(\textbf{z}_{N_i})$ is the exposure model, every unit $i$ has $2K_i$ potential outcomes, that can be parametrized by $2K_i$ parameters, as given in Proposition \ref{prop:param}.
   
  \begin{proposition}
  	\label{prop:param}
  	For each unit $i$, let $e_i = f(\textbf{z}_{N_i})$ where $N_i$ is the interference neighborhood. The potential outcomes can be parametrized as  
  	\begin{align}
  	 	\label{eq:param}
  	Y_i(\textbf{z}) =  A_i(z_i) + B_i(e_i) + z_iC_i(e_i)
  	\end{align}
  	where $e_i = f(\textbf{z}_{N_i}) \in \{0, \ldots, K_i-1\}$, and $B_i(0) = C_i(0) = 0$.
  \end{proposition}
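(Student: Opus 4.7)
The plan is to establish this by an explicit constructive decomposition, in the spirit of a two-way ANOVA of $Y_i$ viewed as a function on the finite grid $\{0,1\}\times\{0,1,\ldots,K_i-1\}$. Thanks to the interference neighborhood assumption \eqref{assumption:nia} and the exposure function $f$, every $Y_i(\textbf{z})$ depends on $\textbf{z}$ only through the pair $(z_i, e_i)$, so unit $i$ has exactly $2K_i$ distinct potential outcomes. On the right-hand side of \eqref{eq:param}, the functions $A_i$ contribute $2$ values, while $B_i$ and $C_i$ each contribute $K_i-1$ values after imposing $B_i(0)=C_i(0)=0$. That gives $2+(K_i-1)+(K_i-1)=2K_i$ free parameters, matching the number of potential outcomes; this parameter count is the guiding sanity check for the construction.

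First, I would pick the reference cell $e_i=0$ and define $A_i$ as the ``no-exposure'' profile of the direct treatment, namely
\[
A_i(z_i) \;:=\; Y_i(z_i,0), \qquad z_i \in \{0,1\}.
\]
Second, I would define $B_i$ as the control-arm exposure contrast relative to the reference cell,
\[
B_i(e_i) \;:=\; Y_i(0,e_i) - Y_i(0,0), \qquad e_i \in \{0,1,\ldots,K_i-1\},
\]
which automatically satisfies $B_i(0)=0$. Third, I would define $C_i$ as the interaction, i.e.\ the additional treated-arm contrast not explained by $A_i$ and $B_i$:
\[
C_i(e_i) \;:=\; Y_i(1,e_i) - Y_i(1,0) - Y_i(0,e_i) + Y_i(0,0),
\]
again yielding $C_i(0)=0$ by inspection.

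To finish, I would verify \eqref{eq:param} by plugging these definitions back and checking the two cases $z_i=0$ and $z_i=1$: when $z_i=0$, the right-hand side reduces to $A_i(0)+B_i(e_i)=Y_i(0,0)+(Y_i(0,e_i)-Y_i(0,0))=Y_i(0,e_i)$; when $z_i=1$, it reduces to $A_i(1)+B_i(e_i)+C_i(e_i)=Y_i(1,0)+(Y_i(0,e_i)-Y_i(0,0))+(Y_i(1,e_i)-Y_i(1,0)-Y_i(0,e_i)+Y_i(0,0))=Y_i(1,e_i)$. Since $Y_i(\textbf{z}) = Y_i(z_i,e_i)$ by \eqref{eq:NIA} and the definition of $f$, this covers every relevant treatment vector and completes the proof.

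I do not expect a genuine obstacle: the argument is essentially a finite-dimensional change of basis, with the normalization $B_i(0)=C_i(0)=0$ chosen to eliminate the redundancy that would otherwise make the decomposition non-unique. The only subtlety worth flagging in the write-up is that the proposition asserts existence (not uniqueness) of $(A_i,B_i,C_i)$; any other choice of ``reference'' exposure level would yield an equally valid but different decomposition, and the specific choice $e_i=0$ is what aligns with the convention, introduced just before Proposition~\ref{prop:param}, that $e_i=0$ denotes no exposure.
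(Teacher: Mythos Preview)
Your proposal is correct and essentially identical to the paper's own proof: you define $A_i(z_i)=Y_i(z_i,0)$, $B_i(e_i)=Y_i(0,e_i)-Y_i(0,0)$, and $C_i(e_i)=Y_i(1,e_i)-Y_i(1,0)-Y_i(0,e_i)+Y_i(0,0)$, exactly as the paper does, and your parameter count $2+(K_i-1)+(K_i-1)=2K_i$ matches theirs. If anything, your write-up is slightly more thorough, since you explicitly check both cases $z_i\in\{0,1\}$ to confirm the decomposition reproduces $Y_i(z_i,e_i)$, whereas the paper simply asserts the map is invertible after displaying the same formulas.
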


\begin{remark}
	Equation \ref{eq:param} resembles a linear model for Potential Outcomes. However, it is not a linear model in the usual sense of linear regression. Unlike regression, which is a model of conditional expectation, there are no random variables. Moreover, the parameters are not linear, and they depend on $i$.
\end{remark}
This parametrization has a nice interpretation: The $A_i$ parameters represent the direct treatment effects or the part of the Potential outcome that depends only on a unit $i$'s treatment. The $B_i$ and  $C_i$ parameters represent the indirect or interference effects, i.e. the part of the potential outcome that depends on the exposure level. In particular, the $B_i$ parameters represent the additive interference effect and the $C_i$ parameters represent the interaction between the additive interference and the direct treatment effects, see also section \ref{sec:estimands}. Given this linear parametrization, we are now ready to specify structural models.

\paragraph{Structural Model} Up to this point, we have made no assumptions on how the potential outcomes relate to each other, we have only focused on reducing the number of potential outcomes. However, in some cases, we may also make additional modeling assumptions on how one potential outcome relates to another. Sometimes these assumptions serve as null hypothesis for treatment effects. These are called \emph{structural} modeling assumptions as they impose a structure on different potential outcomes. Given the parameterization in equation \ref{eq:param},  structural assumptions can be regarded as restrictions on the parameters of the potential outcomes. Without any structural assumptions, the parameterizations are functionally independent of each other. Structural assumptions make the parameters functionally dependent. Examples include, linear models, additive models and so on. Some examples of structural assumptions are stated below:

\begin{enumerate}
	\item \emph{Additivity}: $C_i(e_i) = 0 \forall e_i \in \{1, \ldots, K_i-1\}$.
	\item \emph{Constant effects:} $A_i(z_i) = A(z_i)$, $B_i(e_i) = B(e_i)$ and $C_i(e_i) = C(e_i)$.
	\item \emph{Linear Effects:} $B_i(e_i) = b_i e_i$
	\item \emph{Constant Additive Effects} $A_i(z_i) = A(z_i)$ and $C_i(e_i) = 0$. 
	\item \emph{Sharp Null:} $A_i(1) - A_i(0) = \beta \forall i$.
\end{enumerate}

\begin{remark} The interference neighborhood reduces the number of potential outcomes for each unit $i$ from $2^n$ to $2^{n_i}$.  An exposure model of the Potential Outcomes further reduces the number of potential outcomes for each unit $i$ from $2^{n_i}$ to a tractable number $K_i$. On the other hand, a \emph{structural model} of the Potential Outcomes specifies a relationship between the $K_i$ different potential outcomes by imposing constraints on the parameters. 
\end{remark}
 \subsubsection{Some models of Potential Outcomes under interference}
Different choices of the interference neighborhood, the exposure function and the structural model lead to different models for the potential outcomes. In this section, we present specific choices that give rise to some models used in the paper. These examples illustrate how one can use the framework to reduce the number of potential outcomes and model them. We start with the parametrized model of potential outcomes modeled using the interference neighborhood $N_i$ and the exposure function $e_i=f(\textbf{z}_{N_i})$:
\begin{align}
\label{eq:NIparam}
Y_i(z_i,e_i) = A_i(z_i) + B_i(e_i) + z_iC_i(e_i) = \alpha_i + \beta_i z_i + B_i(e_i) + z_iC_i(e_i)
\end{align}

where $A_i(0) = \alpha_i$ and $A_i(1) = \alpha_i + \beta_i$.

\paragraph{Symmetric Exposure Models}
\paragraph{Model 1: Symmetric Exposure} Let $N_i$ denote the neighborhood of a unit $i$ as specified by the interference graph $G$, i.e.   
$$N_i = \{j: g_{ij} = 1\}$$
Next, let $f(\textbf{z}_{N_i}) = \sum_{j \in N_i} z_j$, then we get the following model:
\begin{align}
\label{eq:symmetricExposure}
Y_i(z_i,e_i) &= \alpha_i + \beta_i z_i + B_i(e_i) + z_iC_i(e_i)
\end{align}
where $e_i  = \sum_j g_{ij}z_j \in \{0,1,\ldots, d_i\}$ and $d_i$ is the degree of unit $i$ in the interference graph $G$. Under this model, each unit has $2(d_i+1)$ potential outcomes.

Starting with equation \ref{eq:NIparam}, we can make additional structural assumptions to get simpler models for the potential outcomes. We give two examples below.

\paragraph{Model 2:  Symmetric Linear exposure} In Model 1, Let $B(e_i) = \gamma_i e_i$ and $C_i(e_i) = \theta_i e_i$
\begin{align}
\label{eq:LinearSymmExposure}
Y_i(z_i,e_i) &= \alpha_i + \beta_i z_i + B_i(e_i) + z_iC_i(e_i)\nonumber \\
&= \alpha_i + \beta_i z_i + \gamma_i e_i  + \theta_i z_i e_i\nonumber \\
&= \alpha_i + \beta_i z_i + \gamma_i  \left(\sum_j g_{ij}z_j \right) +  \theta_i  z_i \left(\sum_j g_{ij}z_j \right)
\end{align}

\paragraph{Model 3: Symmetric  Additive  Linear exposure} In Model 2, Let $\theta_i =0$
\begin{align}
\label{eq:LinearAddSymmExposure}
Y_i(z_i,e_i)
&= \alpha_i + \beta_i z_i + \gamma_i e_i\nonumber \\
&= \alpha_i + \beta_i z_i + \gamma_i  \left(\sum_j g_{ij}z_j \right)
\end{align}

\paragraph{Binary Exposure Models}
The binary exposure model is te simplest exposure model that weakens the no-interference assumption. In these models, the range of the exposure function is always $\{0,1\}$, where $0$ is interpreted as not exposed and $1$ is interpreted as exposed. The definition of $f$ specifies which treatment levels get mapped to $0$ or $1$ and is chosen based on the application. In the binary exposure model, each unit has $4$ potential outcomes, in contrast to 2 potential outcomes per unit in the SUTVA case. The binary exposure model is the simplest possible model of potential outcomes when there is interference.
We present below a simple but natural choice of such a binary exposure function where a unit is said to be exposed if at least one of it's neighbor is treated. 
\paragraph{Model 4: Binary exposure} Let $N_i$ be $$N_i = \{j: g_{ij} = 1\}.$$ Let $f(\textbf{z}_{N_i}) = 0$ if all elements of $Z_{N_i}$ are $0$ and $1$ if at least one element of $Z_{N_i}$ is $1$. This gives us the so called \emph{two by two} potential outcomes model or the binary exposure model:
\begin{align}
\label{eq:BinaryExposure}
Y_i(z_i,e_i) = \alpha_i + \beta_i z_i + \gamma_i e_i + \theta_i z_i e_i
\end{align}
where $e_i = I\left(\sum_i{g_{ij}}z_j > 0\right) \in \{0,1\}$ and $z_i \in \{0,1\}$.

As in the previous case, one can impose structural assumptions on the binary exposure model to generate simpler models. 
\paragraph{Model 5: Additive Binary exposure} Let $\theta_i = 0$ in model 4, then we get the additive two by two model of potential outcomes
\begin{align}
\label{eq:addBinaryExposure}
Y_i(z_i,e_i) = \alpha_i + \beta_i z_i + \gamma_i e_i
\end{align}

 \subsection{Defining Causal Effects under Network Interference} 
 \label{sec:estimands}
Given an interference neighborhood $G$ and a corresponding exposure function $f(\textbf{z}_{N_i})$, causal effects or estimands are defined as contrasts between potential outcomes under distinct treatment and exposure assignments. Under interference, there are many non-equivalent ways of defining causal effects. The definition of the estimand depends on the question one is interested in answering. It is important to note that the estimands are defined only using the table of science, and they do not depend on the actual treatment assignment mechanism used to estimate them.

For a given exposure function $f(\cdot)$, each unit $i$ has $2K_i$ distinct potential outcomes denoted by $Y_i(z_i,e_i)$ where $z_i \in \{0,1\}$ and $e_i \in \{0,\ldots, K_i-1\}$. These potential outcomes can be assembled in the form of a $n \times m$ Table of Science as before. The number of columns of the Table of Science in Table \ref{tab:tableofScience} reduce from $2^n$ to  $2\cdot (\sum_{i=1}^n K_i)$ columns where $K_i$ is the number of exposure levels for each unit $i$. The columns of the table of science now correspond to the relevant treatment and exposure conditions $(z_i,e_i)$ as specified by $f$. For instance, under the binary exposure model, the table of science has $4$ columns and $n$ rows. This is the simplest setting of a Table of Science that relaxes the no-treatment interference assumption. Causal effects are functions of at least two columns of $\mathbb T$.  

Before we define causal effects, we need some additional notation. A fixed treatment assigned vector $\textbf{z}$ gets mapped to different treatment and exposure combination $(z_i,e_i)$ for each unit $i$. Moreover, different treatment assignment vectors can get mapped  to the same treatment and exposure combination. Formally, let $z_0$ and $e_0$ denote a generic treatment and exposure condition. Let 
 $$
 \Omega_i(z_0,e_0) = \{\textbf{z} \in \Omega: z_i = z_0, f(\textbf{z}_{N_i}) = e_0 \} 
 $$
 be the set of all treatment assignment vectors that give rise to treatment $z_0$ and exposure $e_0$ for a unit $i$.

 We will consider two classes of estimands: \emph{marginal} and \emph{average} causal effects: The \emph{marginal effects} are  defined as contrasts between two different randomized treatment policies. The \emph{average effects} are a contrast between two different types of potential outcomes. In some cases, both definitions can lead to the same estimand, but it is not true in general. 
 
\paragraph{Marginal Effects}Let us first consider the marginal effects that are defined as contrasts between two randomized treatment assignment mechanisms. We will refer to such treatment assignments as \emph{policies} to distinguish them from the actual treatment assignment mechanism used in the experiment. For instance, a policy can be to treat randomly chosen $10 \%$ of the units in the population, or to treat $5 \%$ of the units in the population and so on. Let $\phi$ and $\psi$ be two policies, i.e. $\phi(\textbf{Z})$ and $\psi(\textbf{Z})$ are two distributions over $\textbf{Z}$. Similar to \cite{hudgens2012toward}, we define conditional and marginal potential outcomes of a unit $i$ as expectations of potential outcomes under a treatment policy. Let $E_i$ denote the random exposure condition of unit $i$.
\begin{definition}[Conditional and Marginal Potential Outcomes]
	  \begin{align*}
	  \bar{Y}_i(z_i;\phi) &= \E_{\phi}\left[Y_i(Z_i,E_i)|Z_i=z_i\right] = \sum_{e_i} Y_i(z_i,e_i)\phi(E_i=e_i|Z_i=z_i) \\
	  \bar{Y}_i(\phi) &= \E_{\phi}\left[Y_i(Z_i,E_i)\right] = \sum_{e_i, z_i} Y_i(z_i,e_i)\phi(Z_i=z_i,E_i=e_i) 
	  \end{align*}
\end{definition}
Here $\phi(Z_i=z,E_i=e) = \sum_{\textbf{z} \in \Omega_i(z,e)}\phi(\textbf{Z} = \textbf{z})$ and $\Omega_i(z,e) = \{\textbf{z} \in \Omega: z_i = z, e_i = e \}$.  Given the conditional and marginal potential outcomes, various causal effects can be defined as follows:
  \begin{align*}
  \theta(\phi) &= \frac{1}{n}\sum_{i=1}^n \bar{Y}_i(1;\phi) - \bar{Y}_i(0;\phi) \\
  \theta(\phi;\psi) &= \frac{1}{n}\sum_{i=1}^n \bar{Y}_i(\phi) - \bar{Y}_i(\psi) \\
\theta(\phi;\psi,z)  &= \frac{1}{n} \sum_{i=1}^n \bar{Y}_i(z_i=z;\phi) - \bar{Y}_i(z_i=z;\psi)
  \end{align*}
   One can define total, direct and indirect causal effects using these definitions, and consider various decompositions among them.
  
\paragraph{Average Causal Effects} An alternate way to define causal effects is to consider contrasts between two fixed types of potential outcomes. Let us consider a generic causal estimand $\beta$ defined as a contrast between two different treatment and exposure combinations: $\tau_1 = (z_1,e_1)$ and $\tau_0 = (z_0,e_0)$:
\begin{align}
\beta = \frac{1}{n}\sum_{i=1}^n \left(Y_i(z_1,e_1) - Y_i(z_0,e_0)\right)
\end{align}

The most popular average causal effects are the Average treatment effects and the Average interference effects.  We will consider two types of average treatment effects: the direct treatment effect (DTE) and the total treatment effect (TTE) that are defined below. Recall that the exposure levels $e_i=0$ and $e_i=1$ are special values defined to represent the situations when a unit is not exposed or exposed respectively.  Direct effects are defined as contrasts between the conditions $\tau_1 = (1,0)$ and $\tau_0 = (0,0)$, i.e. when a unit is treated and not exposed vs when a unit is neither treated nor exposed.

\begin{definition}[Direct Treatment Effect]
 $$DTE  = \beta_{DE} = \frac{\sum_i Y_i(1,0)}{n} - \frac{\sum_i Y_i(0,0)}{n}$$
\end{definition}
Similarly, one can consider the total treatment that is a contrast between $\tau_1 = (1,1)$ and $\tau_0 = (0,0)$ when a unit is treated and exposed vs when a unit is neither treated nor exposed.

\begin{definition}[Total Treatment Effect]
 $$TTE = \beta_{TE} = \frac{\sum_i Y_i(1,1)}{n} - \frac{\sum_i Y_i(0,0)}{n} $$
\end{definition}
Similarly, one can also define average estimands that measure interference effects: 
\begin{definition}[Average Interference Effects]
 $$\gamma_1 = \frac{\sum_i Y_i(0,1)}{n} - \frac{\sum_i Y_i(0,0)}{n}$$
 
 $$\gamma_2 = \frac{\sum_i Y_i(1,1)}{n} - \frac{\sum_i Y_i(1,0)}{n}$$
\end{definition}

 
 \paragraph{Relation between the estimands} As we saw, there are two ways to define causal effects. The marginal effects defined as a contrast between expectations of the potential outcomes under two different randomization policies, and the average effects defined  as a contrast between averages of potential outcomes. These two definitions are non-equivalent in general.
 However, under SUTVA, the marginal effects and the average effects reduce to the classical ATE, as the following proposition shows:
 \begin{proposition}
 	\label{prop:SUTVAATE}
 	Assume that $Y_i(\textbf{z}) = Y_i(z_i)$. Then we have
 	$\beta_{DE} = \beta_{TE} =  \theta(\phi) = \beta$ where
 	$\beta = \frac{1}{n}\sum_{i=1}^{n}\left(Y_i(1) - Y_i(0)\right).$
 \end{proposition}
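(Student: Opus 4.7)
The plan is to observe that under SUTVA the interference neighborhood $N_i$ can be taken to be empty, so that for every unit $i$ and every treatment vector $\textbf{z}$, the exposure level $e_i = f(\textbf{z}_{N_i})$ is a constant (say $0$) and $Y_i(z_i, e_i) = Y_i(z_i)$ does not depend on $e_i$ at all. This single observation collapses both the average-effect and the marginal-effect definitions to contrasts between $Y_i(1)$ and $Y_i(0)$.

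First I would handle the two average effects. Since $Y_i(1,0) = Y_i(1)$ and $Y_i(0,0) = Y_i(0)$ by SUTVA, the definition of $\beta_{DE}$ immediately reduces to $\frac{1}{n}\sum_i(Y_i(1) - Y_i(0)) = \beta$. The same substitution with $Y_i(1,1) = Y_i(1)$ and $Y_i(0,0) = Y_i(0)$ gives $\beta_{TE} = \beta$. No design-dependent reasoning is needed here; these follow purely from the functional form of the potential outcomes.

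Next I would handle the marginal effect $\theta(\phi)$. Starting from
\[
\bar{Y}_i(z_i; \phi) = \sum_{e_i} Y_i(z_i, e_i)\, \phi(E_i = e_i \mid Z_i = z_i),
\]
I would substitute $Y_i(z_i, e_i) = Y_i(z_i)$, pull the $e_i$-independent factor $Y_i(z_i)$ outside the sum, and use the fact that $\sum_{e_i} \phi(E_i = e_i \mid Z_i = z_i) = 1$ for any valid policy $\phi$. This yields $\bar{Y}_i(z_i;\phi) = Y_i(z_i)$, so $\theta(\phi) = \frac{1}{n}\sum_i (Y_i(1) - Y_i(0)) = \beta$.

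There is essentially no obstacle: the proposition is a sanity check that the new estimands generalize the classical ATE. The only subtlety worth noting is that the marginalization step requires the conditional distribution $\phi(E_i = e_i \mid Z_i = z_i)$ to be well defined for every $z_i$ in the support, which is why one typically restricts attention to policies $\phi$ with $\phi(Z_i = z_i) > 0$ for $z_i \in \{0,1\}$; I would state this regularity condition explicitly before invoking the definition of $\bar{Y}_i(z_i; \phi)$.
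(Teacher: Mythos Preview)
Your proposal is correct and follows essentially the same approach as the paper: the paper's proof consists of the single line $\bar{Y}_i(z_i;\phi) = \sum_{e_i} Y_i(z_i,e_i)\,\phi(E_i=e_i\mid Z_i=z_i) = Y_i(z_i)$ and then remarks that the remaining equalities follow from the definitions. Your version is more explicit in handling $\beta_{DE}$ and $\beta_{TE}$ and in flagging the positivity requirement on $\phi(Z_i=z_i)$, but the underlying argument is identical.
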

  
  \begin{remark}
  	Under the no interference assumption, the direct and the total effect reduce to the ``usual'' classical version of ATE. Moreover, the average interference effects are $0$ under no-interference. But under interference, the direct and the total effects are different, and the average interference effects are not $0$. For example, consider the following linear model of the potential outcomes:
  	
  	$$Y_i(\textbf{z}) = \alpha_i + \beta_i z_i + \gamma_i \left(\sum_j g_{ij}z_j\right) + \delta_i z_i \left(\sum_j g_{ij}z_j\right) $$
  	Let $\beta = (1/n)\sum_i \beta_i)$.
  	Under this model, $DTE = \beta$ and $TTE = \beta + \frac{1}{n}\sum_{i=1}^n{(\gamma_i + \delta_i) d_i} $. where $d_i = \sum_j g_{ij} z_j$ is the number of treated neighbors of unit $i$ in $G$. However, if $\gamma_i = 0$ and $\delta_i = 0$, then $DTE = TTE$. Hence we have the following proposition:
  \end{remark}

  \begin{proposition}
  	\label{prop:ate1vsate2}
  	Consider the linear additive symmetric exposure model of Potential Outcomes given by equation \ref{eq:LinearAddSymmExposure}, and let $\beta_i = \beta$, $\gamma_i = \gamma$ and $\delta_i = \delta$ $\forall i$ and $\bar{d} = \frac{1}{n} \sum_i d_i$. We have
  	$DTE = \beta$, $TTE = \beta + (\gamma+\delta)\bar{d}$, $\gamma_1 =  \gamma \bar{d}$, and $\gamma_2 = (\delta+\gamma) \bar{d}$.
  \end{proposition}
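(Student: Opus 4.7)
The statement is essentially a direct computation, so my plan is to just plug the specified model into each of the four estimand definitions and simplify. The only point that requires any interpretation is a mild notational one: the proposition cites equation \ref{eq:LinearAddSymmExposure}, but the additive model has no $\delta_i$ term. To make the conclusion nontrivial I would read the statement as applying to the linear symmetric exposure model (equation \ref{eq:LinearSymmExposure}) under the identification $\delta_i = \theta_i$; when $\delta=0$ one recovers the purely additive case and the formulas still hold.

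First I would fix notation. Under the linear symmetric exposure model with constant coefficients, every potential outcome has the form
\begin{equation*}
Y_i(z_i,e_i) = \alpha_i + \beta z_i + \gamma\, e_i + \delta\, z_i e_i,
\qquad e_i = \sum_{j} g_{ij} z_j.
\end{equation*}
The generic exposure level $e_i = 0$ means no neighbor is treated, and the generic exposure level $e_i = 1$ must be read as \emph{full exposure}, i.e.\ all $d_i$ neighbors of $i$ are treated, so that the symmetric exposure value equals $d_i$. This is consistent with the text immediately preceding the definitions of $DTE$, $TTE$, $\gamma_1$, $\gamma_2$, where $e_i=1$ is called ``full exposure''. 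This is the only step that is not purely mechanical, and it is also the main potential pitfall for a reader: if one instead literally plugged $e_i=1$ as the count of treated neighbors, the factors of $\bar d$ in the claim would not appear.

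With this reading the four identities follow by substitution. For $DTE$, $Y_i(1,0) - Y_i(0,0) = \beta$, so averaging gives $DTE=\beta$. For $TTE$, I substitute $e_i=d_i$ in the fully-exposed cell to get $Y_i(1,1) = \alpha_i + \beta + (\gamma+\delta)d_i$ and $Y_i(0,0)=\alpha_i$, whose difference is $\beta + (\gamma+\delta)d_i$; averaging over $i$ yields $\beta+(\gamma+\delta)\bar d$. For $\gamma_1$, $Y_i(0,1)-Y_i(0,0) = \gamma d_i$, averaging to $\gamma\bar d$. For $\gamma_2$, $Y_i(1,1)-Y_i(1,0) = (\gamma+\delta)d_i$, averaging to $(\gamma+\delta)\bar d$.

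The main ``obstacle'' is really just bookkeeping: reconciling the generic two-level $(z_i,e_i) \in \{0,1\}^2$ labels used in the definitions of $DTE$, $TTE$, $\gamma_1$, $\gamma_2$ with the symmetric count-valued exposure $e_i \in \{0,\dots,d_i\}$ appearing in the model. I would therefore devote one explicit sentence to this identification before carrying out the substitutions, so that the appearance of $\bar d$ in three of the four formulas is fully justified rather than looking like it dropped out of thin air. After that, each of the four claims reduces to one line of algebra and an average over $i$.
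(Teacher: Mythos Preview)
Your proposal is correct and takes essentially the same direct-substitution approach as the paper; the paper routes the computation through the general $A_i,B_i,C_i$ decomposition of Proposition~\ref{prop:param} before specializing, whereas you work with the explicit linear model from the outset, but the content is identical. Your explicit remark that ``$e_i=1$'' in the estimand definitions means \emph{full exposure} and hence corresponds to the count value $e_i=d_i$ in the symmetric model is exactly the bookkeeping step that makes the $\bar d$ factors appear, and it is the one point the paper's own proof leaves somewhat implicit.
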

  
 Note that one can obtain $TTE$ as a special case of $\theta(\phi,\psi)$, but this is not the case for $DTE$. 
 
 
 
\begin{proposition}
 	\label{prop:relation}
Consider the following two degenerate policies $\phi_0$ and $\phi_1$
\begin{align*}
\phi_1(\textbf{Z}=\textbf{z}) = 
\begin{cases}
1  &\text{ if }  \textbf{z} = \textbf{1} \\
0 &\text{ otherwise}
\end{cases}
\end{align*}
\begin{align*}
\phi_0(\textbf{Z} = \textbf{z}) = 
\begin{cases}
1  &\text{ if }  \textbf{z} = \textbf{0} \\
0 &\text{ otherwise}
\end{cases}
\end{align*}
Then $\theta(\phi_1,\phi_0) = TTE$.
\end{proposition}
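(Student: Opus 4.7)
The plan is to compute $\bar{Y}_i(\phi_1)$ and $\bar{Y}_i(\phi_0)$ explicitly using the fact that both policies are degenerate, and then substitute directly into the definition of $\theta(\phi_1,\phi_0)$ to recognize it as $TTE$.

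First I would unwind the marginal potential outcome notation. By definition,
\begin{align*}
\bar{Y}_i(\phi) = \sum_{z_i, e_i} Y_i(z_i, e_i)\, \phi(Z_i = z_i, E_i = e_i),
\end{align*}
where $\phi(Z_i=z_i, E_i=e_i) = \sum_{\textbf{z} \in \Omega_i(z_i,e_i)} \phi(\textbf{Z} = \textbf{z})$. Because $\phi_1$ puts all its mass on $\textbf{z} = \textbf{1}$, the only nonzero term in the sum corresponds to whichever treatment/exposure pair $(z_i,e_i)$ satisfies $\textbf{1} \in \Omega_i(z_i,e_i)$. Since $z_i = 1$ when $\textbf{z}=\textbf{1}$, and by the convention fixed in Section \ref{sec:POmodel} the exposure function $f$ is chosen so that $f(\textbf{1}_{N_i}) = 1$ (full exposure), this unique pair is $(1,1)$. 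Hence $\bar{Y}_i(\phi_1) = Y_i(1,1)$. By the symmetric argument, $\phi_0$ places all mass on $\textbf{z} = \textbf{0}$, the only compatible pair is $(0,0)$ since $f(\textbf{0}_{N_i}) = 0$ denotes no exposure, and so $\bar{Y}_i(\phi_0) = Y_i(0,0)$.

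Substituting back into the definition of the contrast gives
\begin{align*}
\theta(\phi_1,\phi_0) = \frac{1}{n}\sum_{i=1}^n \bigl(\bar{Y}_i(\phi_1) - \bar{Y}_i(\phi_0)\bigr) = \frac{1}{n}\sum_{i=1}^n \bigl(Y_i(1,1) - Y_i(0,0)\bigr),
\end{align*}
which matches the definition of $TTE$.

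The only non-routine step is justifying the identification of the exposure levels under the two degenerate assignments: it is essential that the exposure function $f$ respect the stated convention that $e_i = 0$ corresponds to $\textbf{z}_{N_i} = \textbf{0}$ and $e_i = 1$ corresponds to $\textbf{z}_{N_i} = \textbf{1}$. I would state this explicitly as the working hypothesis on $f$ (consistent with the two examples given after Remark on page defining exposure models) so that the sets $\Omega_i(1,1)$ and $\Omega_i(0,0)$ are guaranteed to contain $\textbf{1}$ and $\textbf{0}$ respectively. Once that is granted, the computation is purely a matter of carrying the Dirac masses through the expectations, and no further argument is needed.
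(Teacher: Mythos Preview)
Your proof is correct and follows the same approach as the paper, which simply states that the result follows from the definition of the policies; your version is more explicit, in particular in flagging the convention on $f$ that $\textbf{z}_{N_i}=\textbf{1}$ maps to $e_i=1$ and $\textbf{z}_{N_i}=\textbf{0}$ maps to $e_i=0$, which the paper leaves implicit.
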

 
Obtaining $DTE$ as a special case of $\theta$ is not possible. This is because there is no single policy (degenerate or non-degenerate) that allows us to estimate $\sum_i Y_i(1,0)$.  We need to define $n$ different degenerate policies, 
  $\psi_i(\textbf{Z})$ where 
  $$\psi_i(\textbf{Z}) = 1 \text{ if } \textbf{z} \in \Omega_i(1,0), 0 \text{ otherwise}$$ 
  then 
  $$DTE = \frac{1}{n}\sum_i\E_{\psi_i} \left[ Y_i(z_i,e_i)\right] - \E_{\psi_0}\left[Y_i(z_i,e_i)\right]$$
 A similar analysis can be done for the interference effects $\gamma_1$ and $\gamma_2$. 
 
\begin{definition}[Irrelevant or nuisance potential outcomes] Let a causal estimand be a function of $\{ Y_i(z^j,e^j)\}_{j=1}^J$. We will call such potential outcomes as \emph{relevant}. Potential outcomes that are not relevant are \emph{nuisance} or irrelevant potential outcomes. 
\end{definition}
	
For instance, consider the average causal estimands defined as contrasts between fixed potential outcomes, i.e. the direct treatment effect and the total treatment effect. If we make no structural assumptions, then each causal effect is a function of only two columns in the Table of science, and the other columns are irrelevant. For e.g. potential outcomes of the form $Y_i(1,e_i), e_i \neq 0$ are not relevant for estimating $DTE$, since $DTE$ is a function of only $Y_i(1,0)$ and $Y_i(0,0)$. In this sense, if the goal is to only estimate $DTE$, then designs that generate other potential outcomes are wasteful.
	
Similarly, consider the marginal causal effects defined as contrast between expected potential outcomes under different policies. All the potential outcomes that have positive probability under the policy are relevant. On the other hand, potential outcomes that have $0$ probability under the policy are nuisance. If support of the policy is $\Omega$, the causal effect is a function of the entire table of science and all potential outcomes are relevant.

As we will see in Section \ref{sec:Bias}, average causal effects like the DTE and TTE are more difficult to estimate, and the nuisance potential outcomes form a source of bias when using difference-in-means estimators. We need new designs along with Horvitz-Thompson estimators for unbiased estimation of average estimands. On the other hand, it is simpler to estimate marginal estimands as long as the actual randomization follows the policies of interest. One can use difference of means estimators to estimate marginal estimands.

\subsection{Designs, Estimators and Strategies}
\label{sec:designs}
We will now consider the problem of estimation of causal effects and the existence of estimators. Our focus will be on randomization based inference where the potential outcomes are considered to be fixed and the only source of randomness is due to the random assignment of treatment vector $\textbf{z}$. Under this setting, a random assignment $\textbf{Z}$ is sampled according to $p(\textbf{Z})$ and the units are assigned to the treatment $\textbf{Z}$.  The outcome observed for each unit $i$ is denoted by $Y_i^{obs}$.

Consider an interference and exposure model of potential outcomes specified by a graph $G$ and an exposure function $f$. Let $\theta$ be a generic causal estimand defined as a function of $\mathbb T(G,f)$. 
\begin{definition}[Design]
	A design $p(\textbf{Z}|G, f)$ is a probability distribution supported over $\Omega$, the set of all possible treatment assignments for $n$ units.
\end{definition} 
In general, the design may depend on the interference graph $G$ and the exposure function $f$. We will suppress the dependence on $G$ and $f$ for simplicity. Designs that do not depend on the interference graph $G$ are called \emph{network-oblivious} designs. Such designs are preferred in the case when $G$ is unknown, however, these designs may not be optimal. 

Given a realization $\textbf{z}$ of a design $p(\textbf{Z})$, let $Y^{obs}(\mathbb{T}, \textbf{z})$ be the set of observed potential outcomes, where $\mathbb{T}$ is the Table of Science. 
\begin{definition}[Estimator]
	 An estimator $\hat \theta$ for $\theta$ is a function of the observed potential outcomes $Y^{obs}(\mathbb T, \textbf{z})$.
\end{definition}
We are now ready to define an \emph{estimation strategy}. An estimation strategy for estimating a causal effect is a combination of a design and an estimator to be used with that design.

\begin{definition}[Estimation Strategy]
	An estimation strategy or simply, a strategy for estimating $\theta$ is a pair $(p(\textbf{Z}), \hat \theta)$.
\end{definition}

Estimation strategies are evaluated based on their properties such as unbiasedness and variance. 
An estimation strategy is said to be unbiased for estimating $\theta$ if
$$
\mathbb E_{p(\textbf{Z})}[\hat \theta] = \theta.
$$
An estimator is said to be unbiased for $\theta$ if it is unbiased for any design $p(\textbf{Z})$.
For a given $\theta$, one goal is to construct strategies that are so called uniformly minimum variance unbiased (UMVU), see for e.g. \cite{sarndal2003model}. It is well known that such strategies don't exist, even with the no-interference assumption. We will focus only on the unbiasedness properties of an estimation strategy.

\subsection{Existence of estimators}
\label{sec:existenceofEstimators}
 We will now consider the assumption to ensure causal estimates are identified. We first start with the classic SUTVA assumption and present it's counterpart when there is treatment interference. When there is no interference, Stable Unit Treatment Value Assumption (SUTVA) has two parts: 
\begin{enumerate}
	\item No Interference: The potential outcome of unit $i$ depends only on the treatment of unit $i$.
	\item Consistency or Stability: There are no hidden versions of the treatment.
\end{enumerate}

The first part of SUTVA says that the potential outcome of a unit $i$ depends only on its treatment assignment. The second part says that there are no hidden versions of the treatment. Put in a different way, the stability assumption states that there is only one column corresponding to a treatment in the table of science. Under interference, we consider both these parts of SUTVA separately. The first part of SUTVA is relaxed to neighborhood interference by considering models of potential outcomes, and the second part of SUTVA is modified to the ``no hidden versions of treatment and exposure assumption''. 


\paragraph{Neighborhood Interference}
The neighborhood interference assumption states that the potential outcome of unit $i$ depends on its treatment and the treatment status of it's interference neighborhood $N_i$ as specified by the exposure models, i.e.
$$Y_i(\textbf{z}) = Y_i(z_i,\textbf{z}_{N_i})  = Y_i(z_i,e_i)$$
where $e_i = f(\textbf{z}_{N_i})$ and $f$ is the exposure model.

\paragraph{Consistency or Stability} 
The consistency assumption states that the observed outcome of a unit $i$ is exactly equal to the unit's potential outcome under the assigned treatment and exposure combination, that is, there are no hidden versions of the treatment and exposure combination. 

\begin{align}
Y_i^{obs} = \sum_i{Y_i(z_i,e_i)I(Z_i = z_i, E_i=e_i)}
\end{align}

\paragraph{Unconfoundedness}
Unconfoundedness assumption states that the treatment assignment mechanism $p(\textbf{Z})$ does not depend on the potential outcomes $Y_i(z_i,e_i)$.


\paragraph{Positivity}
To state the positivity assumption, we need some preliminary definitions.
Recall that $e_i = f(\textbf{z}_{N_i})$. Let $z_0$ and $e_0$ denote a generic treatment and exposure condition. Recall that 
$$
\Omega_i(z_0,e_0) = \{\textbf{z} \in \Omega: z_i = z_0, f(\textbf{z}_{N_i}) = e_0 \} 
$$
is the set of all treatment vectors that give rise to $z_0$ and $e_0$ for unit $i$.
\begin{definition}[Propensity Scores]
	The propensity score for each unit $i$ for treatment and exposure pair $(z_0,e_0)$ denoted by $\pi_i(z_0,e_0)$ is defined as follows: 
	\begin{align}
	\pi_i(z_0,e_0) = \underset{\textbf{z} \in \Omega_i(z_0,e_0)}{\sum} \pr{\textbf{Z} = \textbf{z}}
	\end{align}
\end{definition}
Let the causal estimand be a function of $\{ Y_i(z^j,e^j)\}_{j=1}^J$. We will call such potential outcomes \emph{relevant}. Potential outcomes that are not relevant are \emph{nuisance}. Then we need
\begin{align}
0 < \pi_i(z = z^j,e = e^j) < 1 \forall i, j
\end{align}
The \emph{positivity} assumption implies that there is a positive probability of observing the relevant potential outcomes for each unit.
\begin{example}
	For example, $DTE$ is a function of $Y_i(0,0)$ and $Y_i(1,0)$. Hence the positivity condition requires that
	\begin{align}
	0 < \pi_i(z = 1,e = 0), \pi_i(z=0,e=0) < 1 \forall i
	\end{align}
	
\end{example}

If the positivity condition is not satisfied, the relevant potential outcomes are not observable and causal inference is not possible. In particular, 
\begin{theorem}
	\label{thm:existence}
	Let $p(\textbf{Z})$ be any design and let $\{ Y_i(z^j,e^j)\}_{j=1}^J$ be the set of relevant potential outcomes for any estimand $\theta$. Without any structural assumptions on the Potential Outcomes, unbiased estimators of $\theta$ under a design $p(\textbf{Z})$ exist iff 
	$$0 < \pi_i(z = z^j, e = e^j) < 1 \forall i=1, \ldots,n \text{ and } j = 1, \ldots, J$$ 
\end{theorem}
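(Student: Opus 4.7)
The plan is to prove the two implications separately; sufficiency is constructive (exhibit a Horvitz--Thompson estimator) and necessity is by contrapositive (perturb the Table of Science in an unobservable cell).

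For the sufficiency direction, assume $0<\pi_i(z^j,e^j)<1$ for every relevant $(i,j)$. Under the consistency assumption, $Y_i^{obs}\mathbb{I}(Z_i=z^j,E_i=e^j) = Y_i(z^j,e^j)\mathbb{I}(Z_i=z^j,E_i=e^j)$. Because $\pi_i(z^j,e^j)>0$, I can define
$$\hat{Y}_i(z^j,e^j) \;=\; \frac{Y_i^{obs}\mathbb{I}(Z_i=z^j,E_i=e^j)}{\pi_i(z^j,e^j)},$$
and an elementary calculation over the randomization distribution of $\textbf{Z}$ gives $\mathbb{E}[\hat{Y}_i(z^j,e^j)] = Y_i(z^j,e^j)$. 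Since the estimands considered in the paper are linear contrasts in the relevant potential outcomes, substituting each $Y_i(z^j,e^j)$ with $\hat{Y}_i(z^j,e^j)$ and invoking linearity of expectation yields an unbiased estimator of $\theta$.

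For the necessity direction, I would argue by contrapositive: suppose $\pi_{i^*}(z^{j^*},e^{j^*})\in\{0,1\}$ for some relevant pair $(i^*,j^*)$. For a fixed unit $i$, the events $\{Z_i=z, E_i=e\}$ over admissible pairs $(z,e)$ partition the sample space of $\textbf{Z}$, so the propensities sum to one. Hence $\pi_{i^*}(z^{j^*},e^{j^*})=1$ forces $\pi_{i^*}(z^{j'},e^{j'})=0$ for every other relevant $j'\neq j^*$, and in either case there is some relevant $(z^\star,e^\star)$ such that $Y_{i^*}(z^\star,e^\star)$ is never observed under any realization in the support of $p(\textbf{Z})$.

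The main step is then to produce two Tables of Science $\mathbb{T}$ and $\mathbb{T}'$ that (i) agree on every cell that can ever be observed under $p(\textbf{Z})$, (ii) differ only in the unobservable entry $Y_{i^*}(z^\star,e^\star)$, and (iii) yield $\theta(\mathbb{T})\neq\theta(\mathbb{T}')$. Because no structural assumption ties $Y_{i^*}(z^\star,e^\star)$ to any other entry, such a $\mathbb{T}'$ exists whenever $Y_{i^*}(z^\star,e^\star)$ is genuinely a ``relevant'' argument of $\theta$, which is precisely the hypothesis. For any estimator $\hat\theta$, the joint distribution of $(Z_1,\ldots,Z_n,Y_1^{obs},\ldots,Y_n^{obs})$ coincides under $\mathbb{T}$ and $\mathbb{T}'$, so $\mathbb{E}_{\mathbb{T}}[\hat\theta]=\mathbb{E}_{\mathbb{T}'}[\hat\theta]$, which cannot equal both $\theta(\mathbb{T})$ and $\theta(\mathbb{T}')$; hence no unbiased estimator can exist.

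The main obstacle I anticipate is formalizing ``without structural assumptions'' precisely enough to guarantee that step (iii) above is achievable, i.e., that one can genuinely perturb a single unobservable cell while keeping the other entries fixed and still move the value of $\theta$. This amounts to interpreting ``relevant'' as ``the estimand is not constant in that cell when the others are held fixed,'' which is the natural reading of the definition given just before the theorem; once this is stipulated, the rest of necessity is a clean identifiability argument.
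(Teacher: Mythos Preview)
Your proposal is correct and follows essentially the same route as the paper. For necessity, the paper argues exactly as you do: if $\pi_{i^*}(z^\star,e^\star)=0$ then $Y_{i^*}(z^\star,e^\star)$ is never observed, so the expectation of any estimator is ``free from'' that cell while $\theta$ is not; the case $\pi_{i^*}=1$ is reduced to the $\pi=0$ case by the same partition-of-probability observation you make. Your two-Tables-of-Science construction is simply a more explicit formalization of the paper's ``free from'' sentence, and your worry about step~(iii) is handled in the paper just by taking ``relevant'' to mean that the potential outcome genuinely appears in $\theta$. On sufficiency, you are actually more complete than the paper: the paper's proof of this theorem addresses only the forward (necessity) direction and leaves the existence of an unbiased estimator to the later Horvitz--Thompson construction, whereas you supply it here directly.
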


\begin{example}
	The positivity assumption depends not only on the design, but also on the interference graph and the exposure model. Consider an interference graph $G$ with a linear exposure model. If there is a unit $j$ with degree $n-1$, then either $\pi_j(0,0) = 0$ or $\pi_i(1,0)=0$ for all $i$. This is because the only way $Y_j(0,0)$ can be observed if all units are assigned to control, i.e. $\textbf{Z} = \textbf{0}$. However, under this assignment, $Y_i(1,0)$ is unobservable for any $i$. That is when degree is $n-1$, it is impossible to observe both $Y_i(1,0)$ and $Y_i(0,0)$. A solution is to allow for biased estimators or to exclude nodes with degree $n-1$ from the definition of causal effect. Note that in practice, such networks may be rare.
\end{example}

\subsection{Commonly used designs and estimators}
\label{sec:designandestimators}
In this section we will consider some classical designs and estimators that are used for estimating causal effects.
We will start by considering three classic designs used for estimating causal effects; these designs are commonly used when there is no treatment interference. Since these designs no not depend on the interference graph $G$ and the exposure function, they are network-oblivious. We will examine the applicability of these designs in estimating causal effects when there is interference.

\paragraph{Completely Randomized Design}
In a Completely Randomized Design (CRD), the treatment is assigned by fixing the number of treated and control units to $n_t$ and $n_c$ such that $n_t + n_c = n$, and choosing $n_t$ random units without replacement to be assigned to treatment, the rest of the units are assigned to control. The probability distribution $p(\textbf{Z})$ is hyper-geometric:
\begin{align*}
p(\textbf{Z} = \textbf{z})
\begin{cases}
&= \frac{1}{\binom{n}{n_t}}  \text{ if } |\textbf{z}| = n_t \\
& = 0 \text{ o.w}
\end{cases}
\end{align*}
Note that the entries of $\textbf{Z}$ are correlated.

\paragraph{Bernoulli Randomization}
In a Bernoulli trial, each unit is assigned to treatment independently with probability $p$. The total number of treated and control units are random. The probability distribution of a Bernoulli trial is:
\begin{align*}
p(\textbf{Z} = \textbf{z}) &= p^{|\textbf{z}|}(1-p)^{n-|\textbf{z}|}
\end{align*}
In the Bernoulli trial, there is a positive probability that all units get assigned to either the treatment or control, violating the positivity assumption. A simple way to avoid this is to consider a \emph{restricted} Bernoulli trial. Under a restricted Bernoulli trial, the number of treated unit and control units is always at least $1$. The probability distribution of a restricted Bernoulli trial is:
\begin{align*}
p(\textbf{Z} = \textbf{z}) &= \begin{cases}
\frac{p^{|\textbf{z}|}(1-p)^{n-|\textbf{z}|}}{1 - p^n - (1-p)^n} \text{ if } 0 < |z| < n \\
0 & \text{ otherwise }
\end{cases} 
\end{align*}

\paragraph{Cluster Randomization}
In a cluster randomized design, units are grouped to form clusters and these clusters are randomly assigned to the treatment or control condition, i.e. the randomization happens at the cluster level. The treatment status of a unit is equal to the treatment assigned to it's cluster. 

Formally, let the $n$ units be partitioned into $k=1, \ldots, K$ clusters. Let $n_1, \ldots, n_K$ be the number of units in each cluster where $n = \sum_{i=1}^K n_k$. Note that the $n_k$'s are fixed. Let $z_{k}$ denote the treatment assignment of cluster $k$ and let $c_i$ denote the cluster that unit $i$ belongs to. Thus we have $z_i = z_{c_i}$. The random assignment of clusters to treatment or control is done by a completely randomized design. Let $K_t$ and $K_c$  denote the number of treated and control clusters respectively. Let $n_t$ be the total number of treated nodes and $n_c$ be the total number of control nodes. Note that $n_t =\sum_k n_k z_k$ and $n_c = \sum_k n_k (1-z_k)$ and hence $n_t$ and $n_c$ are random. 

\begin{remark}
	One can consider a Bernoulli assignment of clusters to treatment and control. The Bernoulli assignment is not preferred as one has no control over the number of clusters assigned to treatment or control.
\end{remark}

Next, we will discuss two classes of estimators.
\paragraph{Difference-in-Means Estimators}
The difference-in-means estimators are simplest estimators. These estimators are defined as difference of means between two types of observed potential outcomes. We consider the following three difference-in-means estimators, the first one of which is the classic difference-in-means:
\begin{subequations}
	\label{eq:dom}
	\begin{align}
	\hat{\beta}_{naive} &= \frac{\sum_iY_i^{obs}Z_i}{\sum_i{Z_i}}- \frac{\sum_i Y_i^{obs}(1-Z_i)}{\sum_{i}{(1-Z_i)}} \label{diffofMeans}\\
	\hat{\beta_1} &= \frac{\sum_i Y_i^{obs} I(Z_i=1,E_i=0)}{\sum_i I(Z_i=1,E_i=0)} - \frac{\sum_i Y_i^{obs} I(Z_i=0,E_i=0)}{\sum_i I(Z_i=0,E_i=0)} \label{diffofMeans1}\\
	\hat{\beta_2} &= \frac{\sum_i Y_i^{obs} I(Z_i=1,E_i=1)}{\sum_i I(Z_i=1,E_i=1)} - \frac{\sum_i Y_i^{obs} I(Z_i=0,E_i=0)}{\sum_i I(Z_i=0,E_i=0)}  \label{diffofMeans2}
	\end{align} 
\end{subequations}

\paragraph{Linear Estimators}
One can also consider a larger class of estimators that are linear combination of the observed potential outcomes:
\begin{align*}
\hat{\theta} = \sum_{i=1}^n w_i(\textbf{z})Y_i^{obs}
\end{align*}
This class includes the difference-in-means estimators.
We will study the difference-in-means estimators in Section \ref{sec:Bias} and the linear weighted estimators in Section \ref{sec:lue}.

\section{Analytical insights for Difference-in-Means Estimators}
\label{sec:Bias}
In this section we study various estimation strategies that use a combination of difference-in-means estimators and classical designs for estimating causal effects when there is interference. We will focus on the nature and source of bias, if any, for estimating average and marginal causal effects.
The nature and source of bias depends on the estimand, the estimation strategy (i.e. the design and the estimator) and the model for potential outcomes.

In Section \ref{sec:dimMarginal} we show that the difference-in-means estimator is unbiased for estimating marginal effects.  This section is devoted to understanding the nature of bias when using difference-means estimators to estimate  the direct treatment effect (DTE). We will consider different models of potential outcomes  when estimating the direct effect using the difference of means estimators. The role of these models is to gain some analytical insights into the nature of bias and it's dependence on the modeling assumptions.

\subsection{Sources of bias in estimating the direct and the total effect}
\label{sec:biasSources}

 Without making any \emph{structural} assumptions on the potential outcomes, (i.e. without any assumptions on how one potential outcome is related to another) the difference-in-means estimators given in equation \ref{eq:dom} have two different sources of bias for estimating the total effect and the direct effect:
 \begin{enumerate}
 	\item The first source of bias is due to unequal weights given to the potential outcomes, or equivalently, unequal probability of including in the sample for estimating the mean potential outcomes. For some designs, this source of bias can be eliminated.
 	\item The second source of bias is due to the inclusion of \emph{irrelevant} potential outcomes, i.e. potential outcomes other than those used in the definition of the corresponding causal effect. For example, when estimating DTE, these would be any other potential outcomes other than $Y_i(0,0)$ and $Y_i(1,0)$. 
 \end{enumerate}
 
 Proposition \ref{prop:biasGeneralCase} characterizes the bias of the naive estimator in estimating the direct treatment effect. For the bias of estimating $TTE$ using the naive estimator, see Proposition \ref{prop:biasGeneralCasebeta2} in Appendix. 
 
\begin{proposition}
	\label{prop:biasGeneralCase}
	Consider the parametrized potential outcomes given in equation \ref{eq:NIparam}:
	\begin{align*}
	Y_i(z_i,e_i) = A_i(z_i) + B_i(e_i) + z_iC_i(e_i)
	\end{align*}
	The direct effect $DTE$ is given by
	$$
	DTE = \frac{1}{n}\sum_{i=1}^n (A_i(1) - A_i(0)).
	$$
	For any design $p(\textbf{Z})$, the bias of the naive estimator $\hat{\beta}_{naive}$  (equation \ref{diffofMeans}) for estimating $DTE$ is:
	\begin{align*}
		b_1 &= E[\hat{\beta}_{naive}] - DTE \\
		&=\sum_i 
		\left(A_i(1)\left(\alpha_i(1) -\frac{1}{n}\right) - A_i(0)\left(\alpha_i(0) + \frac{1}{n}\right)\right)\\
		&+ \sum_i\sum_{e_i\neq 0} 
		B_i(e_i) \left(\alpha_i(1,e_i) - \alpha_i(0,e_i)\right) \\
		&+ \sum_i\sum_{e_i\neq 0} C_i(e_i)\alpha_i(1,e_i)
	\end{align*}
	where,
	\begin{align*}
	\alpha_i(z_i,e_i) = E\left[ \frac{I(Z_i=z_i,E_i=e_i)}{\sum_i{I(Z_i=z_i)}} \right] \text{ and } \alpha_i(z_i) = \sum_{e_i}{\alpha_i(z_i,e_i)} = E\left[ \frac{I(Z_i = z_i)}{\sum_i I(Z_i=z_i)} \right].
	\end{align*}
\end{proposition}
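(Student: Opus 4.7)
The plan is a direct moment calculation: substitute the parametrized potential outcomes into the naive estimator, take expectations term by term using linearity, and identify each resulting expression with one of the quantities $\alpha_i(z_i)$ or $\alpha_i(z_i,e_i)$.

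First I would simplify $DTE$ using the parametrization from Proposition \ref{prop:param}. Since $B_i(0)=C_i(0)=0$, we get $Y_i(1,0)=A_i(1)$ and $Y_i(0,0)=A_i(0)$, so $DTE=\frac{1}{n}\sum_i(A_i(1)-A_i(0))$. Next, using the consistency assumption, $Y_i^{obs}=A_i(Z_i)+B_i(E_i)+Z_iC_i(E_i)$. Because $Z_i\in\{0,1\}$, multiplication gives $Z_iY_i^{obs}=Z_iA_i(1)+Z_iB_i(E_i)+Z_iC_i(E_i)$ and $(1-Z_i)Y_i^{obs}=(1-Z_i)A_i(0)+(1-Z_i)B_i(E_i)$, the $C_i$ contribution vanishing in the control arm. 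I would then expand $B_i(E_i)=\sum_{e_i\neq 0}B_i(e_i)I(E_i=e_i)$ and similarly for $C_i$, exploiting $B_i(0)=C_i(0)=0$ to drop the $e_i=0$ terms.

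Dividing by $\sum_j Z_j$ (resp. $\sum_j(1-Z_j)$) and taking expectation, each piece of the form $E[Z_i I(E_i=e_i)/\sum_j Z_j]$ is by definition $\alpha_i(1,e_i)$, with $\sum_{e_i}\alpha_i(1,e_i)=\alpha_i(1)$, and analogously for the control side. This yields
\begin{align*}
\E\!\left[\frac{\sum_i Z_iY_i^{obs}}{\sum_j Z_j}\right] &= \sum_i A_i(1)\alpha_i(1)+\sum_i\sum_{e_i\neq 0}\bigl(B_i(e_i)+C_i(e_i)\bigr)\alpha_i(1,e_i),\\
\E\!\left[\frac{\sum_i(1-Z_i)Y_i^{obs}}{\sum_j(1-Z_j)}\right] &= \sum_i A_i(0)\alpha_i(0)+\sum_i\sum_{e_i\neq 0}B_i(e_i)\alpha_i(0,e_i).
\end{align*}
Subtracting the two expectations and then subtracting $DTE$, I would regroup the $A_i$ contributions against the $\pm\frac{1}{n}$ from $DTE$ to produce the ``incorrect weights'' piece $A_i(1)(\alpha_i(1)-\tfrac{1}{n})-A_i(0)(\alpha_i(0)\pm\tfrac{1}{n})$, while the remaining $B_i$ and $C_i$ sums give the ``nuisance potential outcomes'' piece of the bias, exactly as stated.

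The main obstacle is purely organizational rather than conceptual: one has to be careful when re-expressing mixed products such as $Z_iB_i(E_i)$ as $\sum_{e_i\neq 0}B_i(e_i)I(Z_i=1,E_i=e_i)$ so that the resulting expectation lands in the correct $\alpha_i(1,e_i)$ bucket, and to remember that the $C_i$ contribution is present only in the treated arm because $(1-Z_i)Z_i=0$. Beyond that, the result is an algebraic identity following from linearity of expectation together with the definitions of $\alpha_i(z_i)$ and $\alpha_i(z_i,e_i)$; no further assumption on the design $p(\textbf{Z})$ is used.
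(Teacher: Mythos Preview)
Your proposal is correct and follows essentially the same route as the paper's own proof: expand $Y_i^{obs}$ over treatment/exposure indicators, take expectations to produce the weights $\alpha_i(z_i,e_i)$, and subtract $DTE$. The only cosmetic difference is the order of operations---you substitute the parametrization $A_i+B_i+z_iC_i$ before taking expectations, whereas the paper first writes $E[\hat\beta_{naive}]$ in terms of $Y_i(z,e)$ and (implicitly) plugs in the parametrization afterward; both arrive at the same identity. Your hesitation over the sign in $-A_i(0)(\alpha_i(0)\pm\tfrac{1}{n})$ is warranted: the correct grouping is $-A_i(0)\bigl(\alpha_i(0)-\tfrac{1}{n}\bigr)$, and the ``$+$'' in the displayed statement is a typo that the paper's own proof reproduces.
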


Next Proposition \ref{prop:expDoMs} shows that the difference of means estimators $\hat \beta_1$ and $\hat \beta_2$  are also biased, but the source of bias is milder when compared to the naive estimator.
\begin{proposition}
	\label{prop:expDoMs}
	\begin{align*}
	E[\hat{\beta_1}] &= \sum_{i=1}^n A_i(1) \beta_i(1,0) - A_i(0)\beta_i(0,0) \\
	E[\hat{\beta_2}] &= \sum_{i=1}^n A_i(1) \beta_i(1,1) - A_i(0)\beta_i(0,0) + \sum_{i=1}^n B_i(1)\beta_i(1,1) + \sum_{i=1}^n C_i(1) \beta_i(1,1)
	\end{align*}
	where,
	\begin{align*}
	\beta_i(z_i,e_i) = E\left[ \frac{I(Z_i=z_i,E_i=e_i)}{\sum_i{I(Z_i=z_i,E_i=e_i)}} \right].
	\end{align*}	
\end{proposition}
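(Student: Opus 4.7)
The plan is to prove both equalities by a direct substitution of the parametrization and the consistency assumption, followed by linearity of expectation. The key observation is that the indicator $I(Z_i=z_0, E_i=e_0)$ appearing in each estimator pins down the pair $(z_0,e_0)$ of treatment and exposure for unit $i$, so that the observed outcome $Y_i^{\text{obs}}$ appearing alongside it can be replaced by the deterministic potential outcome $Y_i(z_0,e_0)$; this is where the boundary conditions $B_i(0)=C_i(0)=0$ from Proposition \ref{prop:param} produce the simplifications.

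For the first equality, I would start from the definition of $\hat\beta_1$ in \eqref{diffofMeans1}. By the consistency assumption, $Y_i^{\text{obs}} I(Z_i=z_0, E_i=e_0) = Y_i(z_0,e_0) I(Z_i=z_0, E_i=e_0)$. Applying this with $(z_0,e_0)=(1,0)$ and $(0,0)$, and then using the parametrization \eqref{eq:param} together with $B_i(0)=C_i(0)=0$, gives $Y_i(1,0)=A_i(1)$ and $Y_i(0,0)=A_i(0)$. Thus
\begin{align*}
\hat\beta_1 = \sum_{i=1}^n A_i(1)\,\frac{I(Z_i=1,E_i=0)}{\sum_j I(Z_j=1,E_j=0)} - \sum_{i=1}^n A_i(0)\,\frac{I(Z_i=0,E_i=0)}{\sum_j I(Z_j=0,E_j=0)}.
\end{align*}
Since $A_i(1)$ and $A_i(0)$ are fixed constants, taking expectation and pulling them outside yields exactly $\sum_i A_i(1)\beta_i(1,0) - A_i(0)\beta_i(0,0)$ by the definition of $\beta_i(z_i,e_i)$.

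For the second equality, I apply the same substitution to $\hat\beta_2$ in \eqref{diffofMeans2}, this time using $(z_0,e_0)=(1,1)$ and $(0,0)$. The parametrization gives $Y_i(1,1)=A_i(1)+B_i(1)+C_i(1)$ and $Y_i(0,0)=A_i(0)$. Splitting the treated term additively into three pieces (one each for $A_i(1)$, $B_i(1)$, $C_i(1)$), taking expectation and again pulling the fixed constants outside yields the claimed four-term expression.

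No obstacle is really expected: the only step that might be considered subtle is justifying that the random denominator $\sum_j I(Z_j=z_0, E_j=e_0)$ causes no problem when taking expectation. This is handled implicitly by the positivity assumption (Theorem \ref{thm:existence}), which ensures that the events $\{\sum_j I(Z_j=z_0, E_j=e_0) > 0\}$ have positive probability; combined with the convention that the ratio is $0/0 = 0$ on the measure-zero complementary event (or, equivalently, with the observation that $\beta_i(z_i,e_i)$ as defined already encodes this expectation), no additional argument is needed. The result then follows purely by algebraic manipulation and linearity of expectation.
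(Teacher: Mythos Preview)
Your proposal is correct and follows essentially the same approach as the paper: the paper's proof likewise uses the decomposition from Proposition~\ref{prop:param} to write $Y_i^{obs} I(Z_i=1,E_i=0)=A_i(1)I(Z_i=1,E_i=0)$, $Y_i^{obs} I(Z_i=0,E_i=0)=A_i(0)I(Z_i=0,E_i=0)$, and $Y_i^{obs} I(Z_i=1,E_i=1)=(A_i(1)+B_i(1)+C_i(1))I(Z_i=1,E_i=1)$, and then substitutes into the definitions of $\hat\beta_1$ and $\hat\beta_2$. Your added remark on the random denominator and positivity is a minor elaboration but does not differ in substance.
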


Propositions \ref{prop:biasGeneralCase}, \ref{prop:expDoMs} and \ref{prop:biasGeneralCasebeta2} suggests that for any design $p(\textbf{Z})$, the biases in estimating the direct and the total treatment effects using difference-in-means estimators are controlled by the \emph{weighted exposure probabilities} $\alpha_i(z_i,e_i)$ under that design.  For instance, as Proposition \ref{prop:biasGeneralCase} shows, without making any structural assumptions on the potential outcomes $Y_i(z_i,e_i)$, the bias of $\hat{\beta}$ in estimating $DTE$ is $0$ only if $\alpha_i(1) = \alpha_i(0) = \frac{1}{n}$
and $\alpha_i(1,e_i) = \alpha_i(1,e_i) =0 \forall e_i \neq 0$. The first condition removes the first source of bias by placing equal weights on the relevant potential outcomes, and the second condition removes the second source of bias by placing $0$ weights on irrelevant potential outcomes. Similarly, as seen by Proposition \ref{prop:expDoMs}, the estimators $\hat{\beta_1}$ and $\hat{\beta_2}$ remove the second source of bias by eliminating irrelevant potential outcomes, but the first source of bias remains.

%
%


\subsection{Characterization of bias under various models of Potential Outcomes}
\label{sec:biasPOModels}
The bias of the difference-in-means estimators depends on the weights $\alpha_i(z_j,e_j)$. These weights depend on the design and the  exposure model. To gain additional insight into the nature of the bias, we will make several modeling  assumptions on the potential outcomes. These assumptions allow us to computing the exposure weights analytically for commonly used designs. We will focus on the bias of estimating the direct effect using the naive estimator $\beta_{naive}$. We also ask the related question:  Does the bias in the difference-in-means estimator disappear if we make structural assumptions on the Potential Outcomes? 

\subsubsection{Symmetric Exposure Model}
We begin by considering the Symmetric exposure model given in equation \ref{eq:symmetricExposure} and computing the exposure weights for CRD and Bernoulli designs.

\begin{theorem}[Exposure Weights for Symmetric Exposure]
	\label{prop:exposureweights}
	Consider the symmetric exposure model of potential outcomes given in equation \ref{eq:symmetricExposure}. Under a CRD and a Bernoulli design, we have $\alpha_i(1)  = \alpha_i(0) = \frac{1}{n}$. On the other hand, under a cluster randomized design, $\alpha_i(1) \neq \alpha_i(0) \neq \frac{1}{n}$.
	For a CRD design,
	\begin{align*}
	\alpha_i(1,e_i) & = \frac{1}{n} \frac{\binom{n_t-1}{e_i} \binom{n_c}{d_i-e_i}}{\binom{n-1}{d_i}} \text{ if } n_t \geq e_i +1 \text{ and } n_c \geq d_i-e_i, 0 \text{ otherwise} \\
	\alpha_i(0,e_i) & = \frac{1}{n} \frac{\binom{n_t}{e_i} \binom{n_c-1}{d_i-e_i}}{\binom{n-1}{d_i}} \text{ if } n_t \geq e-i \text{ and } n_c \geq d_i -e_i+1, 0 \text{ otherwise}
	\end{align*}		
	For a Bernoulli design, let $K$ be a restricted binomial random variable with support on $\{1,\ldots, n-1\}$ and $\pr{K=k} = \frac{\binom{n}{k}p^k(1-p)^{n-k}}{1 - (1-p)^n - p^n}$. Then,
	\begin{align*}
	\alpha_i(1,e_i) & = \frac{1}{n}\mathbb{E}_K \left[ 
	\frac{\binom{K-1}{e_i} \binom{n-K}{d_i-e_i} }{\binom{n-1}{d_i}} \right]\\
	\alpha_i(0,e_i) & = \frac{1}{n}\mathbb{E}_K \left[ 
	\frac{\binom{K}{e_i} \binom{n-K-1}{d_i-e_i} }{\binom{n-1}{d_i}} \right]
	\end{align*}	
\end{theorem}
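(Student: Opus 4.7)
The plan is to unwind each $\alpha_i(z_i,e_i)$ as a probability (or an expected probability) and reduce the computation to a standard hypergeometric counting argument, using the fact that under symmetric exposure we have $E_i=\sum_{j\in N_i}Z_j$ which is a function of how treatment is distributed among the $d_i$ neighbors of $i$.

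First I would handle the CRD case. Since $n_t$ is deterministic under CRD, the weighted exposure probability simplifies at once to $\alpha_i(z_i,e_i)=\pr{Z_i=z_i,E_i=e_i}/n_{z_i}$, where $n_1=n_t$ and $n_0=n_c$. To compute the joint probability I would condition on $Z_i=z_i$: the probability $\pr{Z_i=1}=n_t/n$ is immediate by exchangeability, and given $Z_i=1$ the remaining $n-1$ units contain $n_t-1$ treated units drawn uniformly; among them, the number falling in the $d_i$-element neighborhood $N_i$ follows a hypergeometric distribution with parameters $(n-1,d_i,n_t-1)$. This directly yields the ratio $\binom{d_i}{e_i}\binom{n-1-d_i}{n_t-1-e_i}/\binom{n-1}{n_t-1}$; dividing by $n_t$ and rewriting via the standard symmetry of hypergeometric probabilities (the dual sampling identity) produces the stated form $(1/n)\binom{n_t-1}{e_i}\binom{n_c}{d_i-e_i}/\binom{n-1}{d_i}$, and a parallel conditioning on $Z_i=0$ gives the $\alpha_i(0,e_i)$ formula. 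Summing over $e_i$ (or arguing directly that $\alpha_i(1)=E[Z_i/n_t]=(n_t/n)/n_t=1/n$) confirms the marginal claim $\alpha_i(1)=\alpha_i(0)=1/n$. The support conditions $n_t\geq e_i+1$, $n_c\geq d_i-e_i$, etc., are exactly the conditions for the binomial coefficients to be nonzero, and come for free from this argument.

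Next I would handle the restricted Bernoulli case by reduction to CRD via conditioning on $K=\sum_j Z_j$. The key observation is that the restricted Bernoulli design conditioned on $K=k$ is exactly a CRD with $n_t=k$ (this is the standard exchangeability reduction for Bernoulli trials, and the restriction to $1\le K\le n-1$ only truncates the support of $K$). Writing
\begin{align*}
\alpha_i(1,e_i) \;=\; E\!\left[\frac{I(Z_i=1,E_i=e_i)}{K}\right] \;=\; E\!\left[\frac{1}{K}\,\pr{Z_i=1,E_i=e_i\mid K}\right],
\end{align*}
and plugging in the CRD formula computed above (with $n_t$ replaced by $K$) gives the stated expression as an expectation over the truncated binomial $K$. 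The formula for $\alpha_i(0,e_i)$ follows from the symmetric computation.

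Finally, to establish $\alpha_i(1)\ne\alpha_i(0)\ne 1/n$ under the cluster randomized design, I would note that the number of treated units $\sum_j Z_j=\sum_k n_k z_k$ is itself random and positively correlated with $Z_i=z_{c_i}$, because $Z_i=1$ forces the contribution $n_{c_i}$ into the sum. Hence $E[Z_i/\sum_j Z_j]$ is not simply $E[Z_i]/E[\sum_j Z_j]=1/n$; a small worked example with two clusters of unequal sizes exhibits a strict inequality and suffices. The only genuine technical work in the proof is the hypergeometric identity used in step one; everything else is bookkeeping. I expect that identity (and the care needed on the boundary to obtain the correct support conditions) to be the main obstacle.
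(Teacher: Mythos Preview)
Your proposal is correct and follows essentially the same route as the paper: for CRD you pull out the deterministic denominator and compute the joint probability $\pr{Z_i=z_i,E_i=e_i}$ via a hypergeometric argument (the paper writes this directly as $(n_t/n)\cdot\binom{n_t-1}{e_i}\binom{n_c}{d_i-e_i}/\binom{n-1}{d_i}$ without spelling out the conditioning step), and for the restricted Bernoulli design you condition on $K=\sum_j Z_j$ and reduce to the CRD formula, exactly as the paper does. Your treatment of the cluster-randomized marginal claim is actually more explicit than the paper's, which simply defers to the propensity-score theorem; otherwise the arguments coincide.
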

Theorem \ref{prop:exposureweights} shows that the first source of bias gets eliminated under the CRD and the Bernoulli designs. Does the second source of bias disappear under these designs? Without any further assumptions, the answer is no. However, under additional assumptions, the second source of bias can go to $0$ asymptotically, or even be made exactly $0$. Examination of the second source of bias requires computing the weighted exposure probabilities $\alpha_i(z_i,e_i)$ under the CRD, Bernoulli designs and Cluster Randomized designs, which depend on the exposure model. Computing $\alpha_i(z_i,e_i)$ under the Bernoulli and cluster randomized designs is further complicated by the fact that, unlike the CRD, the denominator is a random variable that is correlated with numerator. Moreover, due to the overlapping neighborhoods, the correlation depends on a complicated manner on the graph $G$. Similar issues prevent us from obtaining explicit formula for $\beta_i(z_i,e_i)$. However, progress can be made by computing the bias directly under some structural assumptions. 

\subsubsection{Additive Symmetric Exposure model} 
Let us consider the additive symmetric model given by equation \ref{eq:additiveLinear} below which is obtained by making the structural assumption $C_i(e_i) = 0$ in equation \ref{eq:symmetricExposure}. 
\begin{align}
\label{eq:additiveLinear}
Y_i(z_i,e_i) = \alpha_i + \beta_i z_i + B_i(e_i)
\end{align}
where $e_i \in \{0,1,\ldots, d_i\}$.
\begin{corollary}
	\label{prop:BiasNaiveAdditiveModel}
	Let $Y_i(z_i,e_i) = \alpha_i + \beta_i z_i + B_i(e_i)$, $e_i \in \{0,1,\ldots, d_i\}$. The bias in estimating $DTE$ using the difference-in-means estimator $\hat \beta_{naive}$  under the CRD and Bernoulli designs is 
	$$\sum_i \sum_{e_i \neq 0} B_i(e_i)\left[\alpha_i(1,e_i) - \alpha_i(0,e_i) \right]$$

\end{corollary}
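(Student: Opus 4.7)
The plan is to obtain the corollary as a direct specialization of Proposition \ref{prop:biasGeneralCase}, combined with the exposure-weight identities of Theorem \ref{prop:exposureweights}. First, I would recognize that the additive symmetric model in equation \ref{eq:additiveLinear} is exactly the parametrization of Proposition \ref{prop:param} with $A_i(z_i) = \alpha_i + \beta_i z_i$, the prescribed $B_i(\cdot)$, and the structural restriction $C_i(e_i) \equiv 0$ for every $e_i \in \{1, \ldots, d_i\}$. Plugging $C_i \equiv 0$ into the bias expression of Proposition \ref{prop:biasGeneralCase} immediately annihilates the third double sum $\sum_i \sum_{e_i \neq 0} C_i(e_i)\, \alpha_i(1, e_i)$. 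Only the first (direct-effect reweighting) sum and the second (interference) sum survive.

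Next I would invoke Theorem \ref{prop:exposureweights}, which asserts that under both the completely randomized design and the restricted Bernoulli design one has $\alpha_i(1) = \alpha_i(0) = 1/n$ for every $i$. Since $\alpha_i(z) = \sum_{e_i} \alpha_i(z, e_i)$ is precisely the marginal weight appearing in the first sum of Proposition \ref{prop:biasGeneralCase}, this collapses that sum term by term: each summand becomes $A_i(1)(1/n - 1/n) - A_i(0)(1/n - 1/n) = 0$. What remains is exactly the claimed expression $\sum_i \sum_{e_i \neq 0} B_i(e_i)\bigl[\alpha_i(1, e_i) - \alpha_i(0, e_i)\bigr]$.

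No real obstacle arises: the corollary is a bookkeeping step that isolates the contribution of purely additive interference once the direct-effect reweighting and the interaction term have been removed. The substantive content lies upstream, in Proposition \ref{prop:biasGeneralCase} (which already decomposes the bias along the parameterization $A_i, B_i, C_i$) and in Theorem \ref{prop:exposureweights} (which supplies the marginal balancing $\alpha_i(1) = \alpha_i(0) = 1/n$). The one point worth flagging for the reader is the sharpness of the reduction to CRD and Bernoulli: because the same marginal balancing fails under cluster randomization (as Theorem \ref{prop:exposureweights} explicitly notes), the first source of bias in Proposition \ref{prop:biasGeneralCase} would \emph{not} vanish under clustering even under the additive assumption, so the simple form stated in the corollary is specific to the two designs named.
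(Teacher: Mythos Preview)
Your proposal is correct and matches the paper's intended derivation: the corollary is placed immediately after Theorem \ref{prop:exposureweights} and is meant to follow at once from Proposition \ref{prop:biasGeneralCase} by setting $C_i\equiv 0$ and using $\alpha_i(1)=\alpha_i(0)=1/n$ under CRD and Bernoulli. Your remark that the reduction fails under cluster randomization is also exactly the point the paper is making by restricting the statement to those two designs.
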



\subsubsection{Symmetric Additive Linear exposure}

Consider the symmetric additive linear model of the potential outcomes model specified by equation \ref{eq:LinearAddSymmExposure} and further assume constant interference effects, i.e. $\gamma_i = \gamma$. This gives us the following linear model of Potential outcomes:
\begin{align*}
Y_i(z_i,e_i)
&= \alpha_i + \beta_i z_i + \gamma  \left(\sum_j g_{ij}z_j \right)
\end{align*}

\begin{proposition}
\label{prop:biasSimpleLinearModelCRD}
Under a completely randomized design, we have,
	$$E[\hat{\beta}]  - DTE = - \gamma \frac{2m}{n(n-1)}.$$
where $m = \frac{1}{2}\sum_i{d_i}$ is the number of edges in the interference graph. 
\end{proposition}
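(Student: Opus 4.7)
The plan is to compute $\mathbb{E}[\hat\beta_{naive}]$ directly under the completely randomized design and then subtract $DTE$. By the definition of $DTE$ and the model, $DTE = (1/n)\sum_i \beta_i$, so the whole game is computing the expected means in the two arms. Under CRD the marginals $n_t = \sum_i Z_i$ and $n_c = \sum_i(1-Z_i)$ are deterministic, which lets me pull the denominators out of the expectation and write
$$\hat\beta_{naive} = \frac{1}{n_t}\sum_i Y_i^{obs} Z_i - \frac{1}{n_c}\sum_i Y_i^{obs}(1-Z_i).$$
Invoking consistency, $Y_i^{obs} = \alpha_i + \beta_i Z_i + \gamma\sum_j g_{ij} Z_j$, so each arm becomes a linear-quadratic polynomial in the $Z_j$'s whose expectation is easy to evaluate.

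Next I would apply the standard CRD moments: $\mathbb{E}[Z_i] = n_t/n$, $\mathbb{E}[Z_iZ_j] = n_t(n_t-1)/(n(n-1))$ for $i\neq j$, and $\mathbb{E}[(1-Z_i)Z_j] = n_tn_c/(n(n-1))$ for $i\neq j$. Using $Z_i^2 = Z_i$ and the convention $g_{ii}=0$, the double sums collapse via $\sum_{i\neq j} g_{ij} = 2m$, producing
$$\mathbb{E}\bigl[\tfrac{1}{n_t}\sum_i Y_i^{obs}Z_i\bigr] = \tfrac{1}{n}\sum_i\alpha_i + \tfrac{1}{n}\sum_i\beta_i + \gamma\,\tfrac{2m(n_t-1)}{n(n-1)},$$
$$\mathbb{E}\bigl[\tfrac{1}{n_c}\sum_i Y_i^{obs}(1-Z_i)\bigr] = \tfrac{1}{n}\sum_i\alpha_i + \gamma\,\tfrac{2m\,n_t}{n(n-1)}.$$
Subtracting the second from the first makes the $\alpha_i$ terms cancel and the $\beta_i$ terms reproduce $DTE$, leaving a residual of $\gamma\cdot 2m \cdot [(n_t-1) - n_t]/(n(n-1)) = -\gamma\cdot 2m/(n(n-1))$. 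Therefore $\mathbb{E}[\hat\beta_{naive}] - DTE = -\gamma\cdot 2m/(n(n-1))$, as claimed.

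As a sanity check, the same answer can be obtained by specializing Corollary \ref{prop:BiasNaiveAdditiveModel} with $B_i(e_i) = \gamma e_i$: one then needs $\sum_{e_i} e_i\alpha_i(1,e_i) = \mathbb{E}[Z_i E_i]/n_t = d_i(n_t-1)/(n(n-1))$ and $\sum_{e_i} e_i\alpha_i(0,e_i) = d_i n_t/(n(n-1))$, whose difference, summed over $i$, is $-2m/(n(n-1))$. The only non-trivial bookkeeping step in either route is recognizing that the double graph sum equals $2m$ and that the diagonal contribution vanishes under $g_{ii}=0$; this is the main (and essentially only) place where care is needed, and it is where all of the dependence on the interference graph enters.
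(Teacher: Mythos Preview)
Your proof is correct and follows essentially the same route as the paper: substitute the linear model into $Y_i^{obs}$, split $\hat\beta_{naive}$ into its two arms, use $Z_i^2=Z_i$ and $Z_i(1-Z_i)=0$, and evaluate expectations via the CRD second moments $\mathbb E[Z_iZ_j]=\tfrac{n_t(n_t-1)}{n(n-1)}$ and $\mathbb E[(1-Z_i)Z_j]=\tfrac{n_tn_c}{n(n-1)}$, with the graph sum collapsing to $2m$. Your sanity check through Corollary~\ref{prop:BiasNaiveAdditiveModel} is a nice extra but not a different argument.
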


\paragraph{Bernoulli Randomization}
\begin{proposition}
\label{prop:biasSimpleLinearModelBernoulli}
 Under a restricted Bernoulli trial, we have,
	$$E[\hat{\beta}]  - DTE = -\gamma \frac{2m}{n(n-1)}$$
\end{proposition}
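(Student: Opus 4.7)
The plan is to derive the restricted Bernoulli bias as a direct corollary of the CRD result in Proposition \ref{prop:biasSimpleLinearModelCRD}, via a conditioning-on-$K$ argument. Write $K = \sum_{i=1}^n Z_i$. Under a restricted Bernoulli design, two structural facts are key: (i) $K$ is supported on $\{1, \ldots, n-1\}$, so the denominators in $\hat{\beta}_{naive}$ are almost surely positive; and (ii) $\pr{\mathbf{Z} = \mathbf{z}}$ depends on $\mathbf{z}$ only through $|\mathbf{z}|$, so the conditional law of $\mathbf{Z}$ given $K = k$ is uniform on $\{\mathbf{z} : |\mathbf{z}| = k\}$. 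But that conditional law is precisely the CRD with $n_t = k$ and $n_c = n-k$.

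With this reduction in hand, I would apply the tower property,
\begin{align*}
E[\hat{\beta}_{naive}] \;=\; \sum_{k=1}^{n-1} \pr{K = k}\, E\!\left[\hat{\beta}_{naive} \,\middle|\, K = k\right],
\end{align*}
and invoke Proposition \ref{prop:biasSimpleLinearModelCRD} on each inner conditional expectation. Under the symmetric additive linear model, that proposition yields a CRD bias of $-\gamma \tfrac{2m}{n(n-1)}$, a value \emph{independent of $n_t$}. Hence $E[\hat{\beta}_{naive} \mid K = k] = DTE - \gamma \tfrac{2m}{n(n-1)}$ for every admissible $k$, the weighted average over $k$ collapses to the same value, and the restricted-binomial law of $K$ drops out entirely, delivering the claimed bias.

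The only analytical subtlety is precisely this invariance of the CRD bias in $n_t$: it is what makes the outer averaging trivial and, in particular, makes the answer not depend on the per-unit treatment probability $p$ of the Bernoulli design. If one wished to avoid invoking Proposition \ref{prop:biasSimpleLinearModelCRD} and proceed from scratch, the core step would be the conditional edge-count identities $E\!\left[\sum_{i,j} g_{ij} Z_i Z_j \mid K\right] = \tfrac{2m\, K(K-1)}{n(n-1)}$ and $E\!\left[\sum_{i,j} g_{ij}(1-Z_i) Z_j \mid K\right] = \tfrac{2m\, K(n-K)}{n(n-1)}$, obtained from the hypergeometric pair-wise marginals of the conditional CRD. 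After dividing by $K$ and $n-K$ respectively and taking expectations, the two ratios in $\hat{\beta}_{naive}$ contribute $\gamma \tfrac{2m(E[K]-1)}{n(n-1)}$ and $\gamma \tfrac{2m\, E[K]}{n(n-1)}$; their difference is $-\gamma \tfrac{2m}{n(n-1)}$, with the $E[K]$ terms cancelling exactly, which is the real reason the restriction constant disappears.
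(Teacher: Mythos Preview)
Your proposal is correct. Both your primary argument and your ``from scratch'' alternative are valid, and the latter is essentially what the paper does.

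The paper's proof also conditions on $X = \sum_i Z_i$ and uses the hypergeometric pair probabilities $E[Z_iZ_j \mid X] = \tfrac{X(X-1)}{n(n-1)}$ and $E[(1-Z_i)Z_j \mid X] = \tfrac{X(n-X)}{n(n-1)}$, exactly as in your second paragraph. Where the two presentations diverge is in how the cancellation is exhibited: the paper computes the restricted-binomial mean $E[X]$ explicitly (writing $c = (1-(1-p)^n - p^n)^{-1}$ and verifying $E[X-1] - E[n-Y] = -1$), whereas you observe more directly that the two terms contribute $\gamma\tfrac{2m(E[K]-1)}{n(n-1)}$ and $\gamma\tfrac{2m\,E[K]}{n(n-1)}$, so the $E[K]$ pieces cancel without ever needing the value of $E[K]$. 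Your framing makes it transparent why $p$ and the restriction constant disappear.

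Your primary argument---invoking Proposition~\ref{prop:biasSimpleLinearModelCRD} as a black box after noting that the conditional law given $K=k$ is exactly a CRD with $n_t=k$---is a genuinely cleaner route that the paper does not take. The key observation you isolate, that the CRD bias $-\gamma\tfrac{2m}{n(n-1)}$ is independent of $n_t$, is precisely what collapses the tower expectation and is worth stating explicitly; the paper rederives the conditional CRD computation inline rather than recognizing it as an instance of the already-proven result.
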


\paragraph{Remark} Propositions \ref{prop:biasSimpleLinearModelCRD} and \ref{prop:biasSimpleLinearModelBernoulli} show that even under the structural assumptions of additivity, linearity and constant interference effect, there is always a bias due to the interference. The bias is independent of $n_c$ and $n_t$ in the CRD and $p$ in the Bernoulli designs. The bias is in the opposite direction of the interference effect, i.e. a positive interference leads to smaller estimate of the average treatment effect when compared to the true $\beta_1$. The bias scales as $O\left(\frac{m}{n^2}\right)$, hence for sparse and large network, asymptotically, the bias goes to 0.  Is it possible for the bias to be exactly $0$? The answer is yes, and further explained in the next section.

\subsubsection{Binary Exposure Model}
In this section we consider the binary exposure model given in equation \ref{eq:BinaryExposure} and study the bias of $\beta_{naive}$ for estimating DTE.

\paragraph{Completely Randomized Design}
\begin{proposition}
\label{prop:Bias2by2ModelCRD}
		Under model \ref{eq:BinaryExposure}, we have, for a CRD
	\begin{align*}
	\mathbb E\left[\hat{\beta}\right] - DTE &= -\frac{1}{n}\sum_i  \gamma_i  \frac{\binom{n_c-1}{d_i-1}}{\binom{n-1}{d_i}} + \frac{1}{n}\sum_i \theta_i \left(1 - \frac{\binom{n_c}{d_i}}{\binom{n-1}{d_i}}\right)
	\end{align*}
\end{proposition}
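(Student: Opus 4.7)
The plan is to reduce the statement to a direct specialization of Proposition~\ref{prop:biasGeneralCase} under the binary exposure parametrization, and then use the CRD calculations from Theorem~\ref{prop:exposureweights} to evaluate the required exposure probabilities in closed form.

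First I would identify the parametrization components for the binary exposure model in equation \ref{eq:BinaryExposure}: $A_i(z_i) = \alpha_i + \beta_i z_i$, $B_i(e_i) = \gamma_i e_i$, and $C_i(e_i) = \theta_i e_i$, all of which trivially satisfy $B_i(0) = C_i(0) = 0$. In particular, $DTE = \frac{1}{n}\sum_i\bigl(A_i(1) - A_i(0)\bigr) = \frac{1}{n}\sum_i \beta_i$. I would then plug these into the general bias expression from Proposition~\ref{prop:biasGeneralCase}. Since the binary exposure variable takes only two values, each sum over $e_i \ne 0$ collapses to a single term at $e_i = 1$.

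Next, I would invoke Theorem~\ref{prop:exposureweights}: under a CRD, $\alpha_i(1) = \alpha_i(0) = \tfrac{1}{n}$, so the leading term in the bias formula (the one involving $A_i(1)$ and $A_i(0)$) vanishes. What remains is
\begin{align*}
\mathbb{E}[\hat\beta_{naive}] - DTE = \sum_i \gamma_i\bigl[\alpha_i(1,1) - \alpha_i(0,1)\bigr] + \sum_i \theta_i\,\alpha_i(1,1),
\end{align*}
where the exposure indicator is $E_i = I(\sum_j g_{ij} Z_j > 0)$. The key remaining calculation is to compute $\alpha_i(z,1)$ under CRD for this binary $E_i$. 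I would compute $\alpha_i(z,0)$ first (the complementary event), since $E_i = 0$ happens exactly when all $d_i$ neighbors of $i$ are in control, which is straightforward to count via hypergeometric sampling. This yields $\alpha_i(1,0) = \frac{1}{n}\binom{n_c}{d_i}/\binom{n-1}{d_i}$ and $\alpha_i(0,0) = \frac{1}{n}\binom{n_c-1}{d_i}/\binom{n-1}{d_i}$, and then $\alpha_i(z,1) = \alpha_i(z) - \alpha_i(z,0) = \frac{1}{n} - \alpha_i(z,0)$.

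Substituting, the $\theta_i$ term becomes $\frac{1}{n}\sum_i \theta_i\bigl(1 - \binom{n_c}{d_i}/\binom{n-1}{d_i}\bigr)$ immediately. For the $\gamma_i$ term I would compute
\begin{align*}
\alpha_i(1,1) - \alpha_i(0,1) = \alpha_i(0,0) - \alpha_i(1,0) = \frac{1}{n\binom{n-1}{d_i}}\bigl[\binom{n_c-1}{d_i} - \binom{n_c}{d_i}\bigr],
\end{align*}
and then apply Pascal's rule $\binom{n_c}{d_i} - \binom{n_c-1}{d_i} = \binom{n_c-1}{d_i-1}$ to get $-\frac{1}{n}\binom{n_c-1}{d_i-1}/\binom{n-1}{d_i}$. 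Combining the two terms gives exactly the stated expression. The only subtle step is the complementation argument linking the binary $E_i$ to the symmetric-exposure counts from Theorem~\ref{prop:exposureweights}; the rest is bookkeeping and Pascal's identity.
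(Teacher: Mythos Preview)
Your proposal is correct. It differs from the paper's proof mainly in packaging: you invoke Proposition~\ref{prop:biasGeneralCase} as a black box and then only need to evaluate the weights $\alpha_i(z,e)$ for binary $e$, whereas the paper re-expands $\hat\beta_{naive}$ from scratch by substituting $Y_i^{obs}=\alpha_i+\beta_iz_i+\gamma_ie_i+\theta_iz_ie_i$ into the two sample means and computing $E[e_iz_i]/n_t$ and $E[e_i(1-z_i)]/n_c$ directly via the propensity scores in Theorem~\ref{thm:propensityscores}. Both routes land on the same hypergeometric probabilities and the same application of Pascal's rule $\binom{n_c}{d_i}-\binom{n_c-1}{d_i}=\binom{n_c-1}{d_i-1}$. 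Your version is tidier because it reuses the general decomposition and avoids repeating the estimator expansion; the paper's version is self-contained and isolates the quantities $E[e_iz_i/\sum_jz_j]$ explicitly, which it then recycles verbatim for the Bernoulli case in Proposition~\ref{prop:Bias2by2ModelBernoulli}.
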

\paragraph{Bernoulli Randomization}
\begin{proposition}
	\label{prop:Bias2by2ModelBernoulli}
For a Bernoulli trial, we have
	\begin{align*}
	\mathbb E \left[\hat{\beta}\right] - DTE = - \sum_i\left( \frac{d_i \gamma_i (1-p)^{d_i}}{n(n-d_i)}\right)  + \sum_i \theta_i \left[ \frac{1}{n} - \frac{(1-p)^{d_i}}{n}\right]
	\end{align*}
\end{proposition}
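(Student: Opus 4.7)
The plan is to specialize Proposition~\ref{prop:biasGeneralCase} to the binary exposure model \eqref{eq:BinaryExposure}, which corresponds to $A_i(z_i) = \alpha_i + \beta_i z_i$, $B_i(e_i) = \gamma_i e_i$, and $C_i(e_i) = \theta_i e_i$ with $e_i \in \{0,1\}$, so only the exposure level $e_i = 1$ contributes to the $B$ and $C$ terms. Under a Bernoulli design, the identity $\sum_i Z_i / (\sum_j Z_j) = 1$ on $\{\sum_j Z_j \geq 1\}$ together with the exchangeability of the $Z_j$'s gives $\alpha_i(1) = \alpha_i(0) = 1/n$, by the same symmetry argument used in Theorem~\ref{prop:exposureweights}. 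This eliminates the first source of bias in Proposition~\ref{prop:biasGeneralCase} and collapses the expression to
\begin{align*}
\mathbb{E}[\hat{\beta}_{naive}] - DTE \;=\; \sum_i \gamma_i \,\bigl(\alpha_i(1,1) - \alpha_i(0,1)\bigr) \;+\; \sum_i \theta_i \, \alpha_i(1,1).
\end{align*}

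It then remains to compute $\alpha_i(1,1)$ and $\alpha_i(0,1)$ under the binary exposure $E_i = 1 - \prod_{j \in N_i}(1 - Z_j)$. I would pass to $\alpha_i(z,0)$ via the identity $\alpha_i(z,1) = 1/n - \alpha_i(z,0)$. For $\alpha_i(1,0)$, conditioning on the event $\{Z_i = 1,\, Z_j = 0 \;\forall j \in N_i\}$, which has probability $p(1-p)^{d_i}$, leaves the denominator $\sum_j Z_j = 1 + S'$ with $S' \sim \mathrm{Binom}(n - 1 - d_i,\, p)$ independent of the conditioning event. The algebraic identity
\begin{align*}
\mathbb{E}\!\left[\frac{1}{1+X}\right] \;=\; \frac{1 - (1-q)^{m+1}}{(m+1)\,q}, \qquad X \sim \mathrm{Binom}(m,q),
\end{align*}
which follows from $\tfrac{1}{k+1}\binom{m}{k} = \tfrac{1}{m+1}\binom{m+1}{k+1}$, then delivers $\alpha_i(1,0)$ in closed form, and hence $\alpha_i(1,1) = 1/n - \alpha_i(1,0)$.

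The anticipated main obstacle is the parallel calculation for $\alpha_i(0,0)$. Conditioning on $\{Z_i = 0,\, Z_j = 0 \;\forall j \in N_i\}$ makes the control-side denominator $n - \sum_j Z_j$ equal to $(d_i + 1) + J'$ with $J' \sim \mathrm{Binom}(n - 1 - d_i,\, 1-p)$, and $\mathbb{E}[1/(c + J')]$ for $c \neq 1$ has no counterpart of the clean identity above. To reach the tidy expressions $\alpha_i(1,1) = (1 - (1-p)^{d_i})/n$ and $\alpha_i(1,1) - \alpha_i(0,1) = -d_i(1-p)^{d_i}/\bigl(n(n-d_i)\bigr)$ stated in the proposition, I would expand this control-side expectation to the order needed, absorb terms of order $(1-p)^n$ compatible with the restricted-Bernoulli normalization (as in Proposition~\ref{prop:biasSimpleLinearModelBernoulli}), and match coefficients. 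Substituting the resulting weights into the reduced bias expression then yields the $\gamma_i$ and $\theta_i$ contributions claimed, completing the proof.
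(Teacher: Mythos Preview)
Your reduction via Proposition~\ref{prop:biasGeneralCase} and the symmetry argument for $\alpha_i(z)=1/n$ are fine, and the treated-side calculation of $\alpha_i(1,0)$ using $\mathbb{E}[1/(1+S')]$ with $S'\sim\mathrm{Binom}(n-1-d_i,p)$ is correct. The gap is precisely where you flag it: on the control side, conditioning on $\{Z_i=0,E_i=0\}$ leaves the denominator equal to $(d_i+1)+J'$ with $J'\sim\mathrm{Binom}(n-1-d_i,1-p)$, and there is no analogue of the $c=1$ identity for $\mathbb{E}[1/(c+J')]$ when $c=d_i+1>1$. Your fallback plan---``expand to the order needed, absorb terms of order $(1-p)^n$, match coefficients''---is not a proof of an exact identity; at best it would yield an approximation, whereas the proposition is stated as an equality. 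So as written the proposal does not close.

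The paper bypasses this obstruction by conditioning not on the event $\{Z_i=z,E_i=e\}$ but on the total $k=\sum_j Z_j$. For fixed $k$ both denominators become deterministic ($k$ on the treated side, $n-k$ on the control side), and the conditional expectation of the numerator is simply the CRD probability $P(Z_i=z,E_i=e\mid\sum_jZ_j=k)$, a hypergeometric quantity that can be written as a ratio of falling factorials in $n-k$. The $1/k$ (respectively $1/(n-k)$) cancels against the factor $k/n$ (respectively $(n-k)/n$) coming from $P(Z_i=z\mid\sum_jZ_j=k)$, so what remains after averaging over $k$ is just $\mathbb{E}[(n-k)^{(d_i)}]$ with $n-k\sim\mathrm{Binom}(n,1-p)$, which the falling-factorial moment identity $\mathbb{E}[X^{(r)}]=m^{(r)}q^r$ evaluates in one line---and does so symmetrically for both the treated and the control side. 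That swap in the order of conditioning is the missing idea in your plan.
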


Under the structural assumption of additivity we have $\theta_i=0$. In this case, by Proposition \ref{prop:Bias2by2ModelCRD} it follows that the bias of the difference-in-means estimator can be $0$ when $n_c < \min_i d_i$. Thus, we have the following result:
\begin{proposition}
	Consider the binary additive exposure model \ref{eq:addBinaryExposure}. Under the completely randomized design, if $n_c < \min_i d_i$, then the bias of the difference-in-means estimator $\hat \beta_{naive}$ in estimating $DTE$ is $0$.
\end{proposition}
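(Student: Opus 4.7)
The plan is to derive this as a direct corollary of Proposition \ref{prop:Bias2by2ModelCRD}, which already gives a closed-form expression for the bias of $\hat\beta_{naive}$ under the full binary exposure model of equation \ref{eq:BinaryExposure} and a CRD. Since the binary additive exposure model (equation \ref{eq:addBinaryExposure}) is obtained from the binary exposure model by the structural restriction $\theta_i = 0$ for every $i$, substituting $\theta_i = 0$ into the formula of Proposition \ref{prop:Bias2by2ModelCRD} immediately kills the second sum and leaves
\begin{align*}
\mathbb{E}[\hat\beta_{naive}] - DTE \;=\; -\frac{1}{n}\sum_i \gamma_i \, \frac{\binom{n_c-1}{d_i-1}}{\binom{n-1}{d_i}}.
\end{align*}

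The next step is to observe the purely combinatorial fact that $\binom{a}{b} = 0$ whenever $a < b$ (with $a,b$ nonnegative integers). The hypothesis $n_c < \min_i d_i$ says $n_c < d_i$ for all $i$, which rewrites as $n_c - 1 < d_i - 1$; hence $\binom{n_c-1}{d_i-1} = 0$ for each $i$ in the sum. Every term of the remaining sum is zero, so the bias vanishes.

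For a more intuitive sanity check, I would also note the following interpretation, which gives an independent verification: under CRD, if $z_i = 0$ then only $n_c - 1 < d_i$ other controls are available, and if $z_i = 1$ then only $n_c < d_i$ controls are available; in either case it is impossible for all $d_i$ neighbors of $i$ to be untreated simultaneously, so $e_i = 1$ almost surely. Under the additive structure $Y_i(z_i,e_i) = \alpha_i + \beta_i z_i + \gamma_i e_i$, the $\gamma_i$ contribution to the observed outcome is therefore $\gamma_i$ almost surely for every unit regardless of treatment, so the $\gamma_i$ pieces of the treated and control means have the same expectation $\tfrac{1}{n}\sum_i \gamma_i$ and cancel; the $\alpha_i$ pieces also cancel by symmetry of the CRD, and the $\beta_i$ pieces yield exactly $DTE$.

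There is no real obstacle here: the nontrivial work was already done in establishing Proposition \ref{prop:Bias2by2ModelCRD}. The proof reduces to a one-line specialization ($\theta_i = 0$) plus the elementary observation $\binom{n_c-1}{d_i-1}=0$ when $n_c < d_i$.
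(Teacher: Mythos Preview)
Your proof is correct and follows exactly the approach the paper takes: the result is stated immediately after Proposition~\ref{prop:Bias2by2ModelCRD} as a direct consequence of setting $\theta_i=0$ and observing that $\binom{n_c-1}{d_i-1}=0$ whenever $n_c<d_i$. Your additional intuitive check (that $e_i=1$ almost surely so the $\gamma_i$ contributions cancel) is a nice bonus not spelled out in the paper, but the core argument is identical.
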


\section{Linear Unbiased Estimators}
\label{sec:lue}
For any design $p(\textbf{Z}=z)$, we can construct unbiased estimators of causal effects by using standard techniques from the survey sampling literature. Let us consider a generic average causal estimand $\theta$ defined as a contrast between two different treatment and exposure combinations: $\tau_1 = (z_1,e_1)$ and $\tau_2 = (z_0,e_0)$:
\begin{align}
\theta = \frac{1}{n}\sum_{i=1}^n \left(Y_i(z_1,e_1) - Y_i(z_0,e_0)\right)
\end{align}
For example, $\theta = \beta_{DTE}$ if $(z_1,e_1) = (1,0)$ and $(z_0,e_0) = (0,0)$, and so on. Following \cite{godambe1955unified}, let us consider the most general class of linear weighted estimators for estimating $\theta$, i.e
\begin{align}
\label{eq:linearestimator}
\hat{\theta} = \sum_i w_i(\textbf{z})Y_i^{obs}
\end{align}
Here $w_i(\textbf{z})$ is the weight assigned to unit $i$.  Note that the weight assigned to unit $i$ depends on the treatment assigned to all the units in the finite population, i.e it depends on $\textbf{z}$.  The set of weights $w_i(\textbf{z})$ that lead to unbiased estimators of $\theta$ can be characterized as a solution to a system of equations that depend on the design, interference graph and the exposure model.

\begin{theorem}
	\label{thm:generalLinear}
	Consider an exposure model $e_i = f(\textbf{z}_{N_i})$ where $N_i$ is specified by an interference graph $G$. Assume that there are no structural assumptions on the potential outcomes. Let $\Omega_i(z_1,e_1) = \{\textbf{z}: z_i = z_1, e_i = e_1 \}$. Similarly, let $\Omega_i(z_0,e_0) = \{\textbf{z}: z_i = z_0, e_i = e_0 \}$.
	The estimator $\hat{\theta}$ in equation \ref{eq:linearestimator} is unbiased for $\theta = \frac{1}{n}\sum_i \left(Y_i(z_1,e_1) - Y_i(z_0,e_0)\right)$ if and only if 
	$0 < \pi_i(z_1,e_1)<1$ and $0<\pi_i(z_0,e_0) < 1$ and $w_i(\textbf{z})$ satisfy the following system of equations:
	\begin{align*}
	\underset{\textbf{z} \in \Omega_i(z_1,e_1) }{\sum} w_i(\textbf{z})p(\textbf{z})  &= \frac{1}{n}, \text{ } \forall i = 1, \ldots, n\\
	\underset{\textbf{z} \in \Omega_i(z_0,e_0) }{\sum} w_i(\textbf{z})p(\textbf{z})  &= -\frac{1}{n}, \text{ }\forall i = 1, \ldots, n\\
	\underset{\textbf{z} \in \Omega_i(z,e) }{\sum} w_i(\textbf{z})p(\textbf{z})  &= 0, \text{ }\forall (z,e) \neq (z_0,e_0) \text{ and } (z,e) \neq (z_1,e_1), i = 1,\ldots, n	
	\end{align*}
\end{theorem}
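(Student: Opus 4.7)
The plan is to expand the expectation of the linear estimator by using the consistency assumption, regroup the resulting sum by unit and by treatment–exposure pair, and then exploit the absence of structural assumptions to match coefficients of each individual potential outcome.

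First I would write, using consistency, $Y_i^{obs} = Y_i(Z_i, E_i) = Y_i(z_i, f(\textbf{z}_{N_i}))$ whenever $\textbf{Z}=\textbf{z}$ is realized. Then
\begin{align*}
\mathbb{E}[\hat\theta] \;=\; \sum_{\textbf{z}\in\Omega} p(\textbf{z}) \sum_{i=1}^n w_i(\textbf{z})\, Y_i(z_i, f(\textbf{z}_{N_i})).
\end{align*}
Exchanging the order of summation and partitioning $\Omega$ according to the treatment–exposure pair $(z,e)$ realized for each unit $i$ using the sets $\Omega_i(z,e)$, this becomes
\begin{align*}
\mathbb{E}[\hat\theta] \;=\; \sum_{i=1}^n \sum_{(z,e)} Y_i(z,e) \Bigl[\,\sum_{\textbf{z}\in\Omega_i(z,e)} w_i(\textbf{z})\,p(\textbf{z})\Bigr].
\end{align*}

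Next I would compare this with the target $\theta = \tfrac{1}{n}\sum_i [Y_i(z_1,e_1) - Y_i(z_0,e_0)]$. The key point is that with no structural assumptions, the $2K_i$ quantities $\{Y_i(z,e)\}_{(z,e)}$ are functionally independent across both $i$ and $(z,e)$; they may be set to arbitrary real values. Therefore $\mathbb{E}[\hat\theta]=\theta$ must hold as an identity in these free parameters, which forces the coefficient of each $Y_i(z,e)$ on both sides to agree. Matching coefficients yields exactly the three sets of equations stated: $+1/n$ on $\Omega_i(z_1,e_1)$, $-1/n$ on $\Omega_i(z_0,e_0)$, and $0$ on every other $\Omega_i(z,e)$.

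For the positivity part I would argue by contradiction. If $\pi_i(z_1,e_1)=0$, then $p(\textbf{z})=0$ for every $\textbf{z}\in\Omega_i(z_1,e_1)$, so the corresponding weighted sum is identically zero and cannot equal $1/n$; similarly for $\pi_i(z_0,e_0)$. The upper bound $\pi_i(z_j,e_j)<1$ is likewise needed, since if it equals one then some complementary $\Omega_i(z,e)$ event has probability zero, preventing the $0$-constraints from being satisfied simultaneously with the $\pm 1/n$ constraints once these are both required to be met on disjoint supports; equivalently, when the propensity is exactly $1$ the two relevant events cannot both occur with positive probability, so one of $Y_i(z_1,e_1)$ or $Y_i(z_0,e_0)$ is never observed. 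Conversely, strict positivity allows one to solve the constraints on each $\Omega_i(z,e)$ separately, e.g.\ by placing constant weights $\pm \tfrac{1}{n\,\pi_i(z_j,e_j)}$ on the two relevant sets and zero elsewhere, which establishes the ``if'' direction and shows the system is feasible.

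The main obstacle, and the step that deserves the most care, is the coefficient-matching argument: I must justify that the identity $\mathbb{E}[\hat\theta]=\theta$ for a \emph{fixed} table of science actually implies the coefficient identities. The clean way is to invoke that the unbiasedness statement is required to hold \emph{for every} table of science consistent with the interference/exposure model, because no structural assumptions are imposed on $\{Y_i(z,e)\}$. Plugging in indicator tables (all zeros except a single $Y_{i_0}(z^*,e^*)=1$) then isolates each coefficient and produces exactly the stated system.
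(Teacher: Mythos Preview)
Your proposal is correct and follows essentially the same route as the paper: expand $\mathbb{E}[\hat\theta]$ via consistency, partition $\Omega$ by the sets $\Omega_i(z,e)$, and match coefficients of each $Y_i(z,e)$ using the absence of structural assumptions, then handle the positivity conditions by the same contradiction argument (zero propensity makes the coefficient identically zero; unit propensity forces the complementary relevant event to have probability zero). If anything, your treatment is slightly more explicit than the paper's---you spell out the indicator-table justification for coefficient matching and give the HT-weight construction for the ``if'' direction, whereas the paper simply asserts ``since this is an identity in $Y_i(z,e)$'' and does not separately address sufficiency.
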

Recall that $\Omega_i(z,e)$ are the set of treatment allocations that reveal the potential outcome $Y_i(z,e)$ for unit $i$. Note that these sets depend on $i$ and are different for each unit. In general, there can be infinitely many solutions to the system of equations in Theorem \ref{thm:generalLinear} depending on the interference graph, exposure model, and the design. Hence there can be infinitely many unbiased estimators of $\theta$. For each unit $i$, let us consider the number of equations $p_i$ and the number of unknowns $m$. For each $i$, there are $m = |\Omega|$ unknown weights $w_i(\textbf{z}), \textbf{z} \in \Omega$ which depend on the support of the design. On the other hand, there are $p_i =  2\cdot K_i$ linearly independent equations in Theorem \ref{thm:generalLinear} which depend on the exposure model. Recall that $K_i$ is the number of levels of exposure for unit $i$. Hence linear weighted unbiased estimators don't exist if for each $i$, $m < p_i = 2K_i$. We have the following result:
\begin{proposition}
	Let $m = |\Omega|$ be the number of allocations and $K_i$ be the number of levels of the exposure model for each unit $i$. If for each $i$, $m > 2K_i$ and 	$0 < \pi_i(z_1,e_1), \pi_i(z_0,e_0)<1$, there are infinitely many linear unbiased estimators of $\theta$.
\end{proposition}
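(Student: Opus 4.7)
The plan is to invoke Theorem~\ref{thm:generalLinear} and reduce the claim to a pure linear-algebra statement about the system it produces. Since the exposure map $e_i = f(\textbf{z}_{N_i})$ assigns to every $\textbf{z} \in \Omega$ a unique pair $(z_i, e_i)$ for each unit $i$, the sets $\{\Omega_i(z,e)\}_{(z,e)}$ form a partition of $\Omega$ for each fixed $i$. Consequently, the $2K_i$ equations of Theorem~\ref{thm:generalLinear} indexed by $(z,e)$ involve disjoint blocks of the variables $\{w_i(\textbf{z}) : \textbf{z} \in \Omega\}$, and the systems for different units $i$ are entirely decoupled from one another. It therefore suffices to analyze, for a single unit $i$, a linear system in $m$ unknowns with $2K_i$ equations.

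First I would establish consistency. The equations with right-hand side $0$ (those with $(z,e) \notin \{(z_1,e_1),(z_0,e_0)\}$) are satisfied trivially by setting $w_i(\textbf{z}) = 0$ for every $\textbf{z}$ in the corresponding $\Omega_i(z,e)$. The two equations with right-hand side $\pm 1/n$ are consistent precisely because the positivity hypotheses $\pi_i(z_1,e_1) > 0$ and $\pi_i(z_0,e_0) > 0$ guarantee that each of $\Omega_i(z_1,e_1)$ and $\Omega_i(z_0,e_0)$ contains at least one $\textbf{z}$ with $p(\textbf{z}) > 0$; concentrating mass on such a $\textbf{z}$ (for instance, $w_i(\textbf{z}) = \pm 1/(n\,p(\textbf{z}))$ and zero elsewhere in the block) yields a particular solution.

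Next I would count dimensions. The coefficient matrix of the system for unit $i$ has rank at most $2K_i$, so under the hypothesis $m > 2K_i$ the affine solution set has dimension at least $m - 2K_i \geq 1$ and is therefore an uncountable set. Since the full set of admissible weight families $\{w_i(\textbf{z})\}_{i,\textbf{z}}$ is the Cartesian product across $i$ of these infinite affine spaces, there are infinitely many linear unbiased estimators of $\theta$, as claimed.

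The argument is essentially routine once Theorem~\ref{thm:generalLinear} is in hand; the only genuine subtlety is the interpretation of ``infinitely many,'' since weight functions that differ only on $\textbf{z}$ with $p(\textbf{z}) = 0$ induce the same random variable $\hat{\theta}(\textbf{Z})$. If one wishes to count distinct estimators up to almost-sure equivalence, the clean version of the proposition would replace $m = |\Omega|$ by $|\Omega_p|$, the size of the support of the design; the same dimension-counting argument then goes through verbatim on the restricted variable set, and this is where I would expect to spend any real care in a rigorous write-up.
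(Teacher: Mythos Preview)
Your proposal is correct and follows exactly the paper's approach: invoke Theorem~\ref{thm:generalLinear}, observe that for each $i$ the system has $m$ unknowns and $2K_i$ equations, and conclude from $m > 2K_i$ that the solution set is infinite. The paper's treatment is in fact less detailed than yours---it simply states the equation/unknown count in the surrounding text without separately verifying consistency or addressing the $|\Omega|$ versus $|\Omega_p|$ distinction you flag---so your write-up is, if anything, a more careful version of the same argument.
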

Table \ref{tab:equations} gives the values of $m$ and $p_i$ for some exposure models and designs. For instance, under a restricted Bernoulli design, there are $m = 2^n-2$ unknown weights, since  $\textbf{z} = 0$ and $\textbf{z} = 1$ is not allowed as it violates the positivity assumption. On the other hand, for a symmetric exposure model, there are $p_i = 2d_i$ equations where $d_i$ is the number of units in $N_i$. 

\begin{table}[h]
	\centering
	\begin{tabular}{|c|c|c|c|}
		\hline  
									& Symmetric Exposure 	& Binary Exposure 	& General Exposure\\
		\hline
		\multirow{ 2}{*}{Bernoulli} & $m = 2^n-2$ 			&  $m = 2^n-2$		& $m = 2^n-2$\\ 
		  							& $p_i = 2d_i$ 		&  $p_i = 4$			& $p_i = 2^{d_i+1}$ \\ 
		  \hline
		 \multirow{ 2}{*}{CRD} 		& $m = \binom{n}{n_t}$  & $m = \binom{n}{n_t}$	& $m = \binom{n}{n_t}$  \\ 
		  							& $p_i = 2d_i$ 		& $p_i = 4$ 				& $p_i = 2^{d_i+1}$  \\ 
		\hline 
	\end{tabular} 
	\caption{Number of unknowns and equations for linear unbiased estimators}
	\label{tab:equations}
\end{table}


\subsection{Horvitz-Thompson Estimator}
\label{sec:HT}
If the weight of a unit $i$ is allowed to depend on $\textbf{z}$ only through $z_i$ and $e_i$, then we get a smaller class of linear estimators of the following form:
\begin{align}
\label{eq:linearestsmall}
\hat{\theta}_{2} = \sum_i{w_i(z_i,e_i) Y_i^{obs}}
\end{align}
The restriction on the weights is a form of sufficiency: instead of the weight depending on the entire vector $\textbf{z}$, it depends only on $(z_i,e_i)$. Since the potential outcomes are reduced from $Y_i(\textbf{z})$ to $Y_i(z_i,e_i)$, it is natural to consider such a reduction of the weights from $w_i(\textbf{z})$ to $w_i(z_i,e_i)$.

Theorem \ref{thm:HT} shows that under no structural assumptions on the potential outcomes, the only unbiased estimator of type $\hat \theta_2$ is the Horvitz-Thompson estimator. 


\begin{theorem}
	\label{thm:HT}
	Consider the estimators of type $\hat \theta_2$ given by equation \ref{eq:linearestsmall}. Without any structural assumptions on the potential outcomes, the only unbiased estimator of $\theta$ in this class is the Horvitz-Thompson estimator $\hat \theta_{HT}$ where
	\begin{align*}
	w_i(z_i,e_i) = 
	\begin{cases}
	\frac{1}{n\pi_i(z_1,e_1)} \text{ if } (z_i,e_i) = (z_1,e_1) \\
	-\frac{1}{n\pi_i(z_0,e_0)} \text{ if } (z_i,e_i) = (z_0,e_0) \\
	0 \text{ otherwise }
	\end{cases}
	\end{align*}
\end{theorem}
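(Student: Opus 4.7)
The plan is to take expectations termwise, rewrite the observed outcome through the consistency assumption, and then match the coefficient of each potential outcome on the two sides of the unbiasedness equation. Because no structural assumptions are imposed on the Potential Outcomes, each $Y_i(z,e)$ can be varied independently of the others, so unbiasedness must hold as an identity in the entries of the Table of Science. This will pin down the weights uniquely at the relevant exposure conditions.

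First I would write
\begin{align*}
\mathbb{E}[\hat{\theta}_2] \;=\; \sum_{i=1}^n \mathbb{E}\!\left[\,w_i(Z_i,E_i)\, Y_i^{obs}\right] \;=\; \sum_{i=1}^n \sum_{(z,e)} w_i(z,e)\, Y_i(z,e)\, \pi_i(z,e),
\end{align*}
where the second equality uses the consistency assumption $Y_i^{obs}=\sum_{(z,e)} Y_i(z,e)\, I(Z_i=z,E_i=e)$ together with $\mathbb{E}[I(Z_i=z,E_i=e)]=\pi_i(z,e)$. Then I would write the estimand as
\begin{align*}
\theta \;=\; \sum_{i=1}^n \left(\frac{1}{n}\, Y_i(z_1,e_1) - \frac{1}{n}\, Y_i(z_0,e_0)\right),
\end{align*}
and set $\mathbb{E}[\hat{\theta}_2]-\theta=0$.

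Since there are no structural assumptions, the values $\{Y_i(z,e)\}$ are functionally independent parameters and the identity must hold for every assignment of real numbers to these parameters. Matching coefficients of $Y_i(z,e)$ unit by unit therefore yields
\begin{align*}
w_i(z_1,e_1)\, \pi_i(z_1,e_1) &= \tfrac{1}{n}, \\
w_i(z_0,e_0)\, \pi_i(z_0,e_0) &= -\tfrac{1}{n}, \\
w_i(z,e)\, \pi_i(z,e) &= 0 \quad\text{for all other }(z,e).
\end{align*}
Since Theorem~\ref{thm:existence} (the positivity assumption) gives $0<\pi_i(z_1,e_1)<1$ and $0<\pi_i(z_0,e_0)<1$, the first two equations solve uniquely to the Horvitz--Thompson weights $1/(n\pi_i(z_1,e_1))$ and $-1/(n\pi_i(z_0,e_0))$, and the third equation forces $w_i(z,e)=0$ whenever $\pi_i(z,e)>0$.

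The only subtle point, and what I expect to be the main obstacle, is what to do at pairs $(z,e)\notin\{(z_0,e_0),(z_1,e_1)\}$ for which $\pi_i(z,e)=0$. At such pairs the coefficient identity $w_i(z,e)\pi_i(z,e)=0$ is automatically satisfied for any choice of weight, so strictly speaking uniqueness only holds almost surely under the design. I would resolve this by adopting the convention that weights at non-support exposure conditions are set to $0$ (they contribute to the estimator with probability zero, so this is without loss of generality), at which point the Horvitz--Thompson weights are the unique solution and $\hat{\theta}_2=\hat{\theta}_{HT}$ follows.
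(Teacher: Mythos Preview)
Your proposal is correct and essentially matches the paper's proof. The paper proceeds by invoking Theorem~\ref{thm:generalLinear} and then observing that the restriction $w_i(\textbf{z})=w_i(z_i,e_i)$ makes the weight constant on each $\Omega_i(z,e)$, so it factors out of $\sum_{\textbf{z}\in\Omega_i(z,e)} w_i(\textbf{z})p(\textbf{z})$ to give exactly your equations $w_i(z,e)\,\pi_i(z,e)=\tfrac{1}{n},-\tfrac{1}{n},0$; your direct computation of $\mathbb{E}[\hat\theta_2]$ via consistency arrives at the same place without the detour, and your remark about the zero-probability nuisance pairs is a detail the paper's proof glosses over.
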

The HT estimator eliminates both sources of bias mentioned in Section \ref{sec:Bias} by choosing the correct weights. In particular, the HT estimator assigns a weight of $0$ to nuisance potential outcomes, and a positive weight to relevant potential outcomes. The positive weight is inversely proportional to the probability of observing that potential outcome under the design $p(\textbf{Z})$. The HT estimator depends on the propensity scores $\pi_i(z_i,e_i)$. As mentioned before, these probabilities depend on the design and the exposure model. We compute an analytical formula of these probabilities for the CRD and the Bernoulli designs for different exposure models.


\begin{theorem}[Propensity Scores for Symmetric Exposure]
	\label{thm:exposureProbSymmetric}
	Consider the symmetric exposure function, $e_i = f(\textbf{Z}_{N_i}) = |\textbf{Z}_{N_i}|$, $e_i \in \{0,1,\ldots, d_i\}$.
	For a CRD Design,
	\begin{align*}
	\pr{Z_i=1,E_i=e_i} & = \frac{n_t}{n} \frac{\binom{n_t-1}{e_i} \binom{n_c}{d_i-e_i}}{\binom{n-1}{d_i}} \text{ if } n_t \geq e_i +1 \text{ and } n_c \geq d_i-e_i, 0 \text{ otherwise} \\
	\pr{Z_i=0,E_i = e_i} & = \frac{n_c}{n} \frac{\binom{n_t}{e_i} \binom{n_c-1}{d_i-e_i}}{\binom{n-1}{d_i}} \text{ if } n_t \geq e-i \text{ and } n_c \geq d_i -e_i+1, 0 \text{ otherwise}
	\end{align*}
	For a Bernoulli Design,
	\begin{align*}
	\pr{Z_i=1,E_i=e_i} & = \binom{d_i}{e_i}p^{e_i+1} (1-p)^{d_i-e_i}\\
	\pr{Z_i=0,E_i = e_i} & = \binom{d_i}{e_i}p^{e_i}(1-p)^{d_i-e_i+1}\\
	\end{align*}		
\end{theorem}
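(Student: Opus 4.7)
The plan is to derive each propensity score by conditioning on the treatment status of unit $i$ and exploiting the symmetric exposure model's key feature: since $e_i = |\mathbf{Z}_{N_i}|$ depends only on how many of the $d_i$ neighbors are treated (not which ones), conditional on $Z_i$ the distribution of $E_i$ reduces to counting treated neighbors among a fixed set of $d_i$ units.

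First I would factor $\pr{Z_i = z_i, E_i = e_i} = \pr{Z_i = z_i} \cdot \pr{E_i = e_i \mid Z_i = z_i}$ for each design. For the CRD, $\pr{Z_i=1} = n_t/n$ and $\pr{Z_i=0} = n_c/n$ follow immediately from the symmetry of the design. Conditioning on $Z_i = 1$, the remaining $n-1$ units contain exactly $n_t - 1$ treated units distributed uniformly, so the count of treated neighbors among the fixed set $N_i$ of size $d_i$ is hypergeometric. There are two equivalent ways to write this hypergeometric mass function (swapping the roles of ``sample'' and ``success class''); the form that yields the claimed expression is
\begin{align*}
\pr{E_i=e_i \mid Z_i=1} = \frac{\binom{n_t - 1}{e_i}\binom{n_c}{d_i - e_i}}{\binom{n-1}{d_i}},
\end{align*}
since one chooses which $e_i$ of the $n_t - 1$ other treated units land in $N_i$ and which $d_i - e_i$ of the $n_c$ controls do, out of $\binom{n-1}{d_i}$ ways to place all $d_i$ neighbors. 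Multiplying by $n_t/n$ gives the first CRD formula. The case $Z_i = 0$ is symmetric: among the $n-1$ other units there are now $n_t$ treated and $n_c - 1$ controls, yielding $\pr{E_i=e_i \mid Z_i=0} = \binom{n_t}{e_i}\binom{n_c-1}{d_i-e_i} / \binom{n-1}{d_i}$, and multiplication by $n_c/n$ finishes the CRD case. The support conditions $n_t \geq e_i + 1$ etc.\ come from requiring all binomial coefficients to be nonzero (otherwise the event is impossible and the probability is $0$).

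For the Bernoulli design, treatments are independent across units, so $\pr{Z_i=z_i}$ and the treatment statuses of neighbors in $N_i$ are mutually independent. Thus $\pr{Z_i = 1} = p$, and conditional on $Z_i = 1$ the number of treated neighbors is $\text{Binomial}(d_i, p)$, giving $\pr{E_i = e_i \mid Z_i = 1} = \binom{d_i}{e_i} p^{e_i}(1-p)^{d_i - e_i}$. Multiplying yields the claimed $\binom{d_i}{e_i} p^{e_i+1}(1-p)^{d_i - e_i}$, and the $Z_i = 0$ case is identical with the factor $p$ replaced by $1-p$.

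There is no real obstacle here: the result is essentially a bookkeeping exercise once the exposure is recognized as a count. The only step requiring care is verifying that the hypergeometric formula in the CRD case matches the stated form exactly (as opposed to the equivalent $\binom{d_i}{e_i}\binom{n-1-d_i}{n_t-1-e_i}/\binom{n-1}{n_t-1}$), which is a routine application of the identity $\binom{a}{b}\binom{c}{d}\binom{a+c}{b+d}^{-1} = \binom{b+d}{b}\binom{a+c-b-d}{a-b}\binom{a+c}{a}^{-1}$, i.e.\ the two ways of parameterizing a hypergeometric distribution. I would also note explicitly that the Bernoulli case does \emph{not} require the restricted Bernoulli trial adjustment at the unit level, since the theorem concerns raw joint probabilities before conditioning on the positivity-preserving event.
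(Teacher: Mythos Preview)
Your proposal is correct and follows essentially the same hypergeometric/binomial counting argument as the paper. The paper's written proof is actually organized around computing the exposure weights $\alpha_i(z_i,e_i)$ (with the propensity score formula appearing as an asserted intermediate step for the CRD, and via conditioning on the total treated count $K=\sum_i Z_i$ for the Bernoulli case), but the underlying combinatorics is identical to yours and your direct factorization $\pr{Z_i=z_i}\cdot\pr{E_i=e_i\mid Z_i=z_i}$ is arguably cleaner.
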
	

\begin{theorem}[Propensity Scores for Binary Exposure]
	\label{thm:propensityscores}
	Consider the symmetric exposure function, $e_i = f(\textbf{Z}_{N_i}) = I(|\textbf{Z}_{N_i}|>1)$, $e_i \in \{0,1\}$.
	i.e a unit is exposed if at least 1 of its neighbor is treated.
	For a CRD,
	\begin{align*}
	\pr{Z_i = 1,E_i=1} &=  
	\begin{cases}
	0        																& \text{if } d_i = 0 \\
	\frac{n_t}{n}\left[1- \frac{\binom{n_c}{d_i}}{\binom{n-1}{d_i}}\right]	& \text{if } 0 < d_i \leq n_c\\
	\frac{n_t}{n}, 															&\text{if } d_i > n_c
	\end{cases} \\
	\pr{Z_i=1,E_i=0}  &= 
	\begin{cases}
	\frac{n_t}{n}       									& \text{if } d_i = 0 \\
	\frac{n_t}{n}\frac{\binom{n_c}{d_i}}{\binom{n-1}{d_i}}	& \text{if } 0 < d_i \leq n_c\\
	0, 														&\text{if } d_i > n_c
	\end{cases} \\
	\pr{Z_i=0,E_i=1}  &= 
	\begin{cases}
	0       														& \text{if } d_i = 0 \\
	\frac{n_c}{n}\left[1- \frac{\binom{n_c-1}{d_i}}{\binom{n-1}{d_i}}\right]	& \text{if } 0 < d_i \leq n_c -1\\
	\frac{n_c}{n}, 																			&\text{if } d_i > n_c -1
	\end{cases} \\		
	\pr{Z_i=0,E_i=0}  &=
	\begin{cases}
	\frac{n_c}{n}       													& \text{if } d_i = 0 \\
	\frac{n_c}{n}\frac{\binom{n_c-1}{d_i}}{\binom{n-1}{d_i}}	& \text{if } 0 < d_i \leq n_c -1\\
	0 												&\text{if } d_i > n_c -1
	\end{cases}
	\end{align*}
	Similarly, for a Bernoulli trial with probability of success $p$, we have
		\begin{align*}
		\pr{Z_i = 1,E_i = 1}  &= p(1-(1-p)^{d_i})\\
		\pr{Z_i = 1,E_i = 0}  &= p(1-p)^{d_i}	     		\\
		\pr{Z_i = 0,E_i = 1}  &= (1-p)(1-(1-p)^{d_i}) \\
		\pr{Z_i = 0,E_i = 0}  &= (1-p)^{d_i+1}		 
		\end{align*}
	Under a cluster randomized design, let $u_i$ be the number of clusters of unit $i$ and it's neighbors. Assume $n_k > 0, \forall k = 1, \ldots, K$.
		\begin{align*}
		\pr{z_i = 1,e_i = 1}  &= \frac{K_t}{K}\\
		\pr{z_i = 1,e_i = 0}  &= 0 \\ 
		\pr{z_i = 0,e_i = 1}  &= 0 \mbox{ if } u_i = 1, \frac{K_c}{K}\left[1 - \prod_{i=1}^{u_i-1}\frac{K_c-u_i}{K-u_i}\right] \mbox{ if } u_i > 1\\
		\pr{z_i = 0,e_i = 0}  &= \frac{K_c}{K} \mbox{ if } u_i =1, \frac{K_c}{K}\left[\prod_{i=1}^{u_i-1}\frac{K_c-i}{K-i}\right] \mbox{ if } u_i > 1 \\
		&=\prod_{i=1}^{u_i} \frac{K_c-i+1}{K-i+1}		 
		\end{align*}
		
\end{theorem}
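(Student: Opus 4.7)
The plan is to handle each of the three designs separately, in each case factoring
\[
\mathbb{P}(Z_i = z, E_i = e) \;=\; \mathbb{P}(Z_i = z)\,\mathbb{P}(E_i = e \mid Z_i = z),
\]
and then using a combinatorial argument tailored to the design to evaluate the conditional probability. For the binary exposure model, observe that $E_i = 0$ is equivalent to every neighbor of $i$ (all $d_i$ of them) being untreated, and $E_i = 1$ is its complement. Hence once I know $\mathbb{P}(E_i = 0 \mid Z_i = z)$, the full four-cell table is determined.

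For the \emph{Completely Randomized Design}, first note that $\mathbb{P}(Z_i = 1) = n_t/n$ by symmetry. Conditional on $Z_i = 1$, the remaining $n-1$ units are partitioned uniformly at random into $n_t - 1$ treated and $n_c$ control units, so the chance that all $d_i$ neighbors fall into the control group is the hypergeometric probability $\binom{n_c}{d_i}/\binom{n-1}{d_i}$, which is nonzero precisely when $d_i \leq n_c$. Multiplying by $n_t/n$ yields the formula for $\mathbb{P}(Z_i = 1, E_i = 0)$, and $\mathbb{P}(Z_i = 1, E_i = 1) = n_t/n - \mathbb{P}(Z_i = 1, E_i = 0)$ gives the other cell. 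The $Z_i = 0$ cells are obtained identically, except that after conditioning on $Z_i = 0$ there are $n_t$ treated and $n_c - 1$ control units among the remaining $n-1$, which replaces $n_c$ by $n_c - 1$ in the hypergeometric expression. The boundary cases $d_i = 0$ and $d_i > n_c$ (resp.\ $d_i > n_c - 1$) are handled by inspection of when the hypergeometric count vanishes.

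For the \emph{Bernoulli design}, the key simplification is that $Z_i$ and $\{Z_j : j \in N_i\}$ are mutually independent $\mathrm{Bern}(p)$ random variables. Therefore $\mathbb{P}(E_i = 0 \mid Z_i = z) = \mathbb{P}(\text{all } d_i \text{ neighbors untreated}) = (1-p)^{d_i}$ and $\mathbb{P}(E_i = 1 \mid Z_i = z) = 1 - (1-p)^{d_i}$, which is independent of the value of $Z_i$. Multiplying by $\mathbb{P}(Z_i = 1) = p$ or $\mathbb{P}(Z_i = 0) = 1-p$ produces the four stated expressions directly.

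For the \emph{cluster randomized design}, the argument reduces to a CRD at the cluster level. Let $c_i$ denote the cluster of $i$, and let $u_i$ be the number of distinct clusters that contain $i$ or at least one of its neighbors. Since $Z_i = z_{c_i}$, we have $\mathbb{P}(Z_i = 1) = K_t/K$ by the cluster-level CRD. The crucial observation is that when $z_{c_i} = 1$, every neighbor of $i$ lying in $c_i$ is automatically treated, which (under the implicit assumption that makes $\mathbb{P}(Z_i=1,E_i=0) = 0$) forces $E_i = 1$; meanwhile $z_{c_i} = 0$ yields $E_i = 0$ iff none of the other $u_i - 1$ clusters containing $i$'s neighbors are treated. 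The latter probability is a cluster-level hypergeometric: among the $K - 1$ clusters other than $c_i$, we are choosing $K_c - 1$ to be untreated, and we require a specified set of $u_i - 1$ clusters to all be among them. Standard hypergeometric manipulation gives the telescoping product $\prod_{j=1}^{u_i - 1} (K_c - j)/(K - j)$; multiplying by $K_c/K$ and absorbing the factor into the product yields the stated $\prod_{j=1}^{u_i} (K_c - j + 1)/(K - j + 1)$. The expected obstacle is bookkeeping around the definition of $u_i$ and the boundary case $u_i = 1$, where there are no additional cluster constraints and the answer collapses to the marginal $K_c/K$.
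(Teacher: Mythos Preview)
Your proposal is correct and follows exactly the approach the paper uses for the closely related symmetric-exposure result (Theorem~\ref{thm:exposureProbSymmetric}): factor into $\mathbb{P}(Z_i=z)\,\mathbb{P}(E_i=e\mid Z_i=z)$ and evaluate the conditional by a hypergeometric count for CRD, independence for Bernoulli, and a cluster-level hypergeometric for the cluster design. The paper in fact does not supply a separate proof of Theorem~\ref{thm:propensityscores}, treating these formulas as direct consequences of the design definitions, so your write-up is if anything more complete; your flag about the implicit assumption needed for $\mathbb{P}(Z_i=1,E_i=0)=0$ in the cluster case (namely that $i$ has at least one neighbor in its own cluster) is a valid observation that the paper leaves tacit.
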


\begin{remark}
	The weights of the HT estimator depend on the exposure model and the interference graph $G$. In cases where the interference graph $G$ is not known, the HT estimator may not be usable. 
\end{remark}
\begin{remark}
The weights of the HT estimator depend only on the exposure model and the design. They do not depend on the structural model. For example, in the linear model of Potential Outcomes, the HT weights do not depend on the linearity of the model, but only on the exposure neighborhood.
\end{remark}

\begin{remark}
For a clustered randomized design, we cannot estimate $\beta_{DE}$ using HT estimators because some propensity scores are $0$.
\end{remark}

\subsection{Inadmissibility of the Horvitz-Thompson estimator}
\label{sec:inadmissible}
It is clear that the class of linear weighted unbiased estimators as given in equation \ref{eq:linearestimator} is quite large. Choosing a single estimator from this class is not possible. This is due to the fact that uniformly minimum variance estimators of $\theta$ don't exist. This can be shown by following a classical proof of Godambe who shows that uniformly minimum variance unbiased estimators of the finite sample population totals do not exist, see \cite{godambe1955unified}. On the other hand, if we restrict ourselves to a smaller class of linear unbiased estimators $\theta_2$ by requiring the weights to depend only on the treatment $z_i$ and the exposure $e_i$ of unit $i$, the HT estimator is the only unbiased estimator and hence is the minimum variance unbiased estimator. 



It is natural to ask if the HT estimator satisfies some optimality properties in a larger class of estimators. We study the admissibility of the HT estimator with respect to the mean squared error, in the class of all estimators for estimating a causal parameter $\theta$ under interference. The mean squared error of an estimator is defined as 
$$MSE (\hat \theta) = \mathbb E_{p(\textbf{Z})} [(\hat \theta - \theta)^2]
$$
\begin{definition}
	An estimator $\hat \theta_1$ is \emph{inadmissible} with respect to mean squared error if there exists an estimator $\hat \theta_2$ such that
	$MSE(\hat \theta_2) < MSE(\hat \theta_1)$  for all $\theta$.
\end{definition} For finite population inference, the admissibility of the HT estimator for estimating a finite population total in the class of all unbiased estimators is well known, see \cite{godambe1965admissibility}.


We show that the Horvitz-Thompson estimator is inadmissible under the class of all estimators with respect to the mean squared error for a special class of designs called the \emph{non-constant} designs. A \emph{non-constant} design is a design where the number of units allocated to the treatment and exposure combinations of interest are random. 

\begin{definition}[Non-Constant Designs]
	\label{def:nonconstant}
	Consider a generic estimand $\theta$ given in equation \ref{eq:param} that is a contrast between treatment and exposure combinations $\tau_0 = (z_0,e_0)$ and $\tau_1 = (z_1,e_1)$.  Let $X_0 = \sum_{i=1}^n I(Z_i=z_0, E_i=e_0)$ and $X_1 = \sum_{i=1}^n I(Z_i=z_1, E_i=e_1)$. 
	A design $\mathbb P$ is a non-constant design for an estimand $\theta$ if $X_0$ and $X_1$ are random.
\end{definition}
\begin{theorem}[Inadmissibility of HT]
	\label{thm:HTinadmissible}
	Let $\mathbb P$ be any non-constant design as given in Definition \ref{def:nonconstant}. Consider the class of all estimators of $\theta$ with respect to the design $\mathbb P$. The Horvitz-Thompson estimator is inadmissible with respect to the mean squared error in this class. 
\end{theorem}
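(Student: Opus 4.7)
My plan is to exhibit a dominating estimator by subtracting from $\hat\theta_{HT}$ a scaled, zero-mean control variate built from the random counts $X_0, X_1$; the non-constant design hypothesis is precisely what makes this control variate non-degenerate. By Definition \ref{def:nonconstant}, at least one of $X_0$ or $X_1$ has strictly positive variance under $\mathbb{P}$, so without loss of generality assume $\mathrm{Var}(X_1) > 0$. The mean $\mu_1 = \mathbb{E}[X_1] = \sum_i \pi_i(z_1,e_1)$ depends only on the design, not on the potential outcomes, so $X_1 - \mu_1$ is a known, mean-zero quantity from the design's point of view and can be added to $\hat\theta_{HT}$ without disturbing its expectation.

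Concretely, I would consider the one-parameter family
$$\hat\theta_2(c) \;=\; \hat\theta_{HT} - c\,(X_1 - \mu_1), \qquad c \in \mathbb{R}.$$
Since $\mathbb{E}[X_1 - \mu_1] = 0$ under $\mathbb{P}$, $\hat\theta_2(c)$ is unbiased for $\theta$ for every $c$ and every configuration of the Table of Science, so $\mathrm{MSE}(\hat\theta_2(c)) = \mathrm{Var}(\hat\theta_2(c))$. Expanding yields
$$\mathrm{Var}(\hat\theta_2(c)) \;=\; \mathrm{Var}(\hat\theta_{HT}) - 2c\,\mathrm{Cov}(\hat\theta_{HT}, X_1) + c^2 \mathrm{Var}(X_1),$$
a strictly convex quadratic in $c$ with unique minimizer $c^\star = \mathrm{Cov}(\hat\theta_{HT}, X_1)/\mathrm{Var}(X_1)$ and minimum value $\mathrm{Var}(\hat\theta_{HT}) - \mathrm{Cov}(\hat\theta_{HT}, X_1)^2/\mathrm{Var}(X_1)$, which is strictly smaller than $\mathrm{Var}(\hat\theta_{HT})$ whenever the design induces a non-zero correlation between $\hat\theta_{HT}$ and $X_1$. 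Because $\hat\theta_2(c^\star)$ is unbiased, the strict variance inequality transfers to the MSE at that configuration.

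The hard part is to upgrade this configuration-by-configuration variance reduction into a single estimator that dominates $\hat\theta_{HT}$ across the entire space of tables of potential outcomes, because $c^\star$ and the sign of $\mathrm{Cov}(\hat\theta_{HT}, X_1)$ both depend on unknown potential outcomes, so no fixed constant $c$ dominates $\hat\theta_{HT}$ in MSE uniformly. I would resolve this by replacing $c^\star$ with a data-driven plug-in $\hat c$ computed from the observed $(Z_i, E_i, Y_i^{obs})$, the propensity scores $\pi_i(\cdot,\cdot)$, and the design $\mathbb{P}$, yielding a generally biased estimator $\hat\theta_2(\hat c)$. The technical core of the proof would then be a MSE decomposition showing that the squared bias introduced by $\hat c - c^\star$ is of smaller order, uniformly over configurations, than the variance reduction $\mathrm{Cov}(\hat\theta_{HT}, X_1)^2/\mathrm{Var}(X_1)$ achieved by the control variate. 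An alternative route is to exhibit an explicit post-stratified or Hajek-type estimator and perform a direct second-moment comparison with $\hat\theta_{HT}$; in either case, the delicate point is a bias-variance trade-off that is driven entirely by the randomness of the sample counts, which is activated precisely by the non-constant design hypothesis.
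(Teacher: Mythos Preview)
Your control-variate route stalls exactly where you flag it, and the proposed patch does not close the gap. For any fixed $c \neq 0$, $\hat\theta_2(c)$ has \emph{larger} variance than $\hat\theta_{HT}$ at tables of science where $\Cov(\hat\theta_{HT}, X_1)$ has the sign opposite to $c$, so even weak domination already fails for a fixed-coefficient control variate. Replacing $c$ by a data-driven $\hat c$ introduces bias; you then assert, without argument, that this bias is ``of smaller order, uniformly over configurations,'' than the oracle variance reduction $\Cov(\hat\theta_{HT},X_1)^2/\Var(X_1)$. In a strictly finite-sample problem there is no asymptotic regime to make ``smaller order'' meaningful, and more concretely, at the (non-trivial) tables $\mathbb T$ for which $\Cov(\hat\theta_{HT}, X_1)=0$---a codimension-one set of potential-outcome configurations---the oracle reduction vanishes while the plug-in bias of $\hat c$ generally does not, so the uniform MSE inequality you need cannot hold. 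This is a missing idea, not a missing detail.

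The paper's argument is different and much shorter: shrinkage toward zero rather than a control variate. Set $\hat\theta_1 = (1-k)\hat\theta_{HT}$ for a fixed $0<k\le 1$; then $MSE(\hat\theta_1) = \Var(\hat\theta_{HT}) + k^2\bigl(\Var(\hat\theta_{HT})+\theta^2\bigr) - 2k\,\Var(\hat\theta_{HT})$, which is strictly below $MSE(\hat\theta_{HT})=\Var(\hat\theta_{HT})$ whenever $k < 2\Var(\hat\theta_{HT})/\bigl(\Var(\hat\theta_{HT})+\theta^2\bigr)$. The non-constant design hypothesis is invoked not to build a control variate but to show that $\Var(\hat\theta_{HT})$ is bounded away from zero over tables $\mathbb T$: if $\Var(\hat\theta_{HT})=0$ then $\hat\theta_{HT}=\theta$ a.s., and writing out the HT sum forces $X_0$ and $X_1$ to be almost-surely constant, contradicting Definition~\ref{def:nonconstant}. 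A single $k>0$ chosen below the infimum of the right-hand side then dominates uniformly.
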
 

It is can be verified that under interference, most commonly used designs such as Bernoulli design, CRD, and cluster randomized designs are non-constant. This is because the these designs control the treatment condition, but the exposure is indirectly assigned and hence the number of units under $\tau_0$ and $\tau_1$ are random. Thus, the consequence of this is that the H-T estimator is inadmissible for estimating average causal effects under interference for these designs. 

\subsection{Improving the Horvitz-Thompson Estimator}
\label{sec:improveHT}
The HT estimator is the only unbiased estimator in the class of linear unbiased estimators when the weights are not allowed to depend on the sample. However, the Horvitz-Thompson estimator is inadmissible with respect to the mean squared error in a larger class of estimators. In fact, the mean squared error of the HT estimator can be quite large, see Section \ref{sec:simulations}.  There are three general directions in which the HT estimator can be improved.
\begin{enumerate}
	\item \emph{Generalized Linear Estimators:} Allow the weights to depend on the sample and/or auxiliary information.
	\item \emph{Model dependent Unbiased Estimation:} Make structural assumptions on the Potential outcomes and seek unbiased estimators under the model assumptions.
	\item \emph{Model assisted  Estimation:} Make structural assumptions on the Potential Outcomes and seek model assisted HT estimators that are mildly biased.
\end{enumerate}
\subsubsection{Generalized Linear Estimators} The Horvitz-Thompson estimator can be improved by taking into account auxiliary information that depends on the labels of the sample. Such estimators fall into the class $\hat \theta_1$ that we have considered before and are called generalized linear estimators in the survey sampling literature, see for instance \cite{basu2011essay}. We present the so called \emph{generalized difference} estimator which can be an improvement to the HT estimator that is still unbiased. This estimator takes into account auxiliary information on the potential outcomes for each unit and in some cases can have smaller variance that the HT estimator.

Following \cite{basu2011essay}, let $a_1, \ldots, a_n$ and $b_1, \ldots, b_n$ be auxiliary information available for each unit $i$. Then the following difference estimator is an unbiased estimator of $\theta$:

\begin{align}
\hat \theta_{D}  = \frac{1}{N} \left( \sum_i (Y_i(z_1,e_1) -a_i) \frac{I_i(z_1,e_1)}{\pi_i(z_1,e_1)} - \sum_i (Y_i(z_0,e_0) -b_i) \frac{I_i(z_0,e_0)}{\pi_i(z_0,e_0)} + \sum_i \left(a_i -b_i\right) \right)
\end{align}
Here, $a_i$ and $b_i$ can be thought of as a priori information about the potential outcomes $Y_i(z_1,e_1)$ and $Y_i(z_0,e_0)$.  This estimator is a special case of the following generalized difference estimator. To define the generalized difference estimator, let
\begin{align*}
\hat{\bar{Y}}(z_1,e_1) &= \frac{1}{N} \left(\sum_i \frac{Y_i(z_1,e_1)I(Z_i=z_1,E_i=e_1)}{\pi_i(Z_i=z_1,E_1)} + \lambda_1\left(\sum_i \frac{a_iI(Z_i=z_1,E_i=e_1)}{\pi(Z_i=z_1,E_i=e_1)} - \sum_i a_i\right) \right)\\
\hat{\bar{Y}}(z_0,e_0) &= \frac{1}{N} \left(\sum_i \frac{Y_i(z_0,e_0)I(Z_i=z_0,E_i=e_0)}{\pi_i(Z_i=z_0,E_0)} + \lambda_2\left(\sum_i \frac{b_i I(Z_i=z_0,E_i=e_0)}{\pi(Z_i=z_0,E_i=e_0)} - \sum_i b_i\right) \right)
\end{align*}
where $\lambda_1$ and $\lambda_2$ are fixed numbers.
Then we have,
$$\hat \beta_{GD} = \hat{\bar{Y}}(z_1,e_1) - \hat{\bar{Y}}(z_0,e_0)$$
If we set $\lambda_1 = \lambda_2 = -1$, $\hat \theta_{GD} = \hat \theta_{D}$.

\subsubsection{Model dependent Unbiased Estimation}
In the model dependent unbiased estimation, one assumes a structural model for the potential outcomes and constructs unbiased estimators with respect to the model. For instance, consider the estimation of the direct treatment effect. Let us assume the additive model of potential outcomes with $C_i(e_i) = 0 \forall i$, as given in equation \ref{eq:additiveLinear}. Then any linear weighted estimator of the form
$$ \hat{\theta} = \sum_i w_i(z_i,e_i) Y^{obs}$$
is an unbiased estimator of the direct treatment effect, where the weights are given by the following system:
For each unit $i$, consider the linear system,
\begin{align}
w_i(1,0)\pi_i(1,0) + \ldots +w_i(1,d_i)\pi_i(1,d_i) &= \frac{1}{n} \nonumber \\
w_i(0,e_i)\pi_i(0,e_i) + w_i(1,e_i)\pi_i(1,e_i) &= 0, e_i = \{0,\ldots, d_i\} 
\end{align}
The unbiasedness of the estimator depends on the structural assumptions of the potential outcomes, which are not known in general. Instead of using modeling assumptions, one may use the model to assist in the designing estimators.
\subsubsection{Model assisted Estimation}
Model assisted approach is very popular in the survey sampling literature, see for e.g. \cite{sarndal2003model}. Assume that for each unit $i$, there exists a auxiliary information $X_i$, and the goal is to estimate the finite population total $\sum_i Y_i$. In the model assisted estimation, a working model between $Y_i$ and $X_i$ is assumed. This may introduce a mild bias in the estimate as a trade off for a reduction in the variance.  Such model assisted estimators are called Generalized Regression Estimators or GREG estimators or calibration estimators in the survey literature.

The model assisted approach fits naturally in causal inference with interference. As in the case with survey sampling, model assisted estimators can be constructed by assuming a ``working'' model of potential outcomes. There are several natural models of potential outcomes that one can consider. Moreover, a natural auxiliary variable associated with each unit is the exposure level $e_i$. When considering causal estimands with interference, the model assisted approach offers a subtle advantage. It allows one to include units with \emph{nuisance} potential outcomes in the estimator - The model relates the nuisance potential outcomes to the relevant potential outcomes, thus allowing us to use both in the estimator. For instance, consider estimation of the direct treatment effect using the HT estimator. Without any underlying model, the only units that appear in the estimator are those whose  observed exposure is $(z_i=0,e_i=0)$ and $(z_i=1,e_i=0)$. Any unit with a different exposure does not appear in the estimator. However, the model assisted approach allows us to include information from the units whose observed outcome is a nuisance potential outcome, thereby increasing the effective sample size.

For example, consider the symmetric linear model for the potential outcomes:
\begin{align}
Y_i(z_i,e_i) = \alpha + \beta z_i + \gamma e_i + \delta z_i e_i
\end{align}
where $e_i \in \{0,1,\ldots,d_i\}$ denotes the number of treated units in the interference neighborhood of unit $i$. Let $\hat \alpha$, $\hat \beta$, $\hat \gamma$ and $\hat \delta$ be the weighted least squares estimates of $\alpha, \beta, \gamma,$ and $\delta$, where the weights for each unit $i$ is $w_i = \frac{1}{\pi(Z_i,E_i)}$. For any exposure $(z,e)$, let $\hat{Y}(z,e)$ be the estimated potential outcomes using the least squares fit. Let $\epsilon_i(z,e) = Y_i(z,e) - \hat{Y}_i(z,e)$.
The GREG estimator is defined as 
\begin{align}
\hat{\beta}_{greg} = \hat{\bar{Y}}_g(z_1,e_1) - \hat{\bar{Y}}_g(z_0,e_0)
\end{align}
where,
\begin{align*}
\hat{\bar{Y}}_g(z_1,e_1) &= \frac{1}{N}\sum_{i=1}^N \hat{Y}_i(z_1,e_1) + \frac{1}{N} \sum_i \frac{\hat{\epsilon}_i(z_1,e_1)I(Z_i=z_1,E_i=e_1)}{\pi_i(Z_i=z_1,E_1)} \\
\hat{\bar{Y}}_g(z_0,e_0) &= \frac{1}{N}\sum_{i=1}^N \hat{Y}_i(z_0,e_0) + \frac{1}{N} \sum_i \frac{\hat{\epsilon}_i(z_0,e_0)I(Z_i=z_0,E_i=e_0)}{\pi_i(Z_i=z_0,E_0)}
\end{align*}




\section{New Designs for estimating ATE}
\label{sec:newdesigns}

The CRD and the Bernoulli designs are oblivious to the interference structure. There are at least two issues in using such designs for performing causal inference with interference. The first issue is that the experimenter has no control over which potential outcomes are revealed. The second issue is that the observed potential outcomes have an unequal probability of being revealed, which can lead to increased variance and bias in estimation.

In fact, a careful analysis reveals that these issues are two sides of the same coin - the exposure condition $e_i$ is only indirectly assigned. The experimenter has an indirect control over the exposure probability $\pi_i(z_i,e_i)$. Hence in some cases, relevant potential outcomes are not observed since $\pi_i(z,e)$ can be $0$, in other cases, this probability is non-uniform.

For a concrete example, consider the symmetric exposure model. Each unit has $2(d_i+1)$ potential outcomes - depending on the treatment status of unit $i$ and the number of treated units in the interference neighborhood. Here $d_i$ is the number of units in $i's$ interference neighborhood.  Consider the case when we are interested in estimating the Direct treatment effect. In this setting, the only relevant potential outcomes for each unit are $Y_i(1,0)$ and $Y_i(0,0)$. Any unit that has at least one treated neighbor is thus not included in the estimator. A naive design for a dense interference graph can lead to situations where all units only reveal nuisance potential outcomes and thus can be wasteful. Thus, we must consider new designs for estimating causal effects under interference. One more subtle issue is that a design that is optimal for estimating one type of estimand may be far from optimal for a different estimand. For example, the cluster randomized design cannot be used to estimate the Direct Treatment effect, as seen from the results of Theorem \ref{thm:propensityscores}. However, as we will see in Section \ref{sec:simulations}, simulations suggest that the cluster randomized design (along with any estimator) has the least mean sqaured error for estimating the Total Treatment effect.

\subsection{Re-randomization for estimating $\beta_1$ and $\beta_2$}
\label{sec:rerandomize}
A simple solution to avoid bad designs where nuisance potential outcomes are revealed is to re-randomize until a desired number of units fall under the treatment and exposure assignments that reveal relevant potential outcomes. For instance, consider estimating DTE using a CRD design. Let $\textbf{z}$ be a realized treatment vector and let $n(1,0)$ be the  number of units $i$ that reveal $Y_i(1,0)$. Similarly define $n(0,0)$. In general, $n(1,0) + n(0,0) < n$ and in fact, both these numbers can be $0$ for dense graphs. The re-randomization strategy would be to do a rejection sampling until $n(1,0)$ and $n(0,0)$ are larger than a given threshold. Estimation is done by using the HT estimator.

Clearly, the re-randomization approach can be very slow. An alternate strategy is to consider new designs where we maximize the number of units that reveal relevant potential outcomes. We discuss such a design to estimate the Direct Treatment Effect and the Total Treatment Effect next.

\subsection{The Independent Set Design for estimating Direct Effect}
\label{sec:independentset}
Consider the problem of estimating the direct treatment effect under the symmetric exposure model when an interference graph $G$ is known. In estimating DTE, the only relevant potential outcomes are when a unit's neighborhood in $G$ is untreated. We can construct such a design by using the concept of an independent set. An independent set $\mathbb{I}$ is a set of vertices in the graph such that no two vertex in $\mathbb I$ share an edge. A maximal independent set is a set that is not a subset of any other independent set. Independent sets have been well studied in the graph theory literature, and constructing maximal independent set is NP hard. Fortunately, for the independent set design, it is sufficient to construct a random independent set of size $k$. In fact, a design based on maximal independent sets may violate the positivity condition need to ensure estimation.

The independent set design iteratively selects nodes to be included in the independent set, also called as the Ego nodes. The nodes not in the independent set are called alters. At each step $i$, a random node is selected to be included in $\mathcal I$. Once a node is selected, the node and it's neighbors are deleted. This process is repeated until there are no more nodes remaining. Let $k$ be the number of units in the independent set. Randomization is performed by randomly assigning $k_t$ nodes to treatment.

\begin{enumerate}
	\item Let $G$ be the interference graph on $n$ nodes. Set $G_0 = G$.
	\item Let the independent set $\mathcal I = \emptyset$.
	\begin{enumerate}
			\item At step $t = 1,\ldots$, choose a unit $i$ randomly from $G_{t-1}$
			\item Insert $i$ in the independent set $\mathcal I$. 
			\item Let $G_t$ be the graph obtained by deleting $i$ and it's neighbors from $G_{t-1}$
			\item If $G_t$ is empty, stop.
	\end{enumerate}
	\item Choose $k_t$ units in $\mathcal I$ and assign them to control. 
\end{enumerate}
 
 The units in the independent set are called \emph{egos} and the units outside are called \emph{alters}. The independent set design ensures that every ego is either assigned to $(z_i=1,e_i=0)$ or $(z_i=0,e_i=0)$ condition.  Only the units in the ego set are chosen to estimate the causal effect and the alter units act as buffer units to prevent interference. Hence, it is beneficial to maximize the number of units in the ego set. 
 
 Note that every unit has a positive probability of being in $\mathcal I$ in the greedy algorithm. This may not be the case in other variants of the independent set algorithm, for e.g. where one starts with the unit with smallest degree. In this case, the causal estimate is not unbiased for the $n$ units, but it could be unbiased for those units that have a positive probability of being included in the independent $\mathcal I$. One way to solve this problem is to choose a random unit with probability $p$ and a unit with the smallest degree with probability $1-p$. 
 
It is important to note that we still need to take into account the unequal probability of revealing the potential outcomes, and hence we need to use a Horvitz-Thompson or its variant to obtain unbiased or approximately unbiased estimator. This requires the knowledge of the propensity scores. Unlike the CRD and Bernoulli designs, there are no simple expressions for the propensity scores in the random independent set design. However, they can be computed using Monte Carlo simulation.

\subsection{Cluster Randomized Design for estimating Total Treatment Effect}
\label{sec:CRD}
 A drawback of the independent set design is that it cannot be used to estimate the total treatment effect. In fact, any design that reveals the relevant potential outcomes for estimating the total treatment effect will reveal nuisance potential outcomes for estimating the direct treatment effect and vice versa.
 
 Here we consider a cluster based design for estimating $TTE$, see also \cite{ugander2013graph}. Assume that the interference neighborhood depends only on the immediate neighbors as defined by the interference graph $G$. Consider partitioning the graph $G$ into $1,\ldots, K$ clusters. In the cluster randomized design, we select $n_k$ clusters and label them with treatment. The remaining clusters are labeled with control. The nodes in each cluster are assigned to the treatment status indicated by their labels. This design attempts to increase the number of relevant potential outcomes $Y_i(1,1)$ and $Y_i(0,0)$ for estimating $TTE$ and was introduced in \cite{ugander2013graph}. Unbiased Estimation is done by using the Horvitz-Thompson estimator.
 
  
%

%

\section{Discussion}
We systematically investigated the problem of estimating causal effects when there is treatment interference. When there is arbitrary treatment interference, the number of potential outcomes explodes, rendering causal inference impossible. A starting point to resolve this issue is to posit models of potential outcomes that aim at reducing the total number of potential outcomes. These models are specified by using an interference graph $G$ and an exposure model. Using the exposure model, the potential outcomes for each unit can be decomposed into direct effects, interference effects and interaction between the two. Relying on this nonparametric linear decomposition of potential outcomes, we proposed two classes of causal estimands - the marginal effects and the average effects. These classes contain many of the popular estimands considered in the literature such as the direct treatment effect and the total treatment effect. 

Focusing on the direct treatment effect, we showed that the classical designs and difference-in-means estimators can be biased. The nature and magnitude of the bias depends on the interference graph and the exposure model, both of which may be unknown. The bias remains even after making strong linearity assumptions on the potential outcomes; however the bias is mild when the potential outcomes are linear and additive and the interference graph is sparse. On the other hand, the Horvitz-Thompson estimator is always unbiased, as long as the correct propensity scores are used. In practice, the Horvitz-Thompson estimator performs quite poorly in terms of the mean squared error due to high variance. Moreover, the weights used in the Horvitz-Thompson estimator depend on the interference graph and the exposure model which are not known in general.


A central open issue is to design estimation strategies when the interference graph and the exposure model are not known. One possibility is to consider estimators and designs that are robust to the interference graph and the exposure model, another would be to learn the interference graph and the exposure model from the data. An important related question that deserves further investigation is testing the assumed form of interference

\phantomsection
\bibliographystyle{plainnat}
\bibliography{references}

\appendix

\section{Numerical results}
\label{sec:simulations}
In this section, we carry out several simulation studies to illustrate the theoretical claims. In the first set of experiments, we study the bias of the difference-in-means estimators for a simple model. In the second set of experiments, we evaluate various estimation strategies for estimating DTE and TTE.

\subsection{Bias of the naive Estimator}
\label{sec:simulationnaiveEstimator}
In this section, we illustrate the bias of the difference of means estimator for estimating the direct effect, as a function of the interference. We consider the Completely Randomized design. The potential outcomes are modeled using the additive binary exposure model \ref{eq:addBinaryExposure}. We use an erdos renyi model to generate the interference graph on $n=100$ nodes with the probability of an egde between any two nodes $p=0.05$. The bias is estimated using the results in Proposition \ref{prop:Bias2by2ModelCRD}. 

Figure \ref{fig:binaryexposureBias} shows the bias of the difference in means estimator for estimating $DTE$ for a completely randomized design.   The results show that the bias increases with the interference effect and is negative when the interference is positive, and vice versa. Also, the bias goes down as the number of treated unit increases. In fact, Proportion  \ref{prop:Bias2by2ModelCRD} reveals that the bias is exactly $0$ whenever $n_c < \min_i d_i$. This is because when $n_c < \min_i d_i$, the propensity scores reduce to the weights of a completely randomized design, see Theorem \ref{thm:propensityscores}. Due to this, the bias of the naive estimator goes away. Note that this is true only in the additive model, i.e. when there is no interaction between the interference effect and the treatment of unit $i$ ($C_i(e)=0$).

\begin{figure}[h]
	\centering
	\includegraphics[width=0.5\textwidth]{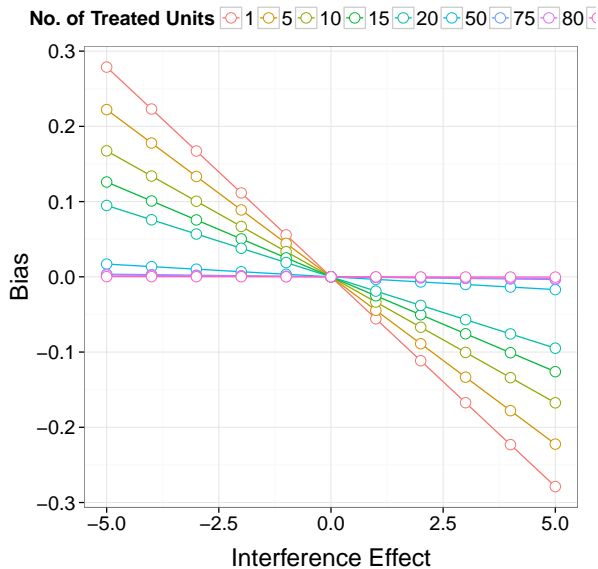}
	\caption{Bias of the Difference in means estimator in estimating $\beta_1$ as a function of interference effect and the number of treated units}
	\label{fig:binaryexposureBias}
\end{figure}
\subsection{Estimation of Direct and Total Effects}
\label{sec:DTE}
In this section, we will perform a simulation study for estimating the direct effect $ATE_1$ and total effect $ATE_2$. There are 5 different factors that play an important role in the simulation study. These factors are the \emph{exposure model}, \emph{network model}, \emph{potential outcomes model}, \emph{estimand}, \emph{design}, and finally the \emph{estimator}. The various possible settings of these factors are listed below:

\begin{enumerate}
	\item \textit{Exposure model} - Binary Exposure, Symmetric Exposure
	\item \textit{Network model} - Erdos Renyi, Barabasi Albert, Small World Networks
	\item \textit{Potential Outcomes model} - Linear, Correlated.
	\item \textit{Estimand} - Direct Effect and Total Effect 
	\item \textit{Design} - CRD, Bernoulli, Independent Design, Cluster Randomized
	\item \textit{Estimator} - Naive, Difference of Means, Horvitz-Thompson, Ratio, GREG
\end{enumerate}

Including every possible combination of the factors into the design of the simulation would lead to a large number of combinations. Moreover, not every possible combination of the factors is possible. For example, the Independent Set Design can be used only to estimate the direct effect. Similarly, the cluster randomized design is a good design only for the total effect.  To reduce the total number of experiments, we will make some design choices. 

We will focus only on the binary exposure model. We choose $n=200$ and simulate the interference graph from three different models: An Erdos renyi model with $p=0.01$, a Barabasi Albert Model with minimum degree $2$ and attractiveness parameter $\rho = 0.1$, and the small world network with neighborhood size $1$. These choice of parameters lead to three different kinds of degree distributions: Erdos Renyi graphs are low degree graphs (for e.g., $d_{\min} = 0, d_{\max} = 8, d_{med} = 2$), the Barabasi Albert graphs show a power law behavior ($d_{\min} = 2, d_{\max} = 16, d_{med} = 2$) whereas the small world network produces almost regular graphs ($d_{\min} = 1, d_{\max} = 3, d_{med} = 2$).

For the binary exposure model, the potential outcome of each unit $i$ can be parameterized by $4$ parameters as below:
$$
Y_i(z_i,e_i) = \alpha_i + \beta_i z_i + \gamma_i e_i + \delta_i e_i z_i
$$
We consider two different models for the potential outcomes: In the \emph{uncorrelated} model, the parameters of the potential outcomes $\alpha$, $\beta$, $\gamma$ and $\delta$ are generated from independent distributions as given below: 
\begin{align*}
 \alpha_i &= N(\mu=1,\sigma=0.1)\\
 \beta_i &= Unif(0,1) \\
 \gamma_i &= Unif(0,1) \\
 \delta_i &= N(2,0.1)
\end{align*}
In the \emph{correlated} model, the parameters  generated by specifying a conditional distribution in terms of two covariates $x$ and $y$ (which can be interpreted as age and gender respectively). More specifically, the correlated potential outcomes model is given by
\begin{align*}
x_i &\sim \text{ Log Normal}(\log \mu = 3,\log \sigma = 0.5) \\
y_i &\sim \text{ Bernoulli}(p=0.4) \\
\alpha_i &= 1 + 15\log(x) - 0.5y + \text{ Normal}(\mu=0,\sigma = 1+y_i
	|\log(age)|)\\
\beta_i &= -2 -0.8x_i + 0.8y_i + \text{ Normal}(\mu=0,\sigma=2)\\
\gamma_i &= 3 + 4\log(x_i) + \text{Normal}(\mu=0,\sigma=0.1|\alpha_i|)\\
\delta_i &= 2\log(x_i) + \text{gammaa}(2,2)
\end{align*}
We considered 4 different designs, and for each design, we considered $5$ estimators: The naive estimator that takes the difference between treated and untreated units. The DofM estimator that compares the average of the relevant units, the HT estimator, ratio estimator, and the GREG estimator. For the GREG estimator, we use a linear regression model regressing on the treatment status $z_i$ and the exposure status $e_i$ of each unit and the covariates, if any. 

We will present results only for the CRD, cluster randomized design and the independent set design. The results for the Bernoulli design are very similar to the CRD design. In particular, if $p$ is the probability of assigning a unit to treatment, by letting $n_t = np$, the CRD design can approximate the Bernoulli design when $n$ is large. This is because for large $n$, one can show that the exposure probabilities for both the designs are very close, since the Binomial distribution is concentrated around its mean. 

\begin{figure}[!tbp]
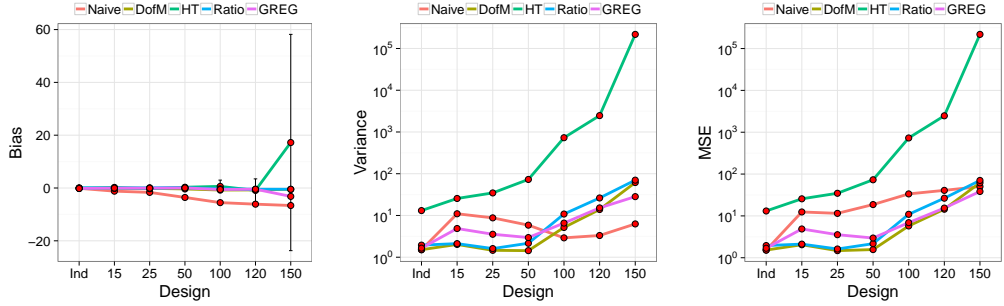
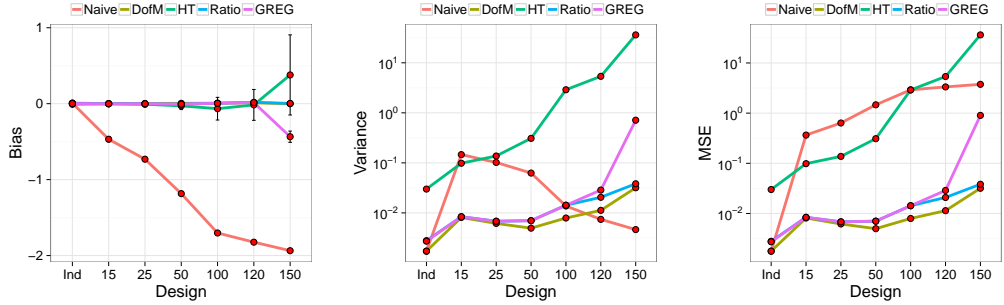
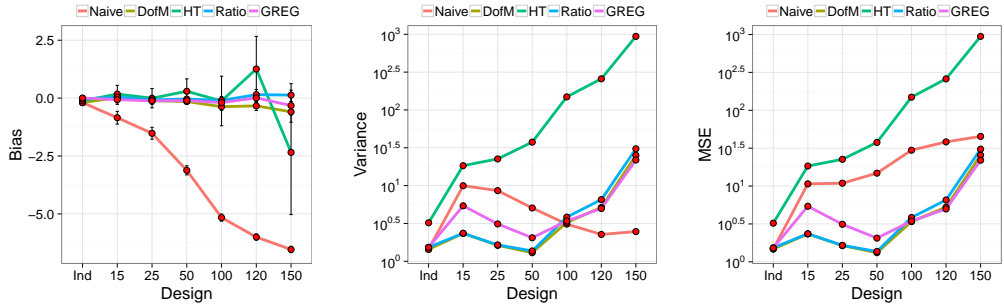

	\begin{subfigure}{\textwidth}
		\centering
		\includegraphics[page=1,scale=0.25]{../figures/LinePlotsATE1-erdos-covariates}
		\includegraphics[page=2,scale=0.25]{../figures/LinePlotsATE1-erdos-covariates}
		\includegraphics[page=3,scale=0.25]{../figures/LinePlotsATE1-erdos-covariates}
		\caption{Erdos Renyi graphs with Correlated Potential Outcomes}
		\label{fig:sfig1}
	\end{subfigure}%
	
	\begin{subfigure}{\textwidth}
		\centering
		\includegraphics[page=1,scale=0.25]{../figures/LinePlotsATE1-barabasi-linear}
		\includegraphics[page=2,scale=0.25]{../figures/LinePlotsATE1-barabasi-linear}
		\includegraphics[page=3,scale=0.25]{../figures/LinePlotsATE1-barabasi-linear}
		\caption{Barabasi graphs with Linear Potential Outcomes}
		\label{fig:sfig1}
	\end{subfigure}%
	
	\begin{subfigure}{\textwidth}
		\centering
		\includegraphics[page=1,scale=0.25]{../figures/LinePlotsATE1-smallworld-covariates}
		\includegraphics[page=2,scale=0.25]{../figures/LinePlotsATE1-smallworld-covariates}
		\includegraphics[page=3,scale=0.25]{../figures/LinePlotsATE1-smallworld-covariates}
		\caption{Small world graphs with Correlated Potential Outcomes}
		\label{fig:sfig1}
	\end{subfigure}%
	
	\caption{Estimation of Direct Effect: Bias, Variance and MSE plots of various estimators, designs for three different graph models. }
	\label{fig:directEffect}
\end{figure}

 \begin{figure}[!tbp]
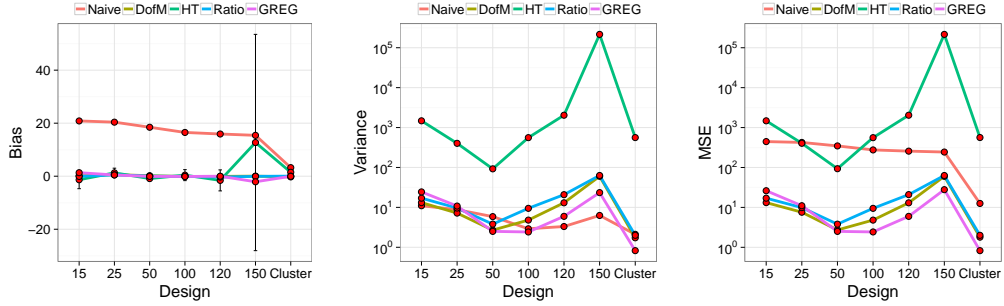
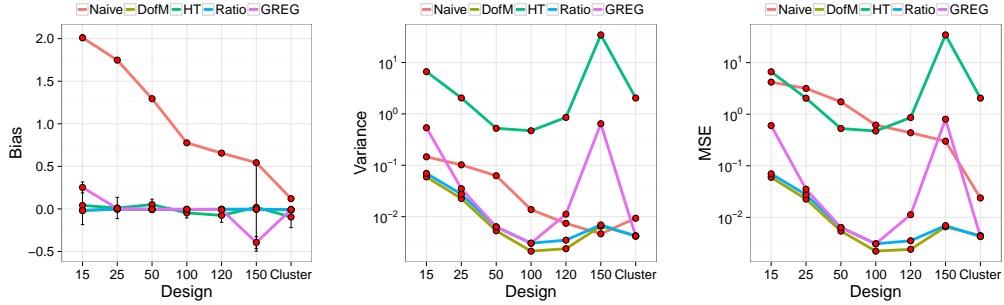
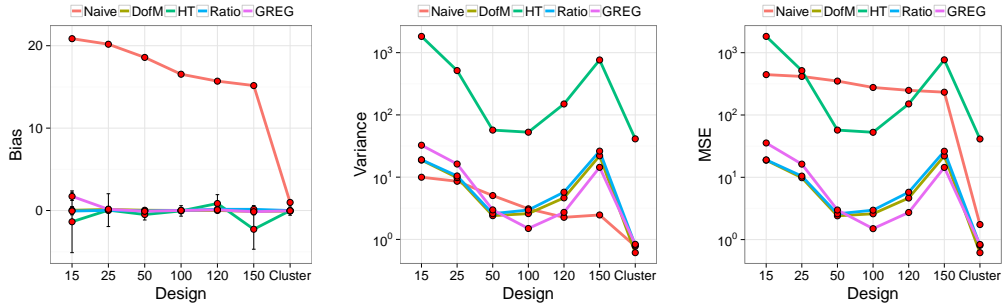

 	\begin{subfigure}{\textwidth}
 		\centering
 		\includegraphics[page=1,scale=0.25]{../figures/LinePlotsATE2-erdos-covariates}
 		\includegraphics[page=2,scale=0.25]{../figures/LinePlotsATE2-erdos-covariates}
 		\includegraphics[page=3,scale=0.25]{../figures/LinePlotsATE2-erdos-covariates}
 		\caption{Erdos Renyi graphs with Correlated Potential Outcomes}
 		\label{fig:ATE2fig1}
 	\end{subfigure}%
 	
 	\begin{subfigure}{\textwidth}
 		\centering
 		\includegraphics[page=1,scale=0.25]{../figures/LinePlotsATE2-barabasi-linear}
 		\includegraphics[page=2,scale=0.25]{../figures/LinePlotsATE2-barabasi-linear}
 		\includegraphics[page=3,scale=0.25]{../figures/LinePlotsATE2-barabasi-linear}
 		\caption{Barabasi graphs with Uncorrelated Potential Outcomes}
 		\label{fig:ATE2fig2}
 	\end{subfigure}%
 	
 	\begin{subfigure}{\textwidth}
 		\centering
 		\includegraphics[page=1,scale=0.25]{../figures/LinePlotsATE2-smallworld-covariates}
 		\includegraphics[page=2,scale=0.25]{../figures/LinePlotsATE2-smallworld-covariates}
 		\includegraphics[page=3,scale=0.25]{../figures/LinePlotsATE2-smallworld-covariates}
 		\caption{Small world graphs with Linear Potential Outcomes}
 		\label{fig:ATE2fig3}
 	\end{subfigure}%
 	
 	\caption{Estimation of Total Effect: Bias, Variance and MSE plots of various estimators and designs for two different models of Potential Outcomes. }
 	\label{fig:TotalEffect}
 \end{figure}

Figures \ref{fig:directEffect} and \ref{fig:TotalEffect} show the plots of the bias, variance and the mean squared error for estimating the direct effect and the total effect respectively. The x-axis of the plots shows the number of treated units for CRD designs, along with the Independent set design for the direct effect (Figure \ref{fig:directEffect}) and the cluster randomized design for the total effect (Figure \ref{fig:TotalEffect}). The findings of the simulation study can be summarized below:
\begin{enumerate}
	\item Designs that are optimal (minimize the mean square error) for estimating the Total Effect are sub-optimal (in fact far from optimal) for estimating the direct effect. In all the experiments, the Independent Set Design was optimal for the Direct Effect. On the other hand, the cluster randomized design was the optimal design for estimating the Total Effect . Similarly, for the CRD, when estimating the direct effect, setting $n_t$ to a smaller value is more optimal, whereas when estimating the total effect, setting $n_t$ to a larger value is more optimal. These results are in line with the intuition. The direct effect is a function of $Y_i(1,0)$ whereas the indirect effect is a function of $Y_i(1,1)$. Hence the exposure conditions for estimating the direct effect are nuisance for estimating the total effect and vice versa. In the CRD, when $n_t$ is small, more units are allocated to the condition $(Z_i=1,E_i=0)$ as opposed to $(Z_i=1,E_i=1)$. Similarly, when $n_t$ is large, more units are allocated to the condition $(Z_i=1,E_i=1)$. 
	\item The HT estimator is unbiased in theory, but in practice can be very unstable. This is demonstrated by the large Monte Carlo standard errors of the estimate of the bias. The HT estimator has the largest variance, no matter what the design is. Hence we do not recommend the HT estimator. 
	\item The difference of means estimators - naive and DofM - are biased in theory. But surprisingly, they can perform quite well in practice. The theory developed in Section \ref{sec:Bias} offers an explanation for this behavior: Recall that the naive estimator ignores the network interference and compares average outcome of treated and control units whereas the DofM estimator compares the average outcome of the relevant units. The naive estimator has two sources of bias: Irrelevant potential outcomes and incorrect weights. The bias of the naive estimator is small whenever the design reduces the number of irrelevant potential outcomes (For e.g. the Independent set design and a CRD design with a small $n_t$). Similarly, the difference of means estimator that uses the relevant potential outcomes has only one source of bias: Incorrect weights. For large $n$ and sparse graphs, this source of bias also goes to $0$. Moreover, the difference of means estimator has smaller variance. Even though it is biased, it beats the HT estimator in terms of the Mean squared error. 
	\item The Ratio and GREG estimators behave similarly. They are both biased in theory, but in practice the bias is very small. Moreover, their variance is much smaller than the HT estimator. In some cases, the GREG estimator has higher variance than the ratio estimator. This happens when the exposure probabilities are very small which  can cause the predictions of the regression model to be unstable.
	\item The exposure probabilities and hence the bias and variance of the estimators depend on the degree of the network. The simulation suggests that the choice of the design parameters (e.g. the type of clustering in the cluster randomized design) depends on the degree of the graph. Indeed as the theoretical results show, the existence of unbiased estimators is tied to the relation between the minimum degree and the variance is a function of the exposure probabilities and hence the degree.   
\end{enumerate}

%

\section{Variance of Estimators}
\label{sec:var}

The problem of estimating variance of the estimators of causal effects is central to inference. However, given the complexity of the number of estimands, estimators, and designs, it is non-trivial to identify good estimators of variance. In this section, we illustrate the complexity of estimating variance for estimands under interference by deriving the formula of population variance for some simple estimators under different models. In particular, we consider the variance of the difference-in-means estimator under the linear model and the binary exposure model, and the variance of the Horvitz-Thompson estimator.

\subsection{Sources of Variation of the Horvitz-Thompson Estimator}
\label{sec:varHT}
Let $\pi_{ij}((z_1,e_1), (z_0,e_0))$ denote the joint exposure probability of unit $i$ being in condition $(z_1,e_1)$ and unit $j$ being in condition $(z_0,e_0)$. The variance of the HT estimator can be calculated by using the standard formula in the survey sampling literature.

\begin{theorem}[Variance of Horvitz-Thompson]
	\label{thm:varHT}
	The variance of the Horvitz-Thompson Estimator is given by
	\begin{align*}
	n^2Var_{HT} { }=&  
	\sum_i \frac{\pi_i(z_1,e_1) \left(1-\pi_i(z_1,e_1) \right)}{\pi^2_i(z_1,e_1)}  Y_i^2(z_1,e_1) \\
	&+ \underset{i \neq j}{\sum \sum}\frac{\left(
		\pi_{ij}(z_1,e_1) - \pi_i(z_1,e_1)\pi_j(z_1,e_1) 
		\right)}{\pi_i(z_1,e_1) \pi_j(z_1,e_1)}  \left( Y_i(z_1,e_1) Y_j(z_1,e_1)\right) \\
		&+ \sum_i \frac{\pi_i(z_0,e_0) \left(1-\pi_i(z_0,e_0) \right)}{\pi^2_i(z_0,e_0)}  Y_i^2(z_0,e_0) \\
		&+ \underset{i \neq j}{\sum \sum}\frac{\left(
			\pi_{ij}(z_0,e_0) - \pi_i(z_0,e_0)\pi_j(z_0,e_0) 
			\right)}{\pi_i(z_0,e_0) \pi_j(z_0,e_0)}  \left( Y_i(z_0,e_0) Y_j(z_0,e_0)\right) \\
			&-2\left( \underset{i \neq j}{\sum \sum}Y_i(z_0,e_0) Y_j(z_0,e_0) \frac{\pi_{ij}((z_1,e_1),(z_0,e_0))}{\pi_i(z_1,e_1) \pi_j(z_0,e_0)} - \sum_i Y_i(z_1,e_1)Y_j(z_0,e_0)
			\right)
			\end{align*}
			\end{theorem}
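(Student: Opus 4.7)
The plan is to use the standard identity $\Var(A - B) = \Var(A) + \Var(B) - 2\Cov(A,B)$ applied to the two halves of the Horvitz--Thompson estimator. Writing $I_i^{(1)} = I(Z_i=z_1, E_i=e_1)$ and $I_i^{(0)} = I(Z_i=z_0, E_i=e_0)$, the estimator decomposes as
\begin{align*}
n\hat\theta_{HT} = \sum_i \frac{Y_i(z_1,e_1)}{\pi_i(z_1,e_1)}\, I_i^{(1)} \; - \; \sum_i \frac{Y_i(z_0,e_0)}{\pi_i(z_0,e_0)}\, I_i^{(0)}.
\end{align*}
Since the $Y_i(z,e)$ are fixed quantities, the randomness lives entirely in the indicator vectors, so I only need to compute first and second moments of $I_i^{(1)}$ and $I_i^{(0)}$ under the design $p(\textbf{Z})$.

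First I would compute $\Var\bigl(\sum_i \frac{Y_i(z_1,e_1)}{\pi_i(z_1,e_1)} I_i^{(1)}\bigr)$. Since $I_i^{(1)}$ is Bernoulli with mean $\pi_i(z_1,e_1)$, its variance is $\pi_i(z_1,e_1)(1-\pi_i(z_1,e_1))$, and for $i \neq j$ the covariance is $\Cov(I_i^{(1)}, I_j^{(1)}) = \pi_{ij}(z_1,e_1) - \pi_i(z_1,e_1)\pi_j(z_1,e_1)$, using the definition of the joint propensity score. Plugging in and dividing through by the denominator product $\pi_i(z_1,e_1)\pi_j(z_1,e_1)$ reproduces the first two lines of the stated formula. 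The same computation with the roles of $(z_1,e_1)$ swapped for $(z_0,e_0)$ yields lines three and four.

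The remaining work is the cross-covariance term $-2\Cov\bigl(\sum_i \frac{Y_i(z_1,e_1)}{\pi_i(z_1,e_1)} I_i^{(1)},\, \sum_j \frac{Y_j(z_0,e_0)}{\pi_j(z_0,e_0)} I_j^{(0)}\bigr)$. Here I have to separate the diagonal and off-diagonal contributions. For $i \neq j$, $\Cov(I_i^{(1)}, I_j^{(0)}) = \pi_{ij}((z_1,e_1),(z_0,e_0)) - \pi_i(z_1,e_1)\pi_j(z_0,e_0)$ by definition of the joint probability. For $i = j$ the indicators $I_i^{(1)}$ and $I_i^{(0)}$ are mutually exclusive since $(z_1,e_1) \neq (z_0,e_0)$, so $I_i^{(1)} I_i^{(0)} = 0$ and $\Cov(I_i^{(1)}, I_i^{(0)}) = -\pi_i(z_1,e_1)\pi_i(z_0,e_0)$.

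The main obstacle will be this diagonal bookkeeping: when the off-diagonal sum $\sum_{i\neq j}$ is reorganized, the $i=j$ contribution has to be subtracted and then combined with the mutual-exclusion term. After dividing by $\pi_i(z_1,e_1)\pi_j(z_0,e_0)$, the expectation term $\pi_i\pi_j$ in the covariance cancels the denominator and produces the lone $\sum_i Y_i(z_1,e_1)Y_i(z_0,e_0)$ correction term shown on the last line, while the $\pi_{ij}$ piece produces the double sum; the factor of $-2$ and the signs come from the identity $\Var(A-B)=\Var(A)+\Var(B)-2\Cov(A,B)$. Multiplying both sides by $n^2$ yields the stated expression.
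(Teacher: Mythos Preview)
Your proposal is correct and is precisely the standard survey-sampling derivation the paper invokes; the paper itself does not spell out a proof but simply states that ``the variance of the HT estimator can be calculated by using the standard formula in the survey sampling literature,'' and your $\Var(A-B)$ decomposition together with the indicator-moment bookkeeping is exactly that standard argument. One small point: in your description of the cross term you say the $\pi_i\pi_j$ cancellation leaves only the diagonal $\sum_i Y_i(z_1,e_1)Y_i(z_0,e_0)$, but in fact the off-diagonal $-\sum_{i\neq j}Y_iY_j$ piece also survives and combines with the diagonal to give the full product $-\sum_{i,j}Y_i(z_1,e_1)Y_j(z_0,e_0)$; the theorem statement as printed has index typos in that last line, so just make sure your written-out version carries the correct indices.
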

			
			\subsection{Sources of Variation in estimating the direct effect using difference of means estimator}
			\label{sec:varDOF}
			Here the calculations get quickly out of hand, thus we focus on the case of CRD design and a linear additive model under symmetric and binary exposure conditions.
			
			\subsubsection{Symmetric Additive linear model}
			
			\begin{proposition}[Variance of naive estimator under linear model]
				\label{prop:VarSimpleLinearModelCRD}
				Let $d_i$ be the degree of node $i$, and $m$ be the number of edges in the network.
				
				\begin{align*}
				Var(\hat{\beta}_{naive}) &= \sigma^2 \left( \frac{1}{n_t} + \frac{1}{n_c} \right) + \gamma^2 \left(c_1 m + c_2 m^2 + c_3 \sum_{i}{d_i^2} + c_4 \sum_{i \neq j}{d_i d_j}\right)\\
				\end{align*}
				where 
				\begin{align*}
				c_1 &= \frac{4n}{(n-1)(n-2)(n-3)}\left(1 - \frac{1}{n_t}\right)\left(1 - \frac{1}{n_c}\right)\\
				c_2 & = \left(
				\frac{8(n_t-1)(6n_t - 3n + 3n^2 -5nn_t)}{n(n-1)^2(n-2)(n-3)n_tn_c}
				\right)\\
				c_3 &= \left(
				\frac{4n(n_t-1)(n_t-2)}{n_tn_c(n-1)(n-2)(n-3)} + \frac{n_t}{n_c n^2} - \frac{4(n_t-1)}{n_c(n-1)(n-2)}		
				\right)\\
				c_4 & = -\frac{n_t}{n_c n^2 (n-1)}
				\end{align*}
				
				If $n_c = n_t = \frac{n}{2}$, then 
				$$Var(\hat{\tau}) = \frac{4 \sigma^2}{n}+ \gamma^2 O\left(\frac{\sum_i{d_i^2}}{n^2} + \frac{m^2}{n^2} + \frac{m}{n^2} - \frac{\sum_{ij}{d_i d_j}}{n^3} \right)$$
				\end{proposition}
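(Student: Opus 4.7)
\textbf{Proof plan for Proposition \ref{prop:VarSimpleLinearModelCRD}.}

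The plan is to decompose $\hat{\beta}_{naive}$ into a direct component and an interference component, compute the variance of each using standard CRD sampling moments, and then argue that the cross-term either vanishes or is subsumed into the $\sigma^2$ factor under the stated model. First I would substitute the potential outcomes $Y_i^{obs} = \alpha_i + \beta Z_i + \gamma e_i$ with $e_i = \sum_j g_{ij} Z_j$ into the naive estimator. Since CRD fixes $\sum_i Z_i = n_t$ and $\sum_i(1-Z_i)=n_c$, this yields a clean decomposition
\begin{align*}
\hat{\beta}_{naive} \;=\; \beta \,+\, A \,+\, \gamma B,
\end{align*}
where $A = \tfrac{1}{n_t}\sum_i \alpha_i Z_i - \tfrac{1}{n_c}\sum_i \alpha_i(1-Z_i)$ is linear in $Z$ and $B = \tfrac{1}{n_t}\sum_i e_i Z_i - \tfrac{1}{n_c}\sum_i e_i(1-Z_i)$ is quadratic in $Z$. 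Then $\Var(\hat\beta_{naive}) = \Var(A) + \gamma^2 \Var(B) + 2\gamma \Cov(A,B)$.

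For $\Var(A)$, I would invoke the classical Neyman formula for a CRD difference-in-means of fixed quantities $\alpha_1,\dots,\alpha_n$, which yields $\Var(A) = \sigma^2(1/n_t + 1/n_c)$, identifying $\sigma^2$ with the finite-population variance of the $\alpha_i$. The cross term $\Cov(A,B)$ involves third-order moments $E[Z_i Z_j Z_k]$ weighted by $\alpha_i g_{jk}$-type coefficients; I would argue it drops out under a centering or constant-$\alpha$ assumption implicit in the statement (equivalently, under a superpopulation interpretation of $\sigma^2$ with $\alpha_i$ i.i.d.\ mean $\mu$, variance $\sigma^2$), which makes the expectation over $\alpha$ kill the cross-term while preserving $\Var(A) = \sigma^2(1/n_t+1/n_c)$.

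The main work, and the main obstacle, is $\Var(B)$. I would first rewrite $B$ by interchanging sums as $B = \sum_j Z_j \sum_i g_{ij}\bigl[\tfrac{Z_i}{n_t} - \tfrac{1-Z_i}{n_c}\bigr]$, so that $B$ becomes a quadratic form in $Z$ whose coefficients depend only on the graph $G$. Using symmetry of $g$, the identities $\sum_j g_{ij} = d_i$, and pulling out the $\gamma^2$, the expansion of $B^2$ reduces to a sum of products $g_{ij}g_{kl} Z_a Z_b Z_c Z_d$ over quadruples of indices. Then $\Var(B) = E[B^2] - (E[B])^2$, and $E[B]$ is already known from Proposition~\ref{prop:biasSimpleLinearModelCRD} to be $-2m/(n(n-1))$.

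The combinatorial heavy lifting is to compute $E[B^2]$ by enumerating all coincidence patterns among the four indices entering each quartic monomial $Z_a Z_b Z_c Z_d$, using the CRD moments $E[Z_a \cdots Z_k] = n_t^{(r)}/n^{(r)}$ for $r$ distinct indices (falling factorials). Each pattern contributes a term expressible in the graph sums $m = \tfrac{1}{2}\sum_i d_i$, $\sum_i d_i^2$, and $\sum_{i\neq j} d_i d_j$ (plus terms like $\sum_i d_i = 2m$ and $(\sum_i d_i)^2$ that collapse into these via $(\sum d_i)^2 = \sum d_i^2 + \sum_{i\neq j} d_i d_j$). After subtracting $(E[B])^2 \propto m^2/(n(n-1))^2$, the remaining graph sums aggregate into the coefficients $c_1 m + c_2 m^2 + c_3 \sum_i d_i^2 + c_4 \sum_{i\neq j} d_i d_j$. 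The hard part is bookkeeping: tracking which quartic coincidence pattern produces which graph sum, and grouping the resulting rational functions in $(n,n_t,n_c)$ into the stated $c_1,\dots,c_4$. I would organize this as a table of case (four distinct, one repeat, two pairs, one triple, all equal) by graph-sum, and simplify at the end using $n_t + n_c = n$; the asymptotic check with $n_t = n_c = n/2$ (giving the displayed order-of-magnitude expression) serves as a sanity check on the algebra.
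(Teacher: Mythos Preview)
Your plan matches the paper's proof in structure: the same decomposition $\hat\beta_{naive}=\beta + (\text{direct piece}) + \gamma\cdot(\text{interference piece})$, the same reliance on an i.i.d.\ superpopulation model for the intercepts (the paper commits to $\alpha_i = \alpha + \epsilon_i$ with $\epsilon_i \Perp z_i$, $E[\epsilon_i]=0$, $\Var(\epsilon_i)=\sigma^2$, which is exactly the ``superpopulation interpretation'' you mention as one option) to kill the cross-term, and the same case analysis on CRD falling-factorial moments $E[z_{i_1}\cdots z_{i_r}] = (n_t)_r/(n)_r$. The one organizational difference worth flagging is that the paper does not expand $B^2$ directly into quartic monomials as you propose; instead it first rewrites your $B$ as
\[
B \;=\; \frac{n}{n_t n_c}\sum_{i,j} g_{ij}z_iz_j \;-\; \frac{1}{n_c}\sum_i d_i z_i
\]
(a quadratic piece minus a linear piece in $z$), and then computes $\Var\bigl(\sum g_{ij}z_iz_j\bigr)$, $\Var\bigl(\sum d_i z_i\bigr)$, and their covariance in three separate sub-propositions. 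This confines the full four-index case analysis to $\Var\bigl(\sum g_{ij}z_iz_j\bigr)$ alone, since $\Var\bigl(\sum d_i z_i\bigr)$ needs only second moments and the covariance only third moments --- which makes the bookkeeping you correctly flag as ``the hard part'' noticeably lighter than a single monolithic quartic expansion.
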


				
				\subsubsection{Binary Additive exposure model}
				
				\begin{proposition}[Variance of naive estimator under two by two model]
					\label{prop:Var2by2Model}
					Let $\rho_i = P(z_i=1,e_i=1)$, $\pi_i = P(e_i=1)$, $\rho_{ij} = P(z_i=1,e_i=1, z_j=1,e_j=1)$, $\pi_{ij} = P(e_i=1,e_j=1)$. Under equation \ref{eq:addBinaryExposure}, assuming $\beta_i = \beta$, we have, for a CRD
					\begin{align*}
					Var\left[\hat{\tau}\right] &=
					\frac{n^2}{n_t^2 n_c^2} \left[ \sum_i \alpha_i^2 \frac{n_tn_c}{n} + \sum_{i \neq j} \alpha_i \alpha_j \frac{n_t}{n^2}\right] \\
					&+ \frac{n^2}{n_t^2 n_c^2} \left[ \sum_i \gamma_i^2 \rho_i(1-\rho_i) + \sum_{i \neq j} \gamma_i \gamma_j (\rho_{ij} - \rho_i \rho_j)\right] \\
					&+ \frac{1}{n^2} \left[ \sum_i \gamma_i^2 \pi_i(1-\pi_i) + \sum_{i \neq j} \gamma_i \gamma_j (\pi_{ij} - \pi_i \pi_j)\right] \\
					&+ \frac{n^2}{n_t^2 n_c^2} \left[ \sum_i \alpha_i \gamma_i \rho_i\frac{n_c}{n} + \sum_{i \neq j} \alpha_i \gamma_j \frac{n_t^2}{n^2}\left( \E\{e_j|z_j=z_i=1\} - \E\{e_j|z_j=1\}\right)\right] \\
					&- \frac{1}{n_t n_c} \left[ \sum_i \gamma_i^2 \rho_i(1-\pi_i) + \sum_{i \neq j} \gamma_i \gamma_j\left( \E\{e_je_iz_i\} - \rho_i \pi_j\right)\right] \\	
					&- \frac{1}{n_t n_c} \left[ \sum_i \alpha_i \gamma_i \left(\rho_i-\frac{n_t}{n}\pi_i\right) + \sum_{i \neq j} \alpha_i \gamma_j\left( \E\{z_ie_j\} - \frac{n_t}{n}\pi_j\right)\right]
					\end{align*}
					\end{proposition}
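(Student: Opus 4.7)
The plan is to plug the additive binary exposure model into the naive estimator and expand the randomization variance pair by pair. Under CRD the denominators $n_t, n_c$ are fixed, so substituting $Y_i(Z_i,E_i) = \alpha_i + \beta Z_i + \gamma_i E_i$ gives
\begin{align*}
\hat{\tau} \;=\; \beta \;+\; T_1 - T_2 \;+\; T_3 - T_4,
\end{align*}
with $T_1 = \tfrac{1}{n_t}\sum_i \alpha_i Z_i$, $T_2 = \tfrac{1}{n_c}\sum_i \alpha_i(1-Z_i)$, $T_3 = \tfrac{1}{n_t}\sum_i \gamma_i Z_i E_i$, $T_4 = \tfrac{1}{n_c}\sum_i \gamma_i(1-Z_i)E_i$. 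The hypothesis $\beta_i = \beta$ is used precisely here: it collapses the direct-treatment contribution into a constant that drops out of $\Var(\hat{\tau})$. It then suffices to expand $\Var(T_1 - T_2 + T_3 - T_4)$ as four variances and six signed covariances and collect.

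Each of the ten double-sum pieces is expressible in the quantities the proposition names. I would use the CRD moments $\Var(Z_i) = n_tn_c/n^2$ and $\Cov(Z_i,Z_j) = -n_tn_c/(n^2(n-1))$, the marginal exposure probabilities $\rho_i = P(Z_i=1,E_i=1)$, $\pi_i = P(E_i=1)$, the pairwise probabilities $\rho_{ij}, \pi_{ij}$, and the mixed moments $E[Z_iE_j]$, $E[Z_iE_iE_j]$, together with their conditional forms $E[E_j\mid Z_j=1]$ and $E[E_j\mid Z_j=Z_i=1]$ that arise from the cross terms. I would organize the calculation into three blocks: the pure-$\alpha$ block $\Var(T_1-T_2)$ gives the first line of the statement directly from the CRD covariance formulas; the pure-$\gamma$ block $\Var(T_3 - T_4)$ splits into three pieces by rewriting $(1-Z_i)E_i = E_i - Z_iE_i$, which yields the $\sum \gamma_i^2\rho_i(1-\rho_i)$, $\sum\gamma_i^2\pi_i(1-\pi_i)$ and $\sum\gamma_i^2\rho_i(1-\pi_i)$ contributions on lines 2, 3 and 5 (together with the identity $\tfrac{1}{n_t}+\tfrac{1}{n_c} = \tfrac{n}{n_tn_c}$); the $\alpha\gamma$ cross block $2\,\Cov(T_1-T_2,T_3-T_4)$ yields lines 4 and 6 by the same split.

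The main obstacle is combinatorial bookkeeping rather than anything probabilistic. The joint moments $\rho_{ij}, \pi_{ij}$ and $E[Z_iE_iE_j]$ depend intricately on the overlap of interference neighborhoods $N_i \cap N_j$ through $E_iE_j = I\bigl(\sum_k g_{ik}Z_k\geq 1\bigr)\,I\bigl(\sum_k g_{jk}Z_k\geq 1\bigr)$; the proposition sidesteps this by keeping them as abstract symbols, so the real task is to express each of the ten variance/covariance pieces consistently in these symbols and to carefully separate the diagonal ($i=j$) from the off-diagonal ($i\neq j$) contributions, while tracking the signs and the distinct prefactors (powers of $\tfrac{1}{n_t}, \tfrac{1}{n_c}$ and $\tfrac{n}{n_tn_c}$) generated by the two normalizations in the four $T_k$. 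No new probabilistic idea is needed beyond the expansion, the CRD covariance lemma, and the careful collection of terms.
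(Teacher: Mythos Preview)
The paper does not actually supply a proof of this proposition; it is stated in the variance appendix without argument. Your plan is correct and is exactly the technique the paper uses for the companion result under the symmetric linear model (Proposition~\ref{prop:VarSimpleLinearModelCRD}): substitute the model into $\hat\tau$, use $Z_i^2=Z_i$ and $Z_i(1-Z_i)=0$ to collapse the $\beta$-part to a constant, and then expand the variance of the remaining linear combination pair by pair with the CRD second moments. Your decomposition $\hat\tau=\beta+(T_1-T_2)+(T_3-T_4)$ and the further rewrite $(1-Z_i)E_i=E_i-Z_iE_i$ are both correct and lead directly to the six displayed blocks, so nothing is missing conceptually---only the bookkeeping you describe.
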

					
\section{Unbiasedness of difference-in-means estimators for estimating marginal estimands}
\label{sec:dimMarginal}
 When the treatment assignment strategy $p(\textbf{Z})$ is equal to the policy $\psi$ that defines the estimand, the difference in means estimator is unbiased for estimating $\theta(\psi)$. 
 \begin{proposition}
 	\label{prop:unbiasedMean}
 	Let $\psi$ be a restricted Bernoulli or a CRD policy and let the treatment assignment mechanism $p(\textbf{Z})$ be equal to the policy $\psi$. 
 	Then $E[\hat \beta_{naive}] = \theta(\psi)$.
 	\end{proposition}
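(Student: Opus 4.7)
The plan is to first use the consistency assumption to rewrite the observed outcomes so that they become explicit functions of fixed potential outcomes: $Y_i^{\text{obs}} Z_i = Y_i(1, E_i) Z_i$ and $Y_i^{\text{obs}}(1 - Z_i) = Y_i(0, E_i)(1 - Z_i)$. This converts $\hat\beta_{naive}$ into a ratio of random sums whose only source of randomness is $\textbf{Z}$, after which the task reduces to computing two expectations under $p = \psi$.

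For the CRD policy the proof is almost immediate. Since $\sum_i Z_i = n_t$ and $\sum_i(1-Z_i) = n_c$ are constants, the estimator reduces to
\[
\hat\beta_{naive} = \frac{1}{n_t}\sum_i Y_i(1,E_i)\,Z_i \;-\; \frac{1}{n_c}\sum_i Y_i(0,E_i)(1-Z_i).
\]
Taking expectations term by term and decomposing on $(Z_i, E_i)$ gives
$E[\hat\beta_{naive}] = \frac{1}{n_t}\sum_i\sum_{e_i} Y_i(1,e_i)\,\psi(Z_i = 1, E_i = e_i) - \frac{1}{n_c}\sum_i\sum_{e_i} Y_i(0,e_i)\,\psi(Z_i = 0, E_i = e_i)$. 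Under CRD the unit-level symmetry yields $\psi(Z_i = z) = n_z/n$, so I would factor $\psi(Z_i = z, E_i = e) = (n_z/n)\,\psi(E_i = e \mid Z_i = z)$; the $n_t$ and $n_c$ then cancel and what remains is exactly $\tfrac{1}{n}\sum_i[\bar Y_i(1;\psi) - \bar Y_i(0;\psi)] = \theta(\psi)$ by the definition of $\bar Y_i$.

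For the restricted Bernoulli policy the main obstacle is that $K := \sum_i Z_i$ is random and appears in the denominator, so the clean factorization above is unavailable. The plan is to condition on $K = k$: under the restricted Bernoulli, the law of $\textbf{Z}$ given $K = k$ is uniform on $\{0,1\}^n$-vectors of weight $k$, i.e.\ identical to a CRD with $n_t = k$. The CRD computation applied conditionally reduces the task to evaluating sums of the form $E[I(Z_i = 1, E_i = e_i)/K]$, which I would expand via the decomposition $\psi(Z_i = 1, E_i = e_i, K = k) = (k/n)\,P_\psi(K = k)\,\psi_k(E_i = e_i \mid Z_i = 1)$, where $\psi_k$ denotes the CRD-with-$n_t = k$ policy. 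The $1/k$ factor from the estimator cancels the $k/n$ coming from $\psi(Z_i = 1 \mid K = k)$, leaving a mixture over $k$ with weights $P_\psi(K = k)$ that reassembles the marginal exposure probability appearing in $\bar Y_i(1;\psi)$; the symmetric argument handles the control term.

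The hardest piece is this final reassembly step for the restricted Bernoulli: it requires verifying that the $K$-mixture of CRD conditional exposure probabilities reproduces the target marginal $\psi(E_i = e_i \mid Z_i = z)$ used in $\theta(\psi)$. I would check this by direct calculation using the explicit restricted-Bernoulli weights $P_\psi(K = k) \propto \binom{n}{k}p^k(1-p)^{n-k}$ together with the exposure-probability formulas from Theorem \ref{thm:exposureProbSymmetric}; the CRD case, by contrast, collapses to a one-line symmetry argument and so serves as the base step against which the Bernoulli derivation is calibrated.
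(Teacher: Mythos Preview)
Your approach is exactly the paper's: condition on $K = \sum_i Z_i$, use that the conditional law of $\textbf{Z}$ given $K = k$ is a CRD with $n_t = k$, cancel the $1/k$ against the $k/n$ coming from $\psi(Z_i = 1 \mid K = k) = k/n$, and then argue that the leftover $K$-mixture reassembles into the marginal $\psi(E_i = e_i \mid Z_i = 1)$ appearing in $\bar Y_i(1;\psi)$. For the CRD this goes through cleanly, since $K$ is deterministic and no mixture arises.

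For the restricted Bernoulli, however, the ``hardest piece'' you flag will not close if you actually carry out the calculation. After the cancellation you are left with
\[
\frac{1}{n}\sum_{k} \psi_k(E_i = e_i \mid Z_i = 1)\,P_\psi(K = k),
\]
whereas the target $\tfrac{1}{n}\psi(E_i = e_i \mid Z_i = 1)$ expands, by the law of total probability, as
\[
\frac{1}{n}\sum_{k} \psi_k(E_i = e_i \mid Z_i = 1)\,P_\psi(K = k \mid Z_i = 1).
\]
These differ because conditioning on $\{Z_i = 1\}$ size-biases $K$: one has $P_\psi(K = k \mid Z_i = 1) = \dfrac{k}{\E_\psi[K]}\,P_\psi(K = k)$, which coincides with $P_\psi(K = k)$ only when $K$ is degenerate. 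A three-unit example with $E_1 = Z_2$ already shows the two mixtures disagree for $p \in (0,1)$. The paper's own proof glosses over exactly this point --- it writes $\pr{E_i = e_i \mid Z_i = 1}$ inside the outer $\E_K$ without saying whether the probability is conditional on $K$, and either reading leaves a gap at the same step --- so your plan is a faithful reconstruction of the paper's argument, but the verification you defer to ``direct calculation using the explicit restricted-Bernoulli weights'' will not in fact produce the claimed identity.
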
 
\section{Bias of the difference in means estimator for $TTE$}
\begin{proposition}
	\label{prop:biasGeneralCasebeta2}
	Consider the parametrized Potential outcomes given in equation \ref{eq:NIparam}:
	\begin{align*}
	Y_i(z_i,e_i) = A_i(z_i) + B_i(e_i) + z_iC_i(e_i)
	\end{align*}
	For any design $p(\textbf{Z} = \textbf{z})$, the bias of the difference in means estimator $\hat{\beta}$  (equation \ref{diffofMeans}) for estimating $\beta_2$ is:
	\begin{align*}
	b_2 &= E[\hat{\beta}] - \beta_2 \\
	&=\sum_i 
	\left(A_i(1)\left(\alpha_i(1) -\frac{1}{n}\right) - A_i(0)\left(\alpha_i(0) + \frac{1}{n}\right)\right) + \sum_i B_i(1)\left(\alpha_i(1,1) - \alpha_i(1,0) -\frac{1}{n} \right) \\
	&+\sum_i C_i(1) \left(\alpha_i(1,1) - \frac{1}{n} \right)\\
	&+ \sum_i\sum_{e_i\neq \{0,1\}} 
	B_i(e_i) \left(\alpha_i(1,e_i) - \alpha_i(0,e_i)\right) \\
	&+ \sum_i\sum_{e_i\neq \{0,1\} } C_i(e_i)\alpha_i(1,e_i)
	\end{align*}
	
	where,
	\begin{align*}
	\alpha_i(z_i,e_i) = E\left[ \frac{I(Z_i=z_i,E_i=e_i)}{\sum_i{I(Z_i=z_i)}} \right] \text{ and } \alpha_i(z_i) = E\left[ \frac{I(Z_i = z_i)}{\sum_i I(Z_i=z_i)} \right].
	\end{align*}
\end{proposition}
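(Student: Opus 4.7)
The plan is to mirror the argument used for the direct effect in Proposition \ref{prop:biasGeneralCase}, now carefully isolating the potential outcomes $Y_i(1,1)$ and $Y_i(0,0)$ that enter the definition of $\beta_2 = TTE$. First, by the consistency assumption write
\[
Y_i^{obs} = \sum_{z,e} Y_i(z,e)\, I(Z_i=z, E_i=e),
\]
and substitute into $\hat\beta_{naive}$. Taking expectations over $p(\textbf{Z})$ and using the definitions
\[
\alpha_i(z_i,e_i) = E\!\left[\frac{I(Z_i=z_i,E_i=e_i)}{\sum_j I(Z_j=z_i)}\right], \qquad \alpha_i(z_i) = \sum_{e_i} \alpha_i(z_i,e_i),
\]
yields
\[
E[\hat\beta_{naive}] = \sum_i \sum_{e_i} Y_i(1,e_i)\,\alpha_i(1,e_i) \;-\; \sum_i \sum_{e_i} Y_i(0,e_i)\,\alpha_i(0,e_i).
\]

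Next I would substitute the parametrization $Y_i(z_i,e_i) = A_i(z_i) + B_i(e_i) + z_i C_i(e_i)$ from Proposition \ref{prop:param} and collapse sums using the conventions $B_i(0) = C_i(0) = 0$. The $A_i$ contributions combine into $\sum_i [A_i(1)\alpha_i(1) - A_i(0)\alpha_i(0)]$ because $\sum_{e_i}\alpha_i(z,e_i) = \alpha_i(z)$. The additive $B_i$ terms survive only for $e_i \neq 0$, contributing $\sum_i \sum_{e_i\neq 0} B_i(e_i)[\alpha_i(1,e_i) - \alpha_i(0,e_i)]$. The interaction term $z_i C_i(e_i)$ appears only in the treated branch and only for $e_i\neq 0$, giving $\sum_i \sum_{e_i\neq 0} C_i(e_i)\alpha_i(1,e_i)$.

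The final step is to subtract
\[
\beta_2 = \frac{1}{n}\sum_i \bigl[A_i(1) - A_i(0) + B_i(1) + C_i(1)\bigr].
\]
The $A_i$ cancellation produces $\sum_i\bigl[A_i(1)(\alpha_i(1) - 1/n) - A_i(0)(\alpha_i(0) - 1/n)\bigr]$, exactly as in the direct-effect case. The key step, where this proof differs from that of Proposition \ref{prop:biasGeneralCase}, is that $\beta_2$ contains $B_i(1)$ and $C_i(1)$ explicitly, so I would peel off the $e_i=1$ summands from the $B_i$ and $C_i$ sums and combine them with the $-\frac{1}{n}\sum_i B_i(1)$ and $-\frac{1}{n}\sum_i C_i(1)$ pieces coming from $\beta_2$. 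This produces the coefficients $\alpha_i(1,1) - \alpha_i(0,1) - 1/n$ on $B_i(1)$ and $\alpha_i(1,1) - 1/n$ on $C_i(1)$, while the remaining exposure levels $e_i \notin \{0,1\}$ survive as the two residual sums displayed in the statement.

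No step is genuinely difficult; the entire proof is linearity of expectation plus bookkeeping. The main thing to be careful about is partitioning every sum over exposure levels into three disjoint pieces — $e_i = 0$ (which vanishes by the conventions $B_i(0) = C_i(0) = 0$), $e_i = 1$ (which must be combined with the $B_i(1)$ and $C_i(1)$ terms in $\beta_2$), and the remaining $e_i \notin \{0,1\}$ — and doing this symmetrically in both the $Z_i=1$ and $Z_i=0$ branches so that the residual nuisance terms combine with the correct signs.
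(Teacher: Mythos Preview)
Your argument is correct and is exactly the route the paper takes for the analogous direct-effect bias in Proposition~\ref{prop:biasGeneralCase}; the paper does not write out a separate proof for the $\beta_2$ version, so your derivation is effectively the intended one.

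One thing worth flagging explicitly: your (correct) bookkeeping yields the $A_i(0)$ coefficient $-(\alpha_i(0)-1/n)$ and the $B_i(1)$ coefficient $\alpha_i(1,1)-\alpha_i(0,1)-1/n$, whereas the displayed statement has $-(\alpha_i(0)+1/n)$ and $\alpha_i(1,1)-\alpha_i(1,0)-1/n$. These are sign/index typos in the paper's statement (the same $+1/n$ slip also appears verbatim in the statement and proof of Proposition~\ref{prop:biasGeneralCase}), not errors in your computation; you glide over this when you say ``exactly as in the direct-effect case,'' and it would be better to call out the discrepancy.
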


\section{Bias of the naive estimator for the direct effect under Cluster Randomized Design}
\label{sec:clusterRandomized}
Even without interference, the difference of means estimator is biased in cluster randomization. A simple reason is that the number of nodes in each cluster is not fixed and random.
Consider the following simple linear model of potential outcomes:
\begin{align}
\label{eq:simpleLinearModel}
Y_i = \alpha_i +  \beta_i z_{i} + \gamma \left( \sum_{ij}{g_{ij}z_j} \right)
\end{align}
For estimating the bias in the clustered randomized trial, let $c_k$ be the covariance between $Z_k$, the treatment status of cluster $k$ and $\frac{Z_k}{n_t}$. Similarly, let $d_k$ be the covariance between $1-Z_k$ and $\frac{1-Z_k}{n_c}$.

\begin{proposition}
	\label{prop:biasSimpleLinearModelCluster}
	Consider the simple linear model of the potential outcomes model specified by equation \ref{eq:simpleLinearModel}. Under a cluster randomized design, we have,
	\begin{align*}
	E[\hat{\beta}_{naive}] - \beta_{DE} 
	&=\gamma - \frac{K}{K_t} \sum_k \bar{\beta}_k n_k^2c_k 
	+ \frac{K}{K_t}\sum_k \bar{\alpha}_k n_k \left( d_k - c_k \right)
	\end{align*}
	where $\bar{\beta}_k$ is the average of $\beta_i$ for all nodes in cluster $k$. Similarly, $\bar{\alpha}_k$ is the average of $\alpha_i$ in cluster $k$.
\end{proposition}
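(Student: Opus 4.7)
The plan is to substitute the linear model $Y_i^{obs} = \alpha_i + \beta_i Z_{c_i} + \gamma \sum_j g_{ij} Z_{c_j}$ into $\hat{\beta}_{naive}$ and use the cluster structure $Z_i = Z_{c_i}$ together with $Z_k^2 = Z_k$ to aggregate everything at the cluster level. First I would write
\begin{align*}
\hat{\beta}_{naive} = \frac{\sum_k n_k(\bar{\alpha}_k + \bar{\beta}_k)Z_k}{n_t} - \frac{\sum_k n_k \bar{\alpha}_k(1-Z_k)}{n_c} + \gamma\!\left[\frac{\sum_{i,j}g_{ij}Z_{c_i}Z_{c_j}}{n_t} - \frac{\sum_{i,j}g_{ij}(1-Z_{c_i})Z_{c_j}}{n_c}\right],
\end{align*}
where $\bar{\alpha}_k = \tfrac{1}{n_k}\sum_{i\in k}\alpha_i$ and similarly for $\bar{\beta}_k$, and then take expectations of each piece separately under the cluster-level CRD.

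For the $\alpha$- and $\beta$-pieces, the key identity is $E[Z_k \cdot Z_k/n_t] = E[Z_k/n_t]$, which combined with $c_k = \mathrm{Cov}(Z_k, Z_k/n_t)$ and $E[Z_k] = K_t/K$ lets me solve algebraically for $E[Z_k/n_t]$ as an explicit function of $c_k$, $K_t$, $K$; the analogous manipulation using $d_k$ and $E[1-Z_k] = K_c/K$ gives $E[(1-Z_k)/n_c]$. Substituting these closed-form ratio expectations and then subtracting $\beta_{DE} = \tfrac{1}{n}\sum_k n_k\bar{\beta}_k$ collects the $\alpha$-contribution into the form $\tfrac{K}{K_t}\sum_k \bar{\alpha}_k n_k(d_k - c_k)$ (the $d_k - c_k$ difference arising from the asymmetry between treated- and control-side ratio expectations), and the $\bar{\beta}_k$-contribution into $-\tfrac{K}{K_t}\sum_k \bar{\beta}_k n_k^2 c_k$, where the extra factor of $n_k$ emerges after invoking the identity $\sum_k n_k Z_k = n_t$ to normalize.

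The main obstacle will be the $\gamma$-piece. I would split edges by cluster membership: letting $m_k = \sum_{i,j\in k} g_{ij}$ denote intra-cluster edges and $m_{kl}$ cross-cluster edges, the within-cluster edges give $\gamma\sum_k m_k\,E[Z_k/n_t]$, which is routine, but the cross-cluster edges contribute $\gamma\sum_{k\neq l} m_{kl}\bigl(E[Z_kZ_l/n_t] - E[(1-Z_k)Z_l/n_c]\bigr)$, requiring the joint distribution of $(Z_k,Z_l,n_t,n_c)$ under the CRD on clusters. The hard part is to show that after exploiting symmetry of $G$, the identity $Z_kZ_l + (1-Z_k)Z_l = Z_l$, and the constraints $\sum_k n_k Z_k = n_t$ and $\sum_k n_k(1-Z_k) = n_c$, all of these interference contributions collapse to the single bare $\gamma$ appearing in the statement. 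I would approach this by a careful double-counting of edges grouped by endpoint clusters, combined with the same $Z_k^2 = Z_k$ trick used above, to reduce the cross-cluster sums to telescoping identities that absorb into the coefficient of $\gamma$. Once the $\gamma$-piece is established, summing the three contributions yields the claimed expression for $E[\hat{\beta}_{naive}] - \beta_{DE}$.
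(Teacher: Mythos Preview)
Your overall strategy---substitute the model, exploit $z_i^2=z_i$ and $z_i(1-z_i)=0$, aggregate to the cluster level, and handle the $\alpha$-, $\beta$-, and $\gamma$-pieces separately---is exactly the paper's. The differences are in how each piece is closed out.

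For the $\gamma$-piece you plan an elaborate split into intra- and cross-cluster edges followed by joint-distribution computations. The paper instead dispatches it with a purely \emph{deterministic} (sample-path) identity: writing the interference contribution as $\sum_{i\neq j}\bigl(\tfrac{z_iz_j}{n_t} - \tfrac{(1-z_i)z_j}{n_c}\bigr)$, one has $\sum_{i,j}\tfrac{z_iz_j}{n_t}=\sum_j z_j=\sum_{i,j}\tfrac{(1-z_i)z_j}{n_c}$, and subtracting the diagonal ($i=j$) term leaves exactly $-1$ for every realization of $\mathbf{z}$. No expectations, no edge-counting, no cluster-by-cluster bookkeeping. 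Note, though, that the paper's proof silently drops $g_{ij}$ in this step, so the identity as written corresponds to complete-graph interference; for a general $G$ your edge-by-cluster decomposition would genuinely be needed, but then the $\gamma$-contribution would depend on $G$ rather than collapse to a bare constant.

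For the $\alpha$- and $\beta$-pieces, the paper does not solve for $E[Z_k/n_t]$ via your $Z_k^2=Z_k$ trick. It instead applies the identity $E[U/V]=\bigl(E[U]-\mathrm{Cov}(U/V,V)\bigr)/E[V]$ with $U=n_kZ_k$ and $V=n_t$, which produces the leading term $n_k/n$ and the correction proportional to $n_k^2 c_k$ in one stroke. Your route gives $E[Z_k/n_t]=(K/K_c)c_k$, which is a different (and arguably cleaner) algebraic packaging; reconciling it with the stated $n_k^2$, $K/K_t$ form is not automatic, and your appeal to ``$\sum_k n_k Z_k=n_t$ to normalize'' is not where the extra $n_k$ actually arises---in the paper it comes from taking $U=n_kZ_k$ rather than $U=Z_k$ in the covariance identity.
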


\section{Proofs}
\subsection{Proof of Proposition \ref{prop:impossible}}
\begin{proof}
	Since there are no further assumptions on the potential outcomes, only one entry of the table of science is observable due to the fundamental problem of causal inference.  As causal effects are defined as contrasts between two distinct treatment assignments, they are unidentifiable as only one entry of the Table \ref{tab:tableofScience} is observed.
\end{proof}

\subsection{Proof of Proposition \ref{prop:param}}
	\begin{proof}
	\begin{align*}
	Y_i(\textbf{z}) &= Y_i(z_i,\textbf{z}_{N_i}) \\
	&= Y_i(z_i,f(\textbf{z}_{N_i}))\\
	&= Y_i(z_i,e_i) \\
	&=  A_i(z_i) + B_i(e_i) + z_iC_i(e_i) 
	\end{align*}
	Let us show that this is a linear map with full rank. For each unit $i$, there are a total of $2K_i$ distinct potential outcomes. In the linear parametrization, there are also a total of $2 + K_i - 1 + K_i - 1 = 2K_i$ parameters. These are: $\{A_i(0), A_i(1)\}, \{B_i(1), \ldots, B_i(K-1)\}, \{C_i(1), \ldots, C_i(K-1)\}$. For each $i$, the inverse map is given by:
	\begin{align*}
	A_i(z_i) & = Y_i(z_i,0) \\
	B_i(e_i) & = Y_i(0,e_i) - Y_i(0,0) \\
	C_i(e_i) & = Y_i(1,e_i) - Y_i(1,0) - Y_i(0,e_i) + Y_i(0,0)
	\end{align*}
	Hence the map has full rank.
	Also note that $B_i(0) = Y_i(0,0) - Y_i(0,0) = 0$ and
	$C_i(0) = Y_i(1,0) - Y_i(1,0) - Y_i(0,0) + Y_i(0,0) = 0 $.
\end{proof}

 \subsection{Proof of Proposition \ref{prop:SUTVAATE}}
 \begin{proof}
 	Note that $\bar{Y_i}(z_i;\phi) = \sum_{e_i} Y_i(z_i,e_i) \phi(E_i=e_i|Z_i=z_i) = Y_i(z_i)$.
 	Hence $\theta(\phi) = \frac{1}{n}\left(Y_i(1) - Y_i(0)\right)$.
 	The other results also follow from the definition.
 \end{proof}

\subsection{Proof of Proposition \ref{prop:ate1vsate2}}
	From Proposition \ref{prop:param} , we ave the following representation of the Potential Outcomes:
		$$Y_i(z_i,e_i) = \alpha_i + \beta_i z_i + B_i(e_i)  + z_i C(e_i)$$
		where $B_i(0) = 0$ and $C_i(0) = 0$.
	By the definition of the direct treatment effect, we have
	\begin{align*}
	\beta_{DE} &= \frac{1}{n}\left(\sum_{i=1}^n Y_i(1,0) - Y_i(0,0)\right) \\
	&=\frac{1}{n}\left(\sum_{i=1}^n \alpha_i + \beta_i - \alpha_i\right) \\
	&= \frac{1}{n}\sum_{i=1}^n\beta_i = \beta
	\end{align*}	
	Similarly, for the total treatment effect we have,
	\begin{align*}
		\beta_{TE} &= \frac{1}{n}\left(\sum_{i=1}^n Y_i(1,1) - Y_i(0,0)\right) \\
		&=\frac{1}{n}\left(\sum_{i=1}^n \alpha_i + \beta_i + B_i(1) + C_i(1)- \alpha_i\right)
	\end{align*}	
	The results follow by substituting $B_i(1)	= \gamma$ and $C_i(1) = 0$.

\subsection{Proof of Proposition \ref{prop:relation}}
	The result follows from the definition of the policies.
	
\subsection{Proof of Theorem \ref{thm:existence}}
\begin{proof}
	Let $\hat \theta$ be an unbiased estimator of $\theta$ under the design $p(\textbf{Z})$.
	Assume to the contrary that there exists a unit $i$ and a relevant potential outcome $j$ such that $\pi_i(z_j,e_j)$ is $0$. This implies that unit $i$'s potential outcome $Y_i(z_j,e_j)$ is never observed under design $p(\textbf{Z})$. Since $\hat \theta$ is a function of the observed potential outcomes, and there are no structural assumptions on the potential outcomes, the expectation of $\hat \theta$ is free from $Y_i(z_j,e_j)$. However, since $Y_i(z_j,e_j)$ is a relevant potential outcome, it appears in the definition of $\theta$. Hence $\hat \theta$ cannot be unbiased.
	
	Similarly, assume that there exists a unit $i$ and a relevant potential outcome $j$ such that $\pi_i(z_j,e_j)  = 1$. This implies that under design $p(\textbf{Z})$ we only observe $Y_i(z_j,e_j)$ for unit $i$. Since causal effects are defined as contrasts between two distinct potential outcomes, there exists a potential outcome $Y_i(z_{j'},e_{j'})$ that appears in the definition of $\theta$ but $\pi_i(z_{j'},e_{j'}) = 0$, which brings us back to the first case.  
	
\end{proof}

\subsection{Proof of Proposition \ref{prop:unbiasedMean}}
\begin{proof}
	Note that $\frac{Y_i^{obs}Z_i}{\sum_i{Z_i}}$ is an unbiased estimator of $\frac{1}{N}\bar{Y}_i(1;\psi)$:
	\begin{align*}
	\E\left[\frac{Y_i^{obs}Z_i}{\sum_iZ_i}\right] &= \E\left[\frac{Y_i(Z_i,E_i)Z_i}{\sum_i Z_i}\right] \\
	&= \E_K \left[ \frac{1}{K} \E_{\psi}\left[ Y_i(Z_i,E_i)Z_i | \sum_i Z_i = K\right]\right] \\
	&=\E_k \left[ \frac{1}{K}\E_{\psi} \left[ \sum_{z_i,e_i} Y_i(z_i,e_i)Z_i I(Z_i=1,E_i=e_i)\right] \right]\\
	&= \E_K \left[ \frac{1}{K} \frac{K}{N} \sum_{e_i} Y_i(1,e_i) \pr{E_i=e_i|Z_i=1}\right] \\ 	 	
	&= \frac{1}{N}\sum_{e_i} Y_i(1,e_i) \psi(E_i=e_i|Z_i=1) \\
	&= \frac{1}{N}\bar{Y}_i(1;\psi)
	\end{align*}
	Similarly, one can show that $\frac{Y_i^{obs}(1-Z_i)}{\sum_i(1-Z_i)}$ is an unbiased estimator of $\frac{1}{N}\bar{Y_i}(0;\psi)$ which completes the proof.
%
	
\end{proof}

\subsection{Proof of Proposition \ref{prop:biasGeneralCase}}
\begin{proof}
	\begin{align*}
	\hat{\beta} &= 
	\frac{\sum_iY_i^{obs}Z_i}{\sum_i{Z_i}}- \frac{\sum_i Y_i^{obs}(1-Z_i)}{\sum_{i}{1-Z_i}}\\
	&=\frac{\sum_iY_i(1,E_i)I(Z_i=1)}{\sum_i{I(Z_i=1)}} - \frac{\sum_i Y_i(0,E_i)I(Z_i=0)}{\sum_{i}{I(Z_i=0)}}\\
	&=\frac{\sum_i\sum_{e_i}Y_i(1,e_i)I(Z_i=1,E_i=e_i)}{\sum_i{I(Z_i=1)}} - \frac{\sum_i\sum_{e_i} Y_i(0,e_i)I(Z_i=0,E_i=e_i)}{\sum_{i}{I(Z_i=0)}}\\
	&=\sum_i\sum_{e_i}\left[ Y_i(1,e_i)\alpha_i(1,e_i) -
	Y_i(0,e_i)\alpha_i(0,e_i) \right] \\
	&= \sum_i \left[Y_i(1,0)\alpha_i(1,0) - Y_i(0,0)\alpha_i(0,0)\right]
	+\sum_i\sum_{e_i\neq 0} \left[ Y_i(1,0)\alpha_i(1,e_i) - Y_i(0,e_i)\alpha_i(0,e_i)\right]
	\end{align*}
	where, 
	$$\alpha_i(z_i,e_i) = \frac{I(Z_i=z_i,E_i=e_i)}{\sum_i{I(Z_i=z_i)}}$$
	Now, the bias is
	\begin{align*}
	b &= E[\hat{\beta}] - \beta_{DE} \\
	&=\sum_i 
	\left(Y_i(1,0)E\left[\alpha_i(1,0) -\frac{1}{n}\right] - Y_i(0,0)E\left[\alpha_i(0,0) + \frac{1}{n}\right]\right)\\
	&+ \sum_i\sum_{e_i\neq 0} 
	\left( Y_i(1,e_i)E[\alpha_i(1,e_i)] - Y_i(0,e_i)E[\alpha_i(0,e_i)]\right)
	\end{align*}
	
\end{proof}

\subsection{Proof of Proposition \ref{prop:expDoMs}}
\begin{proof}
	Note that from the non-parametric decomposition of the Potential outcomes given in Proposition \ref{prop:param}, we have,
	\begin{align*}
	Y_i^{obs} I(Z_i=1,E_i=0) &= A_i(1)I(Z_i=1,E_i=0) \\
	Y_i^{obs} I(Z_i=0,E_i=0) &= A_i(0)I(Z_i=0,E_i=0) \mbox{ and } \\
	Y_i^{obs} I(Z_i=1,E_i=1) &= (A_i(1)+B_i(1) + C_i(1))I(Z_i=1,E_i=1) \\
	\end{align*}
	Substituting these in the definition of $\beta_1$ and $\beta_2$ gives the result.
	
\end{proof}

\subsection{Proof of Theorem \ref{prop:exposureweights}}
The theorem follows from results of Theorem \ref{thm:propensityscores}
\subsection{Proof of Proposition \ref{prop:biasSimpleLinearModelCRD}}
\begin{proof}
	Note that 
	\begin{align*}
	\frac{\sum_i{y_i^{obs}z_i}}{n_t} = \frac{\sum_i\alpha_i z_i}{n_t} + \frac{1}{n_t} \beta \sum_{i}{z_i^2} + \frac{1}{n_t}\gamma \sum_i\sum_j g_{ij}z_j z_i 
	\end{align*}
	Similarly,
	\begin{align*}
	\frac{\sum_i{y_i^{obs}(1-z_i)}}{n_c} = \frac{\sum_i\alpha_i(1-z_i)}{n_c} + \frac{1}{n_c} \beta \sum_{i}{z_i (1-z_i)} + \frac{1}{n_c}\gamma \sum_i\sum_j g_{ij}z_j (1-z_i) 
	\end{align*}
	
	Using the facts $z_i^2 = z_i$, $\sum_{i}{z_i^2} = n_t$, and $z_i(1-z_i)=0$, we get,
	\begin{align*}
	\hat{\beta}_{naive} = \beta_{DE} + \gamma \left( \frac{\sum_i\sum_j g_{ij}z_j z_i}{n_t} - \frac{\sum_i\sum_j g_{ij}z_j (1-z_i)}{n_c}\right) + \frac{\sum_i \alpha_i z_i}{n_t} - \frac{\sum_i \alpha_i (1-z_i)}{n_c} 
	\end{align*}
	Note the following expectations - $E[z_iz_j] = P(z_i=1,z_j=1) = \frac{n_t}{n}\frac{n_t-1}{n-1}$ and $E[(1-z_i)z_j] = P(z_i=0,z_j=1) = \frac{n_t}{n}\frac{n_c}{n-1}$ and that $E[z_i] = \frac{n_t}{n}$. Using these facts and taking expectations, we get,
	\begin{align*}
	E[\hat{\beta}_{naive}] = \beta_{DE} - \gamma \frac{2m}{n(n-1)} 
	\end{align*} 
\end{proof}

\subsection{Proof of Proposition \ref{prop:biasSimpleLinearModelBernoulli}}
\begin{proof}
	Note that 
	\begin{align*}
	\frac{\sum_i{y_i^{obs}z_i}}{\sum_iz_i} = \alpha + \beta \frac{\sum_{i}{z_i^2}}{\sum_i z_i} + \gamma \sum_i\sum_j g_{ij}\frac{z_j z_i}{\sum_i{z_i}} + \sum_i  \frac{ \epsilon_i z_i}{\sum_iz_i}
	\end{align*}
	Similarly,
	\begin{align*}
	\frac{\sum_i{y_i^{obs}(1-z_i)}}{\sum_i (1 - z_i)} = \alpha + \beta \sum_{i}{\frac{z_i (1-z_i)}{\sum_i(1 - z_i)}} + \gamma \sum_i\sum_j g_{ij}\frac{z_j (1-z_i)}{\sum_i(1-z_i)} + \sum_i \frac{\epsilon_i (1-z_i)}{\sum_i(1-z_i)}
	\end{align*}
	
	Using the facts that $z_i^2 = z_i$, and $z_i(1-z_i)=0$, we get,
	\begin{align*}
	\hat{\beta}_{naive} = \beta + \gamma \left( \sum_i\sum_j g_{ij} \left(\frac{z_j z_i}{\sum_i{z_i}} - \frac{z_j (1-z_i)}{\sum_i{(1-z_i)}} \right)\right) + \sum_i \left( \frac{\epsilon_i z_i}{\sum_i z_i} -\frac{\epsilon_i (1-z_i)}{\sum_i(1-z_i)} \right)
	\end{align*}
	Note the following expectations - $E[e_iz_i] = E[e_i]E[z_i] = 0 $ and similarly, $E[e_i(1-z_i)] = 0$ as $\epsilon_i \Perp z_i$. Also note that $g_{ij} = 0$ when $i=j$, hence we need to only focus on $i \neq j$ for the calculation of the remaining expectations.
	Now let $X = \sum_i{z_i}$ and consider,
	\begin{align*}
	E\left[\frac{z_j z_i}{\sum_i{z_i}}\right] =  E\left[ \frac{1}{X} E\left[z_iz_j   \mid X \right] \right]  = E\left[ \frac{1}{X} \frac{X}{n} \frac{X-1}{n-1} \right] = E\left[\frac{X-1}{n(n-1)}\right] = \frac{\frac{np-np^n}{1-(1-p)^n -p^n}-1}{n(n-1)}
	\end{align*}
	In the last two equalities, we have used the fact that $X$ is a restricted binomial distribution with probability of success $p$, and $X \in \{1,\ldots, n-1\}$. Similarly, let $Y = \sum(1-z_i)$ be a restricted binomial with probability of success $1-p$.
	\begin{align*}
	E\left[\frac{z_j (1-z_i)}{\sum_i{(1-z_i)}}\right] &=  E\left[ \frac{1}{Y} E\left[z_j(1-z_j)   \mid Y \right] \right]  = E\left[ \frac{1}{Y} \frac{n-Y}{n} \frac{Y}{n-1} \right] \\
	&= E\left[\frac{n-Y}{n(n-1)}\right]
	= \frac{n - \frac{n(1-p)-n(1-p)^n}{1-(1-p)^n -p^n}}{n(n-1)}
	\end{align*}	
	
	Finally, let $c = \frac{1}{1-(1-p)^n-p^n}$ and note that,
	\begin{align*}
	E[X-1] - E[n-Y] = c(np-np^n)-1 - n + c(n(1-p) - n(1-p)^n) = -1
	\end{align*}
%
%
	The result follows by plugging these expectations in the expression of $E[	\hat{\beta}_{naive}]$ using the fact that $\sum_i\sum_j g_{ij} = 2m$.
	\end{proof}

\subsection{Proof of Proposition \ref{prop:biasSimpleLinearModelCluster}}
	\begin{proof}
		Note that, as before, we have
		\begin{align*}
		\frac{\sum_i{y_i^{obs}z_i}}{\sum_iz_i} = \frac{\sum_i\alpha_i z_i}{\sum_iz_i} + \frac{ \sum_{i}{\beta_i z_i^2}}{\sum_iz_i}  + \gamma \frac{ \sum_i\sum_j z_j z_i}{\sum_iz_i} 
		\end{align*}
		Similarly,
		\begin{align*}
		\frac{\sum_i{y_i^{obs}(1-z_i)}}{\sum_i (1 - z_i)} = \frac{\sum_i\alpha_i(1-z_i)}{\sum_i (1 - z_i)} + \frac{ \sum_{i}{\beta_i z_i (1-z_i)}}{\sum_i (1 - z_i)}  + \gamma \frac{\sum_i\sum_j z_j (1-z_i)}{\sum_i (1 - z_i)}  
		\end{align*}
		
		Using the facts that $z_i^2 = z_i$, and $z_i(1-z_i)=0$, we get,
		\begin{align}
		\hat{\tau} = \frac{\sum_i\beta_i z_i}{\sum_iz_i} +  \gamma \underset{i \neq j}{\sum \sum} 
		\left(\frac{z_j z_i}{\sum_i{z_i}} - \frac{z_j (1-z_i)}{\sum_i{(1-z_i)}} \right) 
		+ \sum_i \left( \frac{\alpha_i z_i}{\sum_i z_i} -\frac{\alpha_i (1-z_i)}{\sum_i(1-z_i)} \right) \label{eq:beforeExP}
		\end{align}
		However, now we have the case that $\sum_iz_i = \sum_{k=1}^K n_k z_k$ and $\sum_i{1-z_i} = \sum_{k=1}^K{n_k (1-z_k)}$.
		Let $\bar{\beta}_k$ be the average of $\beta_i$ for all nodes in cluster $k$. Similarly, we define $\bar{\alpha}_k$. Thus, we get,
		\begin{align}
		\hat{\tau} = \frac{\sum_k \bar{\beta}_k n_k z_k}{\sum_k n_k z_k} +  \gamma  \underset{i \neq j}{\sum \sum}  
		\left(\frac{z_j z_i}{\sum_i{z_i}} - \frac{z_j (1-z_i)}{\sum_i{(1-z_i)}} \right) 
		+ \sum_k \left( \frac{\bar{\alpha}_k n_k z_k}{\sum_k n_k z_k} -\frac{\bar{\alpha}_k n_k (1-z_k)}{\sum_k n_k (1-z_k)} \right) \label{eq:beforeExP}
		\end{align}
		Note the following:
		\begin{align*}
		\underset{i \neq j }{\sum \sum}\left( \frac{z_iz_j}{\sum_i z_i} - \frac{(1-z_i)z_j}{\sum_i (1-z_i)} \right) = -1
		\end{align*}
		This is because,
		\begin{align*}
		\frac{\sum_i \sum_j z_i z_j}{\sum_i z_i} = \frac{\sum_j z_j \sum_i z_i}{\sum_i z_i } =  \sum_i z_i
		\end{align*}
		and,
		\begin{align*}
		\frac{\sum_i \sum_j (1-z_i) z_j}{\sum_i (1 - z_i)} = \frac{\sum_j z_j \sum_i (1-z_i)}{\sum_i (1- z_i) } =  \sum_i z_i.
		\end{align*}
		Hence, we have,
		\begin{align*}
		0 &= \sum_i \sum_j \left( \frac{z_iz_j}{\sum_i z_i} - \frac{(1-z_i)z_j}{\sum_i (1-z_i)} \right) \\ 
		&= 
		\underset{i \neq j }{\sum}\left( \frac{z_iz_j}{\sum_i z_i} - \frac{(1-z_i)z_j}{\sum_i (1-z_i)} \right) + \underset{i = j }{\sum \sum}\left( \frac{z_iz_j}{\sum_i z_i} - \frac{(1-z_i)z_j}{\sum_i (1-z_i)} \right) \\
		&= \underset{i \neq j }{\sum}\left( \frac{z_iz_j}{\sum_i z_i} - \frac{(1-z_i)z_j}{\sum_i (1-z_i)} \right) + \underset{i = j }{\sum \sum}\left( \frac{z_i^2}{\sum_i z_i} - \frac{(1-z_i)z_i}{\sum_i (1-z_i)} \right)\\
		&= \underset{i \neq j }{\sum}\left( \frac{z_iz_j}{\sum_i z_i} - \frac{(1-z_i)z_j}{\sum_i (1-z_i)} \right) + 1
		\end{align*}
		Thus, to evaluate the bias, we need to compute the expectations of the following terms:
		\begin{align*}
		E\left[ \frac{n_k z_k}{\sum_k n_k z_k} \right] \mbox{ and } E\left[ \frac{n_k (1-z_k)}{\sum_k n_k (1-z_k)} \right]
		\end{align*}
		We will use the trivial identity 
		$$E\left[\frac{U}{V}\right] = \frac{1}{E[V]}\left[E[U] - Cov\left(\frac{U}{V},V\right)\right]$$
		Thus, we get,
		\begin{align*}
		E\left[\frac{n_kz_k}{\sum_k{n_kz_k}}\right] &= \frac{1}{\sum_k E\left[n_k z_k\right]} \left[ E[n_k z_k - Cov\left(\frac{n_kz_k}{\sum_k n_k z_k}, n_kz_k\right)\right]\\
		&=\frac{K}{n K_t}\left[\frac{n_kK_t}{K} - n_k^2 Cov\left(\frac{z_k}{\sum_k n_k z_k}, z_k\right) \right]\\
		&=\frac{n_k}{n} - \frac{n_k^2K}{K_t} Cov\left(\frac{z_k}{\sum_k n_k z_k}, z_k\right) \\
		&=\frac{n_k}{n} - \frac{n_k^2K}{K_t} c_k 
		\end{align*}
		Similarly, one can show that
		\begin{align*}
		E\left[\frac{n_k(1-z_k)}{\sum_k{n_k(1-z_k)}}\right]
		&=\frac{n_k}{n} - \frac{n_k^2K}{K_c} Cov\left(\frac{1-z_k}{\sum_k n_k (1-z_k)}, 1-z_k\right) \\
		&=\frac{n_k}{n} - \frac{n_k^2K}{K_c} d_k 
		\end{align*}
		
		Thus, we get
		\begin{align*}
		E[\hat{\tau}] &= 
		\sum_k \bar{\beta}_k \left(\frac{ n_k}{n} - \frac{n_k^2Kc_k}{K_t}\right) -
		\gamma
		+ \sum_k \bar{\alpha}_k \left( \frac{n_k K (d_k - c_k)}{K_t} \right)\\
		&=\bar{\beta} - \gamma - \frac{K}{K_t} \sum_k \bar{\beta}_k n_k^2c_k 
		+ \frac{K}{K_t}\sum_k \bar{\alpha}_k n_k \left( d_k - c_k \right)
		\end{align*}
		
	\end{proof}

\subsection{Proof of Propositions \ref{prop:Bias2by2ModelCRD} and \ref{prop:Bias2by2ModelBernoulli} }
\begin{proof}
	From the definition of $\hat{\beta}_{naive}$ and using the facts $z_i^2 = z_i$ and $z_i(1-z_i) = 0$, we have,
	\begin{align*}
	\frac{\sum_i{y_i^{obs}z_i}}{\sum_iz_i} = \frac{\sum_i\alpha_i z_i}{\sum_iz_i} + \frac{ \sum_{i}{\beta_i z_i}}{\sum_iz_i}  + \frac{ \sum_i (\gamma_{i} + \theta_i) e_i z_i }{\sum_iz_i}
	\end{align*}
	and,
	\begin{align*}
	\frac{\sum_i{y_i^{obs}(1-z_i)}}{\sum_i (1-z_i)} = \frac{\sum_i\alpha_i (1-z_i)}{\sum_i (1-z_i)} + \frac{ \sum_i \gamma_{i}e_i (1-z_i)}{\sum_i (1-z_i)}
	\end{align*}
	Under both the CRD and Bernoulli trial, we get, (e.g. using Proposition \ref{prop:iidratio})
	\begin{align*}
	E[\hat{\beta}] = \bar{\beta} + \sum_i\left((\gamma_i + \theta_i) E\left[\frac{e_iz_i}{\sum_iz_i}\right] - \gamma_i E\left[\frac{e_i(1-z_i)}{\sum_i (1-z_i)}\right] \right)
	\end{align*}
	Under CRD, $\sum_iz_i = n_t$ and $\sum_i(1-z_i) = n_c$. Moreover, $E[z_i] = n_t$ and $E[1-z_i] = n_c$. Let $P(z_i=1,e_i=1) = \pi_i(1,1)$ and $P(z_i=0,e_i=1) = \pi_i(0,1)$, then we have,
		\begin{align*}
		E[\hat{\beta}] &= \beta_1 + \sum_i  \gamma_i \left( \frac{\pi_{i}(1,1)}{n_t} - \frac{\pi_i(1,0)}{n_c} \right) + \sum_i \theta_i \frac{\pi_{i}(1,1)}{n_t} \\
		\mathbb E[\hat{\beta}] &= \beta_1 + \frac{1}{n}\sum_i  \gamma_i \left( \frac{\binom{n_c-1}{d_i} - \binom{n_c}{d_i}}{\binom{n-1}{d_i}} \right) + \sum_i \theta_i \frac{\pi_{i}(1,1)}{n_t}\\
		\mathbb E[\hat{\beta}] &= \beta_1 - \frac{1}{n}\sum_i  \gamma_i \left( \frac{\binom{n_c-1}{d_i-1}}{\binom{n-1}{d_i}} \right) + \frac{1}{n}\sum_i \theta_i \left(1 - \frac{\binom{n_c}{d_i}}{\binom{n-1}{d_i}}\right)
		\end{align*}
	Now let us compute the bias for a Bernoulli trial. We need to compute the following expectations:
	\begin{align*}
	E\left[\frac{e_iz_i}{\sum_iz_i}\right] \mbox{ and } \left[\frac{e_i(1-z_i)}{\sum_i (1-z_i)}\right] 
	\end{align*}
	\begin{align*}
	E\left[\frac{e_i z_i}{\sum_i z_i} \right] &= E\left[ E\left[ \frac{e_i z_i}{\sum_i z_i} \bigg| \sum_i{z_i} = k \right] \right] = E\left[ \frac{1}{\sum_i z_i} P\left(e_i=1,z_i=1\bigg|\sum_i z_i =k \right)\right] \\
	&= 
	E_k \left[ 
	\frac{1}{k} \frac{k}{n}\left[ 1 - \frac{n_c (n_c-1)\ldots (n_c-d_i+1)}{(n-1)(n-2) \ldots (n-d_i)}\right] \right], \mbox{ where } n_c = n-k \\
	&=\frac{1}{n} - E_{k} \left[ \frac{n_c^{(d_i)} }{n^{(d_i)} (n-d_i)} \right] = \frac{1}{n} - E_{k} \left[ \frac{(n-k)^{(d_i)} }{n^{(d_i)} (n-d_i)} \right]\\
	&=\frac{1}{n} - \frac{n^{(d_i)} (1-p)^{d_i}}{n^{(d_i)} (n-d_i)} = \frac{1}{n} - \frac{(1-p)^{d_i}}{n-d_i}
	\end{align*}
	The last equation follows from the easy to show fact that if $X \sim Bin(n,p)$, then $E\left[X^{(r)}\right] = n^{(r)}p^r$, and that $n-k \sim Bin(n,1-p)$.
	Using a similar argument, one can show that 
	$$E\left[\frac{e_i(1-z_i)}{\sum_i (1-z_i)}\right]  = \frac{1}{n} - \frac{(1-p)^{d_i}}{n}$$
	Thus, we get,
	\begin{align*}
	E[\hat{\tau}] &= \bar{\beta} + \sum_i\left((\gamma_i + \theta_i) E\left[\frac{e_iz_i}{\sum_iz_i}\right] - \gamma_i E\left[\frac{e_i(1-z_i)}{\sum_i (1-z_i)}\right] \right)\\
	&= \bar{\beta} + \sum_i\left((\gamma_i + \theta_i) \left[ \frac{1}{n} - \frac{(1-p)^{d_i}}{n-d_i}\right] - \gamma_i \left[ \frac{1}{n} - \frac{(1-p)^{d_i}}{n}\right] \right) \\
	&= \bar{\beta} - \sum_i\left( \frac{d_i \gamma_i (1-p)^{d_i}}{n(n-d_i)}\right)  + \sum_i \theta_i \left[ \frac{1}{n} - \frac{(1-p)^{d_i}}{n}\right] \\
	\end{align*}
	
\end{proof}

\subsection{Proof of Theorem \ref{thm:generalLinear}}
\begin{proof}
	Note that by the consistency assumption, we have,
	$$
	\hat \theta_1 = \sum_{i=1}^n Y_i^{obs}w_i(\textbf{z}) = \sum_{i=1}^n \sum_{z,e} Y_i(z,e)I(z_i=z,e_i=e)w_i(\textbf{z}) 
	$$
	
	\begin{align*}
	\mathbb E [\hat \theta_1 ] 
	{ }= & \sum_{i=1}^n \underset{z,e}{\sum} Y_i(z,e) 
	\left(
	\sum_{\textbf{z} \in \Omega} I(z_i=z,e_i=e) w_i(\textbf{z})p(\textbf{z})
	\right) \\
	&\sum_{i=1}^n \sum_{\textbf{z} \in \Omega}  w_i(\textbf{z}) Y_i(z_1,e_1) I(z_i=z_1,e_i=e_1)p(\textbf{z})  \\
	&+\sum_{i=1}^n \sum_{\textbf{z} \in \Omega}  w_i(\textbf{z}) Y_i(z_0,e_0) I(z_i=z_0,e_i=e_0)p(\textbf{z})  \\
	&+\sum_{i=1}^n \underset{(z,e) \neq (z_1,e_1), (z_0,e_0)}{\sum}\sum_{\textbf{z} \in \Omega}  w_i(\textbf{z}) Y_i(z_1,e_1) I(z_i=z,e_i=e)p(\textbf{z})	\\
	=&\sum_{i=1}^n  \sum_{\textbf{z} \in \Omega_i(z_1,e_1)} w_i(\textbf{z}) Y_i(z_1,e_1) p(\textbf{z}) 
	+ \sum_{i=1}^n \sum_{\textbf{z} \in \Omega_i(z_0,e_0)}  w_i(\textbf{z}) Y_i(z_0,e_0) p(\textbf{z}) \\
	&+ \sum_{i=1}^n \underset{(z,e) \neq (z_1,e_1), (z_0,e_0)}{\sum}\sum_{\textbf{z} \in \Omega_i(z,e)}  w_i(\textbf{z}) Y_i(z_1,e_1) p(\textbf{z})	\\
	=&\sum_{i=1}^n  Y_i(z_1,e_1) \left(\sum_{\textbf{z} \in \Omega_i(z_1,e_1)} w_i(\textbf{z})  p(\textbf{z})\right) 
	+ \sum_{i=1}^n Y_i(z_0,e_0) \left(\sum_{\textbf{z} \in \Omega_i(z_0,e_0)}  w_i(\textbf{z})  p(\textbf{z})\right) \\
	&+ \sum_{i=1}^n \underset{(z,e) \neq (z_1,e_1), (z_0,e_0)}{\sum} Y_i(z,e) \left(\sum_{\textbf{z} \in \Omega_i(z,e)}  w_i(\textbf{z})  p(\textbf{z})\right)	\\
	=& \sum_{i=1}^n Y_i(z_1,e_1) \left(\frac{1}{n}\right) - \sum_{i=1}^n Y_i(z_0,e_0) \left(\frac{1}{n}\right)
	\end{align*}
	where the last line is required for unbiasedness. Since this is an identity in $Y_i(z,e)$, we have
	\begin{align*}
	\forall i=1, \ldots, n, \sum_{\textbf{z} \in \Omega_i(z_1,e_1)} w_i(\textbf{z})  p(\textbf{z}) &= \frac{1}{n} \\
	\forall i=1, \ldots, n, \sum_{\textbf{z} \in \Omega_i(z_0,e_0)} w_i(\textbf{z})  p(\textbf{z}) &= -\frac{1}{n} \\
	\forall i, \forall (z,e) \neq (z_1,e_1), (z_0,e_0), \sum_{\textbf{z} \in \Omega_i(z,e)} w_i(\textbf{z})  p(\textbf{z}) &= 0
	\end{align*} 
	Let us now show that $0 < \pi_i(z_1,e_1) < 1$ is necessary for unbiasedness. Suppose there exists a $j$ such that $\pi_j(z_1,e_1) = \sum_{\textbf{z} \in \Omega_j(z_1,e_1)} p(\textbf{z}) = 0$, then $p(\textbf{z}) = 0$ $\forall$ $\textbf{z} \in \Omega_j(z_1,e_1)$. This means that $E[\hat \theta]$ is free of $Y_j(z_1,e_1)$ irrespective of $w_j(\textbf{z})$, see line 4 of the previous equation, and hence cannot be equal to $\sum_{i=1}^n Y_i(z_1,e_1)$. Similarly, suppose there exists a $j$ such that $\pi_j(z_1,e_1) = \sum_{\textbf{z} \in \Omega_j(z_1,e_1)} p(\textbf{z}) = 1$. This implies that $\Omega_j(z_1,e_1) = \Omega$. Since for fixed $j$, the sets $\Omega_j(z,e)$ are disjoint, we have $p(\textbf{z}) = 0$ for any $\textbf{z} \in \Omega_j(z_0,e_0)$. Hence by the previous argument, $E[\hat \theta_1]$ will be free of $Y_j(z_0,e_0)$ and therefore cannot be unbiased. A similar argument will show the necessity of $0< \pi_i(z_0,e_0)$.
\end{proof}

\subsection{Proof of Theorem \ref{thm:HT}}
\begin{proof}
	Note that $\hat \theta_2$ given by \ref{eq:linearestsmall} is contained in the class of estimators given by $\hat \theta$ \ref{eq:linearestimator}, since $w(\textbf{z}) = w(z,e)$. Using the results from Theorem \ref{thm:generalLinear}, we have $\hat \theta_2$ is unbiased iff for each $i=1, \ldots, n$
	\begin{align*}
	\underset{\textbf{z} \in \tau_i(z_1,e_1) }{\sum} w_i(\textbf{z})p(\textbf{z})  &= \frac{1}{n}\\
	\implies \underset{\textbf{z} \in \tau_i(z_1,e_1) }{\sum} w_i(z,e)p(\textbf{z})  &= \frac{1}{n}\\
	\implies w_i(z_1,e_1) \underset{\textbf{z} \in \tau_i(z_1,e_1) }{\sum} p(\textbf{z})  &= \frac{1}{n}, \text{ } \\
	\implies w_i(z_1,e_1) \pi_i(z_1,e_1)  &= \frac{1}{n}\\
	\implies w_i(z_1,e_1)   &= \frac{1}{n\pi_i(z_1,e_1)}
	\end{align*}
	A similar argument shows that $w_i(z_0,e_0) = \frac{1}{n \pi_i(z_0,e_0)}$ and $w_i(z,e) = 0$ for all $(z,e) \neq (z_1,e_1) and (z_0,e_0)$.
\end{proof}

\subsection{Proof of Theorem \ref{thm:exposureProbSymmetric}}
\begin{proof}
	For a CRD design, we have,
	\begin{align*}
	\alpha_i(1,e_i) 
	&= \mathbb{E} \left[ \frac{I(Z_i=1,E_i=e_i)}{\sum_i{Z_i}} \right] \\
	&= \frac{1}{n_t}\pr{I(Z_i=1,E_i=e_i} \\
	&= \frac{1}{n_t}\frac{n_t}{n} \frac{\binom{n_t-1}{e_i} \binom{n_c}{d_i-e_i}}{\binom{n-1}{d_i}} \text{ if } n_t \geq e_i +1 \text{ and } n_c \geq d_i-e_i, 0 \text{ otherwise} \\
	&=\frac{1}{n} \frac{\binom{n_t-1}{e_i} \binom{n_c}{d_i-e_i}}{\binom{n-1}{d_i}} \text{ if } n_t \geq e_i +1 \text{ and } n_c \geq d_i-e_i, 0 \text{ otherwise}
	\end{align*}
	For a Bernoulli trial, we have,	
	\begin{align*}
	\alpha_i(1,e_i) 
	&= \mathbb{E} \left[\frac{I(Z_i=1,E_i=e_i)}{\sum_i Z_i}\right]  \\
	&= \mathbb{E}_k\left[ \mathbb{E} \left[ \frac{I(Z_i=1,E_i=e_i)}{\sum_i Z_i} \bigg| \sum_i{Z_i} = k \right] \right] \\
	&= \mathbb{E} \left[ \frac{1}{\sum_i Z_i} \mathbb{P}\left(Z_i=1,E_i=e_i\bigg|\sum_i Z_i =k \right)\right] \\
	&= 
	\mathbb{E}_K \left[ 
	\frac{1}{K} \frac{K}{n} \frac{\binom{K-1}{e_i} \binom{n-K}{d_i-e_i} }{\binom{n-1}{d_i}} \right], \mbox{ where } \sum_i{Z_i} = K \\
	&= 
	\frac{1}{n}\mathbb{E}_K \left[ 
	\frac{\binom{K-1}{e_i} \binom{n-K}{d_i-e_i} }{\binom{n-1}{d_i}} \right]
	\end{align*}
	where $K$ is a restricted binomial random variable with support on $\{1,\ldots, N-1\}$ and $\pr{K=k} = \frac{\binom{n}{k}p^k(1-p)^{n-k}}{1 - (1-p)^n - p^n}$.
	A similar proof holds for the other cases.
\end{proof}

\subsection{Proof of Theorem \ref{thm:HTinadmissible}}
To prove Theorem \ref{thm:HTinadmissible}, we first need an intermediate Lemma proved below. This Lemma essentially states that given any unbiased estimator of $\theta$ whose minimum variance is strictly greater than $0$, one can always construct a new estimator that has lower mean squared error than the unbiased estimator. Lemma \ref{lemma:HT} follows from a result of \cite{godambe1965admissibility}.
\begin{lemma}
	\label{lemma:HT}
	Let $\mathbb P$ be any design and let $\hat \theta$ be an unbiased estimator of a generic causal effect $\theta$ under the design $\mathbb P$. Suppose $\min_{\mathbb T} Var_{\mathbb P}(\hat \theta) > 0$ where $\mathbb T$ is the table of science. Then there exists an estimator $\hat \theta_1$ such that $MSE[\hat \theta] < MSE[\hat \theta_1]$ for all $\theta$.
	\end{lemma}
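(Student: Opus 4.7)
The plan is to apply a classical shrinkage construction (cf. \cite{godambe1965admissibility}): given the unbiased estimator $\hat\theta$, define
\[
\hat\theta_1 \;=\; (1-\lambda)\hat\theta + \lambda c
\]
for a fixed constant $c$ and a sufficiently small $\lambda \in (0,1)$, and show that $\hat\theta_1$ strictly dominates $\hat\theta$ in mean squared error on every table of science (so the direction of the inequality in the statement should read $\mathrm{MSE}[\hat\theta_1] < \mathrm{MSE}[\hat\theta]$). Intuitively, $\hat\theta_1$ trades a small bias of order $\lambda$ for a variance reduction of order $\lambda \cdot \Var_{\mathbb P}(\hat\theta)$; as long as $\Var_{\mathbb P}(\hat\theta)$ is bounded away from zero---which is precisely the hypothesis $\min_{\mathbb T}\Var_{\mathbb P}(\hat\theta) > 0$---the variance gain should dominate the squared-bias cost.

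First I would compute the MSE of $\hat\theta_1$ under the design $\mathbb P$. Using $E_{\mathbb P}[\hat\theta] = \theta$, the bias of $\hat\theta_1$ equals $\lambda(c-\theta)$ and its variance equals $(1-\lambda)^2\Var_{\mathbb P}(\hat\theta)$, yielding
\[
\mathrm{MSE}(\hat\theta) - \mathrm{MSE}(\hat\theta_1) \;=\; \lambda\bigl[(2-\lambda)\Var_{\mathbb P}(\hat\theta) - \lambda (c-\theta)^2\bigr].
\]
This difference is strictly positive precisely when
\[
\lambda \;<\; \frac{2\,\Var_{\mathbb P}(\hat\theta)}{\Var_{\mathbb P}(\hat\theta) + (c-\theta)^2},
\]
a condition whose right-hand side depends on the unknown table of science $\mathbb T$ through both $\Var_{\mathbb P}(\hat\theta)$ and $\theta$.

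The main obstacle is thus choosing a single $\lambda$ (and a single $c$) for which the above inequality holds uniformly over every admissible $\mathbb T$. The hypothesis $\delta := \min_{\mathbb T}\Var_{\mathbb P}(\hat\theta) > 0$ controls the denominator from below; what is additionally needed is a uniform upper bound on $(c-\theta)^2$. Under the standard assumption that the potential outcomes (and hence $\theta$) lie in a bounded range $[-M,M]$, choosing $c = 0$ and $\lambda = \delta/(\delta + M^2)$ yields a strict improvement simultaneously at every $\mathbb T$. Crucially, $\hat\theta_1$ depends only on the observable $\hat\theta$ and the chosen constants $\lambda, c$, not on the unknown $\theta$, so it is a genuine estimator. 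Theorem \ref{thm:HTinadmissible} will then follow by applying this lemma to the Horvitz-Thompson estimator under any non-constant design: Definition \ref{def:nonconstant} ensures that the random counts $X_0, X_1$ vary nontrivially, which in turn ensures $\Var_{\mathbb P}(\hat\theta_{HT}) > 0$ uniformly over any non-degenerate table of science, supplying the $\delta > 0$ that the lemma requires.
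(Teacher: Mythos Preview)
Your proposal is correct and matches the paper's approach exactly: the paper also defines $\hat\theta_1 = (1-k)\hat\theta$ (your $c=0$ case), derives the identical MSE comparison, and sets $k = \min\bigl\{1,\ \min_{\mathbb T} 2\Var(\hat\theta)/(\Var(\hat\theta)+\theta^2)\bigr\}$. You are right that the inequality in the lemma statement is reversed, and your explicit appeal to bounded potential outcomes makes rigorous what the paper leaves implicit when it asserts that this minimum $k_0$ is strictly positive.
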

	\begin{proof}
		Let $0 < k \leq 1$ be a constant to be specified later and let $\hat \theta_1 = (1-k) \hat \theta$.
		Then we have
		\begin{align}
		MSE(\hat \theta_1) &= \E \left((1-k)\hat \theta - \theta\right)^2 \nonumber \\
		&= MSE(\hat \theta) + k^2\left(Var(\hat \theta) + \theta^2 \right)- 2k Var(\hat \theta) \label{eq:k}
		\end{align}
		Note that the MSE is a function of the design $\mathbb P$ and the unknown but fixed potential outcomes $\{Y_i(z_i,e_i)\}_{i=1}^n$ given by the entries of Table of science $\mathbb T$. In fact, since $\hat \theta$ is an unbiased estimator, one can show that it is a function of only the relevant potential outcomes, i.e. $\{Y_i(z_1,e_1)\}_{i=1}^n$ and $\{Y_i(z_0,e_0)\}_{i=1}^n$. Now if $k > 0$ and 
		$$ k\left( Var(\hat \theta) + \theta^2\right) < 2 Var(\hat \theta),$$
		for all $\theta$, then $MSE(\hat \theta_1) < MSE(\hat \theta)$.
		We need to show that such a $k$ exists. To see that this is true, let
		\begin{align*}
		k_0 = \underset{\mathbb T}{\min} \frac{2Var(\hat \theta)}{Var(\hat \theta) + \theta^2}
		\end{align*}
		Since $\min_{\mathbb T} Var (\hat \theta) > 0$, we have $k_0 > 0$. Let $ k = \min(k_0,1)$. Hence we have $0 < k \leq 1$.
		
		When $k_0 < 1$, $k= k_0$ and by definition of $k_0$ , $MSE(\hat \theta_1) < MSE(\hat \theta)$.
		
		When $k_0 \geq 1$, $k=1$, and $\hat \theta_1 = 0$. But $k_0 \geq 1$ implies that $2 Var(\hat \theta) > Var(\hat \theta) + \theta^2$ or $Var(\hat \theta) \geq \theta^2$. Using this fact and substituting $k=1$ in equation \ref{eq:k}, one can see that $MSE(\hat \theta_1) < MSE(\hat \theta)$. Note that in such a case, the variance of $\hat \theta$ is so large that a constant estimator is able to beat it. This happens when $Var (\hat \theta) > \theta^2$, making estimation impossible.	
		\end{proof}
		
		\begin{proof}[Proof of Theorem \ref{thm:HTinadmissible}]
			
			To show that the Horvitz-Thompson estimator is inadmissible, from Lemma \ref{lemma:HT}, it suffices to show that the variance of the HT estimator can never be zero for a non-constant design $\mathbb P$.
			Let $X_i = I(Z_i = z_0, E_i=e_0)$ and $Y_i = I(Z_i=z_1,E_i=e_1)$ and $p_i = \mathbb E (X_i)$ and $q_i = \mathbb E (Y_i)$. Let us assume to the contrary that the variance of $\hat \theta_{HT} = 0$ for some $\theta$ for a non-constant design $\mathbb P$.
			The variance of $\hat \theta_{HT}$ is $0$ iff 
			\begin{align}
			\hat \theta_{HT} &= \mathbb E \left(\hat \theta_{HT} \right) = \theta \mbox{ a.s. } \mathbb P \nonumber \\
			\iff 	\sum_{i=1}^n \left(Y_i(z_1,e_1)\frac{X_i}{p_i} - Y_i(z_0,e_0)\frac{Y_i}{q_i} \right) &= \sum_{i=1}^n \left(Y_i(z_1,e_1) - Y_i(z_0,e_0)\right) \mbox{ a.s. } \mathbb P \nonumber\\
			\iff \sum_{i=1}^n \frac{Y_i(z_1,e_1)}{p_i}\left(X_i - p_i\right) &= \sum_{i=1}^n \frac{Y_i(z_0,e_0)}{q_i}\left(Y_i-q_i\right) \mbox{ a.s. } \mathbb P\label{eq:pozero}
			\end{align}
			Since the potential outcomes are fixed, they cannot be functions of random variables. Hence equation \ref{eq:pozero} holds only if either
			\begin{enumerate}
				\item $Y_i(z_1,e_1) = c_1 p_i$, $Y_i(z_0,e_0) = c_2 q_i$ for all $i$ and $\sum_i (X_i - Y_i) = c\sum_i(p_i-q_i)$ almost surely for some constants $c_1, c_2$ and $c$ (or)
				\item $Y_i(z_1,e_1)$ and $Y_i(z_0,e_0)$ are all $0$.
				\end{enumerate}
				Ignoring the trivial solutions of equation \ref{eq:pozero}, the variance of the Horvitz-Thompson estimator is $0$ only if
				$X_0 - r_1 Y_0 = r_2$ almost surely for some constants $r_1$ and $r_2$. Now since $X_0 + Y_0 \leq n$, this implies $Cov(X_0, Y_0) \leq 0$. Hence 
				\begin{align*}
				0 &= Var(X_0 - Y_0) = Var(X_0) + Var(Y_0) - 2Cov(X_0,Y_0)
				\end{align*}
				which is true if and only if $Var(X_0) = 0$ and $Var(Y_0) = 0$. This implies that $X_0$ and $Y_0$ are constant, which contradicts the assumption that $\mathbb P$ is a non-constant design.
\end{proof}

\subsection{Proof of Proposition \ref{prop:VarSimpleLinearModelCRD}}
 \begin{proof}
 	Recall from the previous lemma, that 
 	\begin{align*}
 	\hat{\beta}_{naive} &= \beta + \gamma \left( \frac{\sum_i\sum_j g_{ij}z_j z_i}{n_t} - \frac{\sum_i\sum_j g_{ij}z_j (1-z_i)}{n_c}\right) + \frac{1}{n_t}\sum_i \epsilon_i z_i - \frac{1}{n_c}\sum_i \epsilon_i (1-z_i)\\
 	&= \beta + \gamma \left( \frac{n}{n_cn_t}\sum_i\sum_j g_{ij}z_j z_i - \frac{1}{n_c}\sum_i\sum_j g_{ij}z_i \right) + \sum_i \epsilon_i \frac{nz_i - n_t}{n_t n_c}	
 	\end{align*}
 	
 	Let $t_1 = 	\gamma \left( \frac{n}{n_cn_t}\sum_i\sum_j g_{ij}z_j z_i - \frac{1}{n_c}\sum_i\sum_j g_{ij}z_i \right)$ and 
 	$t_2 = \sum_i \epsilon_i \frac{nz_i - n_t}{n_t n_c} $ and 
 	We will compute the variance of each of these terms separately. Note that the covariance between these two terms is $0$, since $e_i \perp z_i$ and $E[e_i]=0$, as seen below.
 	\begin{align*}
 	&Cov \left(\frac{n}{n_t n_c}z_iz_j -\frac{z_i}{n_c},\epsilon_i(nz_i-n_t)\right) \\
 	&= E\left[\left(\frac{n}{n_t n_c}z_iz_j -\frac{z_i}{n_c} \right)\epsilon_i(nz_i-n_t)\right] - E\left[\frac{n}{n_t n_c}z_iz_j -\frac{z_i}{n_c}\right] E[\epsilon_i(nz_i-n_t)] = 0
 	\end{align*}
 	The variance of $t_2$ is 
 	\begin{align*}
 	Var(t_2) &= Var\left(\sum_i \epsilon_i \frac{nz_i - n_t}{n_t n_c}\right)\\
 	&= \sum_i\sum_j Cov\left( \epsilon_i \frac{nz_i - n_t}{n_t n_c}, \epsilon_j \frac{nz_j - n_t}{n_t n_c}\right)\\
 	&= n E\left[e_i^2 \left(\frac{nz_i - n_t}{n_t n_c}\right)^2 \right]
 	&(\mbox{Since when $i \neq j$, the covariance is 0})\\
 	&=\frac{n\sigma^2}{n_t^2 n_c^2}\left[ (n-n_t)^2\frac{n_t}{n} + n_t^2 \frac{n_c}{n}\right]\\
 	&= \sigma^2\left(\frac{1}{n_t} + \frac{1}{n_c}\right)
 	\end{align*}
 	
 	To calculate the variance of $t_1$, note that
 	\begin{align*}
 	\frac{1}{\gamma^2}Var(t_1) &=  \frac{n^2}{n_t^2n_c^2}Var \left(\sum_i\sum_j g_{ij}z_jz_i \right) + \frac{1}{n_c^2}Var \left(\sum_i\sum_j g_{ij}z_i \right)  \\
 	&- 2\frac{n}{n_c^2n_t} Cov\left(\sum_i\sum_j g_{ij}z_jz_i,\sum_i\sum_j g_{ij}z_i \right)
 	\end{align*}
 	
 	Each of these variances are calculated in the propositions below. Using these propositions, we get
 	
 	\begin{align*}
 	&\frac{1}{\gamma^2}Var(t_1) \\
 	&=  \frac{4n}{n_t n_c }\frac{(n_t-1)}{(n-1)(n-2)(n-3)}\left( m(n_c-1) + 2m^2\frac{(3n + 3n_t -2n n_t -3)}{n(n-1)} + (n_t-2)\sum_i d_i^2\right)\\
 	& + \frac{1}{n_c}\frac{n_t}{n^2} \left(\sum_i{d_i^2} - \frac{\sum_{i\neq j}{d_id_j}}{n-1}\right)  \\
 	&- 2\frac{1}{n_c}\frac{2 (n_t-1)}{(n-1)(n-2)} \left(\sum_i d_i^2 - \frac{4m^2}{n}\right)
 	\end{align*}

 	
 	
 \end{proof}
 
 \begin{proposition}
 	\begin{align*}
 	\frac{1}{4} Var\left( \sum_i\sum_j g_{ij}z_jz_i \right) = &  \frac{n_tn_c (n_t-1)}{n(n-1)(n-2)(n-3)}\left( m(n_c-1) + 2m^2\frac{(3n + 3n_t -2n n_t -3)}{n(n-1)} + (n_t-2)\sum_i d_i^2\right)
 	\end{align*}
 	
 \end{proposition}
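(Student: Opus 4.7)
My plan is to expand the variance by bilinearity into a weighted sum of covariances $\Cov(z_iz_j, z_kz_l)$, partition the index quadruples $(i,j,k,l)$ by the overlap pattern between $\{i,j\}$ and $\{k,l\}$, and convert each resulting combinatorial sum into the graph invariants $m$ and $S := \sum_v d_v^2$. Writing
\begin{align*}
\Var\Bigl(\sum_{i,j} g_{ij}z_iz_j\Bigr) = \sum_{i,j,k,l} g_{ij}g_{kl}\bigl[E(z_iz_jz_kz_l) - E(z_iz_j)E(z_kz_l)\bigr],
\end{align*}
I will use the CRD moment formula $E(z_{a_1}\cdots z_{a_r}) = p_r := \frac{n_t(n_t-1)\cdots(n_t-r+1)}{n(n-1)\cdots(n-r+1)}$ for $r$ distinct indices, together with $z_i^2 = z_i$ to reduce repeated indices, and the implicit constraints $i\neq j$, $k\neq l$ coming from $g_{ii}=0$.

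Let $s = |\{i,j\}\cap\{k,l\}|\in\{0,1,2\}$. The case $s=2$ forces $\{k,l\}=\{i,j\}$ with two orderings of $(k,l)$ per edge, contributing $4m$ quadruples, each with joint expectation $p_2$. The case $s=1$ splits into four sub-configurations according to which element of $(k,l)$ coincides with which element of $(i,j)$; each yields $\sum_v d_v(d_v-1) = S-2m$ valid quadruples (pick the common vertex and two distinct neighbors), giving $4(S-2m)$ quadruples in total with expectation $p_3$. The all-distinct case $s=0$ is then obtained by subtraction from the total mass $\sum_{i,j,k,l}g_{ij}g_{kl} = (2m)^2 = 4m^2$, yielding $4m^2 + 4m - 4S$ quadruples with expectation $p_4$. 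Collecting contributions produces
\begin{align*}
\tfrac{1}{4}\Var\Bigl(\sum_{i,j} g_{ij}z_iz_j\Bigr) = (m^2+m-S)(p_4-p_2^2) + (S-2m)(p_3-p_2^2) + m(p_2-p_2^2).
\end{align*}

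The final step is to match this with the claimed closed form by substituting the hypergeometric moments. A quick check on the coefficient of $S$ already gives $p_3 - p_4 = \frac{n_tn_c(n_t-1)(n_t-2)}{n(n-1)(n-2)(n-3)}$, which is exactly the target prefactor times $(n_t-2)$; analogously, the coefficient of $m$ becomes $p_4 - 2p_3 + p_2 = \frac{n_tn_c(n_t-1)(n_c-1)}{n(n-1)(n-2)(n-3)}$ after simplifying $(n_t-2)(n_t-3) - 2(n_t-2)(n-3) + (n-2)(n-3) = n_c(n_c-1)$. The main obstacle is the coefficient of $m^2$, namely $p_4 - p_2^2$: this requires putting $\frac{(n_t-2)(n_t-3)}{(n-2)(n-3)}$ and $\frac{n_t(n_t-1)}{n(n-1)}$ over the common denominator $n(n-1)(n-2)(n-3)$ and simplifying the cross-term numerator to $\frac{2(3n+3n_t-2nn_t-3)}{n(n-1)}$ after factoring out $\frac{n_tn_c(n_t-1)}{n(n-1)(n-2)(n-3)}$. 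There is no conceptual difficulty here, just careful algebraic bookkeeping, which I would sanity-check against a small test graph such as $P_4$ before committing to the full simplification.
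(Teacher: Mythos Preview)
Your proposal is correct and follows essentially the same approach as the paper: both expand the variance as $\sum_{i,j,k,l} g_{ij}g_{kl}\Cov(z_iz_j,z_kz_l)$, partition by the overlap $|\{i,j\}\cap\{k,l\}|\in\{0,1,2\}$, obtain the identical counts $4m$, $4(S-2m)$, and $4(m^2+m-S)$, and arrive at the same expression $(m^2+m-S)(p_4-p_2^2)+(S-2m)(p_3-p_2^2)+m(p_2-p_2^2)$ before the final algebraic simplification. The only cosmetic difference is that you obtain the $s=0$ count by subtraction from $4m^2$ while the paper computes it directly, but this is immaterial.
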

 \begin{proof}
 	\begin{align*}
 	Var\left( \sum_i\sum_j g_{ij}z_jz_i \right) &= \sum_i\sum_j \sum_k \sum_l g_{ij} g_{kl} Cov(z_iz_j, z_k z_l)
 	\end{align*}
 	
 	We will consider several cases. To keep notation simple, let $(x)_n = (x)(x-1)(x-2)\ldots(x-(n-1))$ be the falling factorial. 
 	\paragraph{Case 1} $k=i, l=j$ and $k=j, l=i$
 	\begin{align*}
 	2\sum_{ij} g_{ij}^2 Var(z_iz_j) = \left(\frac{n_t-1}{n-1}\frac{n_t}{n} - \left(\frac{n_t}{n}\right)^2\right) \sum_{ij}g_{ij}  = 4m \left(\frac{n_t-1}{n-1}\frac{n_t}{n} - \left(\frac{(n_t)_2}{(n)_2}\right)^2\right)
 	\end{align*}
 	\paragraph{Case 2} $k \neq (i,j), l \neq (i,j), k \neq l$
 	\begin{align*}
 	Cov(z_iz_j, z_kz_l) = E[z_iz_jz_kz_l] - E[z_iz_j]E[z_kz_l]=\frac{(n_t)_4}{(n)_4} - \left(\frac{(n_t)_2}{(n)_2}\right)^2 \forall k\neq i, l \neq j, k \neq l
 	\end{align*}
 	
 	\begin{align*}
 	\sum_{i}\sum_{j}\sum_{k\neq i,j}\sum_{l\neq i,j}g_{ij}g_{kl} = \sum_{i}\sum_{j}g_{ij}(2m - 2d_i - 2d_j + 2g_{ij}) = 4(m^2 + m - \sum_i{d_i^2} ) 
 	\end{align*}
 	\paragraph{Case 3} $k=i, l \neq j$, $k=j, l \neq i$, $l=i, k \neq j$ and $l=j, k \neq i$
 	Let $\mathcal{K}_{ij}$ be the index set of tuples $(k,l)$ satisfying the conditions mentioned above, for a fixed $(i,j)$. Then for any $(k,l) \in \mathcal{K}_{ij}$, we have,
 	\begin{align*}
 	Cov(z_iz_j, z_kz_l) = E[z_i^2z_j]-E[z_iz_j]E[z_iz_l] = \frac{(n_t)_3}{(n)_3} - \left(\frac{(n_t)_2}{(n)_2}\right)^2
 	\end{align*}
 	\begin{align*}
 	\sum_{i}\sum_{j}\sum_{k \in \mathcal{K}{ij}} g_{ij}g_{kl} = \sum_{i}\sum_{j}g_{ij}(2d_i + 2d_j -4g_{ij}) = 4\left(\sum_i{d_i^2} - 2m\right) 
 	\end{align*}
 	Combining these three cases, we get
 	\begin{align*}
 	\frac{1}{4}Var\left( \sum_i\sum_j g_{ij}z_jz_i \right) = 
 	&  m \left( \frac{(n_t)_2}{(n)_2} - \left(\frac{(n_t)_2}{(n)_2}\right)^2\right) + \\
 	& \left(m^2 + m - \sum_i{d_i^2} \right) \left( \frac{(n_t)_4}{(n)_4} - \left(\frac{(n_t)_2}{(n)_2}\right)^2 \right) + \\
 	& \left(\sum_i{d_i^2} - 2m\right)  \left(\frac{(n_t)_3}{(n)_3} - \left(\frac{(n_t)_2}{(n)_2}\right)^2\right)
 	\end{align*}
 	
 	Collecting each term and simplifying, we get
 	\begin{align*}
 	& \frac{1}{4}Var\left( \sum_i\sum_j g_{ij}z_jz_i \right) \\
 	&=  \frac{n_t n_c (n_t-1)(n_c-1)}{n(n-1)(n-2)(n-3)} m  + 
 	\frac{2n_t n_c (n_t-1)}{n^2 (n-1)^2 (n-2)(n-3)}(3n + 3n_t -2n n_t -3) m^2 \\
 	& + \frac{n_t n_c (n_t-1) (n_t-2)}{n(n-1)(n-2)(n-3)} \sum_{i}{d_i^2}\\
 	&= \frac{n_tn_c (n_t-1)}{n(n-1)(n-2)(n-3)}\left( m(n_c-1) + 2m^2\frac{(3n + 3n_t -2n n_t -3)}{n(n-1)} + (n_t-2)\sum_i d_i^2\right)
 	\end{align*}
 	
 \end{proof}
 
 \begin{proposition}
 	$$Var\left(\sum_i\sum_j g_{ij} z_j\right) = \frac{n_tn_c}{n^2} \left(\sum_i{d_i^2} - \frac{\sum_{i\neq j}{d_id_j}}{n-1}\right)
 	$$
 \end{proposition}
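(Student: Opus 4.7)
The plan is to reduce the double sum to a single sum, then apply the standard moments of the indicator vector under the Completely Randomized Design. The key identity is that for the symmetric graph $G$, $\sum_i g_{ij} = d_j$, so
\[
\sum_i \sum_j g_{ij} z_j \;=\; \sum_j d_j z_j.
\]
Thus the quantity of interest is the variance of a linear combination $\sum_j d_j z_j$ of the treatment indicators, with coefficients equal to degrees.

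Next, I would invoke the well-known variance and covariance of the $z_j$'s under CRD with $n_t$ treated and $n_c = n - n_t$ control units. Since $z_j \sim \mathrm{Bernoulli}(n_t/n)$ marginally and the sampling is without replacement,
\[
\Var(z_j) \;=\; \frac{n_t n_c}{n^2}, \qquad \Cov(z_i, z_j) \;=\; -\frac{n_t n_c}{n^2(n-1)} \quad \text{for } i \neq j.
\]
Both follow from $\sum_j z_j = n_t$ being deterministic (the negative covariance is forced by $\Var(\sum_j z_j) = 0$) together with the exchangeability of $z_1, \dots, z_n$.

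Finally, I would expand the variance of the linear combination:
\[
\Var\!\left(\sum_j d_j z_j\right) \;=\; \sum_j d_j^2 \Var(z_j) + \sum_{i \neq j} d_i d_j \Cov(z_i, z_j),
\]
substitute the two formulas above, and factor out $n_t n_c / n^2$ to obtain
\[
\frac{n_t n_c}{n^2}\!\left(\sum_j d_j^2 - \frac{\sum_{i\neq j} d_i d_j}{n-1}\right),
\]
which is exactly the claimed expression. There is no real obstacle here; the only thing to be careful about is the index collapsing at the start (ensuring $g_{ii}=0$ and using symmetry so the row sums give degrees), and to remember that the $i=j$ term belongs in the variance sum rather than the covariance sum when expanding. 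Everything else is routine algebra, with no need for the elaborate four-case expansion that the companion proposition on $\Var(\sum_{ij} g_{ij} z_i z_j)$ required, precisely because here the summand is linear rather than quadratic in $z$.
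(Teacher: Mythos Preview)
Your proposal is correct and matches the paper's own proof essentially line for line: the paper also collapses the double sum to $\sum_i d_i z_i$, expands the variance as $\Var(z_1)\sum_i d_i^2 + \Cov(z_1,z_2)\sum_{i\neq j} d_i d_j$ by exchangeability, and substitutes the CRD moments. The only cosmetic difference is that the paper leaves $\Var(z_1)$ and $\Cov(z_1,z_2)$ in the form $\tfrac{n_t}{n}-\big(\tfrac{n_t}{n}\big)^2$ and $\tfrac{(n_t)_2}{(n)_2}-\big(\tfrac{n_t}{n}\big)^2$ rather than simplifying to $\tfrac{n_t n_c}{n^2}$ and $-\tfrac{n_t n_c}{n^2(n-1)}$ as you do.
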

 \begin{proof}
 	\begin{align*}
 	Var\left(\sum_i \sum_j g_{ij} z_i \right) &= Var\left( \sum_i d_i z_i\right) = Var(z_1)\sum_i{d_i^2} + Cov(z_1,z_2)\sum_{i\neq j} d_i d_j\\
 	&= \left(\frac{n_t}{n} - \left(\frac{n_t}{n}\right)^2 \right)\sum_i{d_i^2} + \left(\frac{(n_t)_2}{(n)_2} - \left(\frac{n_t}{n}\right)^2 \right) \sum_{i\neq j} d_i d_j
 	\end{align*}
 \end{proof}
 
 \begin{proposition}
 	$$Cov\left(\sum_i\sum_j g_{ij} z_jz_i,\sum_i\sum_j g_{ij} z_j \right) = \frac{2n_t n_c (n_t-1)}{n(n-1)(n-2)} \left(\sum_i d_i^2 - \frac{4m^2}{n}\right)$$
 \end{proposition}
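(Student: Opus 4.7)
The plan is to compute the covariance directly as $E[AB] - E[A]E[B]$, where $A = \sum_{i,j} g_{ij} z_i z_j$ and $B = \sum_{i,j} g_{ij} z_j$. Using symmetry of $G$, the second sum collapses to $B = \sum_j d_j z_j$, which is the form that will make the combinatorics tractable.

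First I would dispose of $E[A]E[B]$: under CRD we have $E[z_i] = n_t/n$ and $E[z_i z_j] = n_t(n_t-1)/(n(n-1))$ for $i \neq j$, so (as already computed in the preceding proposition for $\mathrm{Var}(A)$) $E[A] = 2m \cdot \frac{n_t(n_t-1)}{n(n-1)}$ and $E[B] = 2m n_t/n$, giving $E[A]E[B] = \frac{4m^2 n_t^2(n_t-1)}{n^2(n-1)}$.

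The real work is in $E[AB] = \sum_{i,j,k} g_{ij} d_k\, E[z_i z_j z_k]$. I would partition the triples $(i,j,k)$ by how $k$ overlaps with $\{i,j\}$, using $g_{ii}=0$ to kill the $i=j$ diagonal. Two cases survive: (i) $k \in \{i,j\}$, where $z_k^2 = z_k$ reduces $z_iz_jz_k$ to $z_iz_j$ and, invoking $\sum_j g_{ij} = d_i$ twice, contributes $2 \cdot \frac{n_t(n_t-1)}{n(n-1)} \sum_i d_i^2$; (ii) $i,j,k$ all distinct, where the third-order hypergeometric moment is $\frac{n_t(n_t-1)(n_t-2)}{n(n-1)(n-2)}$ and the combinatorial factor simplifies via $\sum_{i\neq j} g_{ij}(2m - d_i - d_j) = 4m^2 - 2\sum_i d_i^2$.

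Assembling the three pieces, I would factor out $\frac{n_t(n_t-1)}{n(n-1)}$ as a common prefactor and simplify the bracket using the identities $1 - \frac{n_t-2}{n-2} = \frac{n_c}{n-2}$ and $\frac{n_t-2}{n-2} - \frac{n_t}{n} = -\frac{2n_c}{n(n-2)}$. The coefficient of $\sum_i d_i^2$ and the coefficient of $4m^2$ then share a common factor $\frac{2n_c}{n-2}$, and the expression collapses to the claimed $\frac{2n_t n_c (n_t-1)}{n(n-1)(n-2)}\bigl(\sum_i d_i^2 - \frac{4m^2}{n}\bigr)$. The main obstacle is purely bookkeeping --- tracking which overlap patterns among $(i,j,k)$ contribute and simplifying the rational function of $n_t,n_c,n$ without losing factors of $(n-2)$ or $n$; no step requires anything beyond the elementary hypergeometric moments already used in the preceding two propositions.
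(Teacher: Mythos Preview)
Your proposal is correct and follows essentially the same approach as the paper: both rewrite the second sum as $\sum_k d_k z_k$, expand into $\sum_{i,j,k} g_{ij} d_k$ times a moment, split on whether $k\in\{i,j\}$ or $k\notin\{i,j\}$, and use the identity $\sum_{i\neq j} g_{ij}(2m-d_i-d_j)=4m^2-2\sum_i d_i^2$. The only cosmetic difference is that the paper subtracts $E[z_iz_j]E[z_k]$ inside each case (i.e., works with $\Cov(z_iz_j,z_k)$ directly), whereas you compute $E[AB]$ in full and subtract $E[A]E[B]$ at the end; the algebra and the simplifying identities are otherwise identical.
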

 \begin{proof}
 	\begin{align*}
 	Cov\left(\sum_i\sum_j g_{ij} z_jz_i,\sum_i\sum_j g_{ij} z_j\right)  &= \sum_i\sum_j \sum_k \sum_l g_{ij}g_{kl} Cov(z_iz_j,z_k) = \sum_i\sum_j \sum_k  g_{ij}d_{k} Cov(z_iz_j,z_k)  
 	\end{align*}
 	Consider again, 3 cases:
 	\paragraph{Case 1} $k=i$ and $k=j$
 	\begin{align*}
 	\sum_i\sum_j \sum_k  g_{ij}d_{k} Cov(z_iz_j,z_k) &= \sum_{i}\sum_{j} g_{ij}(d_i Cov(z_iz_j,z_i) + d_jCov(z_iz_j,z_j))\\
 	&= \left(\frac{(n_t)_2}{(n)_2} - \frac{(n_t)_2}{(n)_2}\frac{n_t}{n}\right)  2\sum_i d_i^2
 	\end{align*}
 	\paragraph{Case 2} $k\neq (i,j)$
 	\begin{align*}
 	\sum_i\sum_j \sum_k  g_{ij}d_{k} Cov(z_iz_j,z_k) 
 	&= \left(\frac{(n_t)_3}{(n)_3} - \frac{(n_t)_2}{(n)_2}\frac{n_t}{n}\right)  \sum_i \sum_j g_{ij}\sum_{k \neq (i,j)} d_k\\
 	&= \left(\frac{(n_t)_3}{(n)_3} - \frac{(n_t)_2}{(n)_2}\frac{n_t}{n}\right)  \sum_i \sum_j g_{ij}(2m - d_i - d_j)\\
 	&= \left(\frac{(n_t)_3}{(n)_3} - \frac{(n_t)_2}{(n)_2}\frac{n_t}{n}\right)  (4m^2-2\sum_i d_i^2)
 	\end{align*}
 	
 	Adding these two terms gives simplifying gives the desired result.
 	\begin{align*}
 	Cov\left(\sum_i\sum_j g_{ij} z_jz_i,\sum_i\sum_j g_{ij} z_j \right) &= 
 	\left( \frac{(n_t)_3}{(n)_3} -  \frac{(n_t)_2}{(n)_2} \frac{n_t}{n} \right)4m^2  + \left( \frac{(n_t)_2}{(n)_2} - \frac{(n_t)_3}{(n)_3}\right) 2\sum_i{d_i^2}\\
 	& = \frac{2n_t n_c (n_t-1)}{n(n-1)(n-2)} \left(\sum_i d_i^2 - \frac{4m^2}{n}\right)\\
 	& =\frac{-2n_t(n_t-1) n_c }{n^2(n-1)(n-2)}4m^2 + 2\frac{n_tn_c (n_t-1)}{n(n-1)(n-2)}\sum_{i}{d_i^2}\\
 	\end{align*}
 \end{proof}

\end{document}